\newtheorem{theorem}{\textbf{Theorem}}
\newtheorem{definition}[theorem]{\textbf{Definition}}
\newtheorem{example}[theorem]{Example}
\newtheorem{lemma}[theorem]{\textbf{Lemma}}
\newenvironment{proof}[1][Proof]{\textbf{#1.} }{\ \rule{0.5em}{0.5em}}
\newtheorem{fact}[theorem]{Fact}
 \newtheorem{proposition}[theorem]{\textbf{Proposition}}
\newcommand \bool {\mathscr{B}ool}
\newcommand{\reduce}{\;\; \rightarrow \;\;}
\newcommand {\E}   {\epsilon}
\newcommand{\Var}{{\cal V}ar}
\newcommand{\PPos}{{\cal P}os}
\newcommand{\PosSet}{\mathbb{N}^{\star}_{\epsilon}}
\newcommand{\PosSetVar}{\widehat{\mathbb{N}}^{\star}_{\epsilon}}
\newcommand{\Dom}     {{\mathsf {Dom}}}
\newcommand \fleche[2]{#1\rightarrow #2}
\newcommand \match[2]{#1 \ll #2}
\newcommand \ucomment[1]{}
\newcommand \set[1]{\{{#1}\}}
\newcommand \citep[1]{\cite{#1}}
\newcommand \LongVersion[1]{}
\newcommand \tif {\text{ if }}
\newcommand \tiff {\text{ iff }}
\newcommand \tthen  {\text{ then }}
\newcommand \totherwise {\textrm{ otherwise}}
\newcommand \tand {\textrm{ and }}
\newcommand \tor {\text{ or }}
\newcommand \fail {\mathbb{F}}
\newcommand{\sembrackk}[1]{[\![#1]\!]}
\newcommand \Eu[1]{\EuScript{#1}}
\newcommand{\var}[1]{\mathcal{V}ar\left({#1}\right)}
\newcommand \eval {\mathtt{eval}\xspace}
\newcommand \mbf[1] {\boldsymbol{#1}}
\newcommand \bigand {\bigwedge}
\newcommand \bigor {\bigvee}
\newcommand \true {\mathtt{True}}
\newcommand \false {\mathtt{False}}
\newcommand{\gvert}{\;\;|\;\;}
\newcommand \emptylist {\varnothing}
\newcommand {\TDDD}  {\mathtt{TopDown}}
\newcommand{\mycal}[1]{\mathcal{#1}}
\newcommand \uberEq[1]{\stackrel{#1}{=}}
\newcommand \twhere {\text{ where }}
\newcommand{\comb}{\curlyvee}
\newcommand{\combb}{\curlywedge}
\newcommand{\pair}    [2] {\langle{#1}\,|\,{#2}\rangle}
\newcommand{\bigpair} [2] {\big\langle{#1}\,|\,{#2}\big\rangle}
\newcommand{\Bigpair} [2] {\Big\langle{#1}\;|\;{#2}\Big\rangle}
\newcommand{\ceSet}{\mathcal{C}}
\newcommand{\eceSet}{\mathcal{E}}
\newcommand{\ceSetCan}{\mathcal{C}^{o}}
\newcommand{\fixset}{\mycal{Z}}
\newcommand \ce  {\textsf{CE}-strategy\xspace}
\newcommand \ces {\textsf{CE}-strategies\xspace}
\newcommand \varOf[1] {\widehat{#1}}
\newcommand \ev[2] {\mycal{V}({#1},{#2})}
\newcommand \evf[3]{\mycal{V}_{#1}({#2},{#3})}
\newcommand \notmodels {\not \models} 
\newcommand \n[1] {\substack{#1}}
\newcommand \restr[2]{{#1}_{|#2}}
\newcommand \funset[1]{\mathfrak{F}(#1)}
\newcommand \mycdot {\hspace{-0.06cm} \cdot \hspace{-0.06cm}}
\begin{document}

\title{Unification and   combination of   iterative insertion strategies with rudimentary traversals and failure 
\thanks{This work was supported by LABEX ACTION  ANR-11-LABX-0001-01.}
}

\author[1]{Walid Belkhir}
\author[1]{Nicolas Ratier}
\author[1]{Duy Duc Nguyen}
\author[1]{Michel Lenczner}
\affil[1]{FEMTO-ST, Time and Frequency Department, University of Bourgogne-Franche-Comt\'{e},  25000, Besan\c{c}on, France}


\maketitle

\begin{abstract}
We introduce a new class of extensions of terms that consists in
navigation strategies and insertion of contexts. We introduce an
operation of combination on this class which is associative, admits
a neutral element and so that each extension is idempotent. The
class of extension is also shown to be closed by combination, with a
constructive proof. This new framework is general and independent
of any application semantics. However it has been introduced for
the kernel of a software tool which aims at aiding derivation of
multiscale partial differential equation models.
\end{abstract}

\textbf{Keywords.} \emph{Rewriting strategies, adding contexts, unification, combination, fixed-point.}

\section{Introduction}

This article presents a new computation framework based on
reusability for the development of complex models described by
trees. This method is based on two operations. The operation of
extension transforms a reference model in a more complex model by
adding or \emph{embedding} sub-trees, and the combination assembles
several extensions to produce one that has all the characteristics
of those used for its generation. At this stage, the process is
purely operative and does not include any aspect of model semantics.
The
concepts of combination of two extensions is well illustrated with the term $%
\partial _{x}v(x)$ that plays the role of the reference model, with an
extension that adds an index $j$ on the variable $x$ of derivation,
and with an extension that adds an index $i$ on the derivated
function $v$.\ Applying these two extensions to the reference term
yields the terms $\partial _{x}v_{i}(x)$ and $\partial
_{x_{j}}v(x)$. The combination of these two extensions applied to
the reference term might yield $\partial _{x_{j}}v_{i}(x)$.

The concept of extension, also called refinement, is developed in
different contexts, for example in
\cite{Gorrieri20011047-action-refinement} the refinement is done by
replacement of components with more complex components. Combination
principles are present in different areas of application, they
involve different techniques but follow the same key idea.
For instance, the works in combination of logics \cite%
{Ghilardi03algebraicand,Combining:Logics:13}, algorithms,
verification
methods \cite{TAP:2015}, and decision procedures \cite%
{Combining:decision:procedures:MannaZ02} share a common principle of
incremental design of complex systems by integration of simple and
heterogeneous subsystems.

The integration of the two concepts of extension and combination
seems to have not been addressed in the literature. To make it
simple to operate and effective, we have adopted the simplest
possible principles. Reformulating the above description in terms of
trees, an extension applied to a reference tree is an operation of
context insertion at different positions. We call it
a \emph{position based strategy for Context Embedding} or shortly a \emph{%
position-based {\textsf{CE}-strate}gy}. A combination of several
extensions therefore consists of all of their contexts and insertion
positions. Obviously if two contexts have to be inserted at the same
place they are first assembled one above the other before insertion
excepted if they are identical. In the latter case, the context is
inserted one time only so that the extensions are idempotent for the
operation of combination. With this definition, the combination of
two position-based {\textsf{CE}-strate}gies is another one so that
this set of extensions is closed by combination. Note that unlike
these kind of extensions, extensions comprising substitutions cannot
be combined. The principle of {\textsf{CE}-strate}gy has been
developed for a software tool that does automatic derivation of
multiscale models based on partial differential equation and that
uses asymptotic
methods.\ The first target applications are in micro and nanotechnology \cite%
{YanBel2013,belkhir2014symbolic,belkhir:SYNASC:15}.

The drawback of the principle of extensions at positions is its lack
of robustness with respect to changes in the reference tree. Indeed,
any of its change requires another determination of the insertion
positions. To add flexibility and robustness, the strategy of
insertion at some positions is completed by strategies of navigation
in trees using pattern matching. This leads to the broader concept
of extensions called \emph{strategy for Context Embedding} or
\emph{{\textsf{CE}-strate}gy} for shortness. These class of
extensions can be expressed with a language of high-level strategies \cite{Cirstea:Rew:Staretgies:03,Terese03}. To perform the combination of two {\textsf{%
CE}-strate}gies in view of its application to a particular reference
tree,
we starts by detecting the positions of the context insertion of the {%
\textsf{CE}-strate}gies when they are applied to the tree. This
allows to build the equivalent {\textsf{CE}-strate}gies based on
positions and then to achieve the combination without further
difficulty.

It is natural to ask whether the step of replacement of strategies
by positions can be avoided, i.e. if it is possible to determine
formulas of combination for {\textsf{CE}-strate}gies that are
expressed as high-level strategies. Of course, the combination
formulas should be theoretically validated by comparison to the
principle of combination based on positions.
Thus, combinations formulas may be set as definitions, but their \emph{%
correctness} has be proved. To this end, a preliminary step is to
establish
calculation formulas of positions associated with any {\textsf{CE}-strate}%
gies applied to any reference tree.

In our work, we found that the combination of extensions based on
high-level strategies such as \texttt{BottomUp} or \texttt{TopDown}
can not be expressed with high-level strategies. We thus understood
that more rudimentary\textbf{\ }strategies are needed, especially
operators of jumping
and iteration with fixed point issued from mu-calculus \cite%
{rudimemt:mu-calculus:book}.\ From this standpoint, we asked the
question of finding a class of {\textsf{CE}-strate}gies which is
closed by the operation of combination. Moreover, we consider as
highly desirable that a number of
nice algebraic properties as associativity of the combination of {\textsf{CE}%
-strate}gies based on positions or their idempotence are still true for all {%
\textsf{CE}-strate}gies.

All these theoretical questions have been addressed with success,
and the results are presented in this article. An application is
implemented in the context of our work on the generation of
multiscale models but with an intermediate-level and yet a closed
fragment of {\textsf{CE}-strate}gies. A user language allows the
expression of an input reference partial differential equation (PDE)
and of a reference proof that transforms this PDE into another one.
The reference proof corresponds to what is called the reference
model in the paper. The user language allows also the statement of
{\textsf{CE}-strate}gies and combinations. An OCaml program
generates the reference tree and allows to apply the extensions. The
combinations of extensions are then computed and applied to the
reference model.
Transforming {\textsf{CE}-strate}gies into position-based {\textsf{CE}-strate%
}gies is used to test the validity of the program. Nevertheless, the
implementation aspects are not presented here for the obvious reason
of lack of space.

The paper is organized as
follows. After a review in Section \ref{Preliminaries} of the useful concepts of
rewriting theory, the position-based {\textsf{CE}-strate}gies and
their combination are introduced in Section \ref{Implement:by:position:sec}. The broad class of
{\textsf{CE}-strate}gies is introduced in
Section \ref{Implement:by:strategies:sec}. The formulas for calculation of positions of a {\textsf{CE}%
-strate}gies and a given reference tree are also set forth. Finally
the combination formulas and the important algebraic properties of
combination are in Section \ref{unification:combination:section}. The proofs are given in the
appendix.

\section{Preliminaries}
\label{Preliminaries} We introduce preliminary definitions and notations, introduced for instance in the reference book \cite{Terese03}.

\textbf{Terms, contexts.}
Let ${\mathcal{F}}=\cup _{n\geq 0}{\mathcal{F}}_{n}$ be a set of symbols
called \emph{function symbols}. The \emph{arity} of a symbol $f$ in ${\mathcal{F}}_{n}$ is $n$ and is denoted $\mathit{ar}(f)$.
Elements of arity zero are called \emph{constants} and often denoted by the letters $a,b,c,$ etc. The set ${\mathcal{F}}_{0}$ of constants is always
assumed to be not empty.  Given a denumerable set ${\mathcal{X}}$ of \emph{variable} symbols, the set of \emph{terms} $\mathcal{T}\left( \mathcal{F},\mathcal{X}\right)$, 
 is the smallest set containing ${\mathcal{X}}$ and such that $f(t_{1},\ldots ,t_{n})$ is in $\mathcal{T}\left( \mathcal{F},\mathcal{X}\right) $ whenever
$ar(f)=n$ and $t_{i}\in \mathcal{T}\left(\mathcal{F},\mathcal{X}\right) $ for $i\in \lbrack 1..n]$. 
Let the constant  $\square \not\in {\mathcal{F}}$, the set $\mathcal{T}_{\square }(\mathcal{F},\mathcal{X})$ of "\textit{contexts}", denoted simply by $\mycal{T}_{\square }$, is
made with terms with symbols in $\mathcal{F}\cup \mathcal{X}\cup \{\square \}$ which
includes exactly one occurence of $\square$. 
Evidently, $\mathcal{T}_{\square }(\mathcal{F},\mathcal{X})$ and $\mathcal{T}(\mathcal{F},\mathcal{X})$ are two disjoint sets. 
We shall write simply $\mycal{T}$ (resp. $\mycal{T}_{\square}$) instead of  $\mathcal{T}\left( \mathcal{F},\mathcal{X}\right)$
(resp. $\mathcal{T}_{\square }(\mathcal{F},\mathcal{X})$). We denote by $\mathcal{V}ar\left({t}\right) $ the set of variables occurring in $t$.

\textbf{Positions, prefix-order.}
Let $t$ be a term in $\mathcal{T}\left(\mathcal{F},\mathcal{X}\right) $. A position in a tree is a finite sequence of integers in
$\PosSet=\{{\epsilon }\}\cup \mathbb{N}\cup (\mathbb{N}\times \mathbb{N}) \cup \cdots$.
In particular we shall write $\mathbb{N}_{\E}$ for $\set{\E} \cup \mathbb{N}$.
Given two positions $p=p_{1}p_{2}\ldots p_{n}$ and $q=q_{1}q_{2}\ldots q_{m}$, the \emph{concatenation} of $p$ and $q$, denoted by $p\cdot q$ or simply $pq$,
is the position $p_{1}p_{2}\ldots p_{n}q_{1}q_{2}\ldots q_{m}$.  The set of positions of the term $t$, denoted by $\mathcal{P}os\left( t\right)$,
is a set of positions of positive integers such
that, if $t\in \mathcal{X}$ is a variable or $t\in \mycal{F}_{0}$ is a constant, then $\mathcal{P}os\left( t\right) =\left\{ \epsilon \right\} $. 
If $t=f\left( t_{1},...,t_{n}\right) $ then $\mathcal{P}os\left(t\right)=\left\{ \epsilon \right\} \cup \bigcup_{i=1,n}\left\{ip\mid p\in \mathcal{P}os\left( t_{i}\right) \right\}$. The position $\epsilon $ is
called the root position of term $t,$ and the function or variable symbol at this position is called root symbol of $t$. 

The prefix order defined as $p\leq q$  iff there exists $p^{\prime }$  such that $pp^{\prime }=q$, is a partial order on positions. 
If $p^{\prime }\neq \epsilon$ then we obtain the strict order $p<q$. 
We write $\left( p\parallel q\right) $ iff $p$ and $q$ are incomparable with
respect to $\leq$. The binary relations $\sqsubset$ and
$\sqsubseteq$ defined by $p \sqsubset q \quad \text{ iff } \quad
\big(p < q \tor p\parallel q \big)$ and $p \sqsubseteq q \quad
\text{ iff } \quad \big(p\le q \tor p\parallel q \big) $, are total
relations on positions.
For any $p\in \mathcal{P}os(t) $ we denote by $t_{|p}$ the subterm
of $t$ at position $p$, that is, $t_{|{\epsilon }} =t$, and $f(
t_{1},...,t_{n})_{|iq} =(t_{i})_{|q}$.  For a term $t$, we shall
denote by $\delta(t)$ the depth of $t$, defined by $\delta(t_0)=0$,
if $t_0 \in \mycal{X} \cup \mycal{F}^0$ is a variable or a constant,
and $\delta(f(t_1,\ldots,t_n)) = 1+ max(\delta(t_i))$, for
$i=1,\ldots,n$.
For any position $p\in \mathcal{P}os\left( t\right)$ we denote by $t\left[ s %
\right]_{p}$ the term obtained by replacing the subterm of $t$ at
position $p $ by $s$: $t[s]_{\epsilon } =s$ and $f(t_{1},...,t_{n})
[s]_{iq} =f(t_{1},...,t_{i}[s]_{q},...,t_{n})$.

\textbf{Substitutions, terms matching and unification.}
A substitution is a mapping $\sigma :\mathcal{X}\rightarrow \mathcal{T}(%
\mathcal{F},\mathcal{X})$ such that $\sigma (x)\neq x$ for only
finitely many $x$s. The finite set of variables that $\sigma $ does
not map to
themselves is called the domain of $\sigma $: $\Dom(\sigma )\overset{def}{=}%
\left\{ x\in \mathcal{X}\gvert\sigma (x)\neq x\right\} $. If
$\Dom(\sigma )=\left\{ x_{1},...,x_{n}\right\} $ then we write
$\sigma $ as: $\sigma =\left\{ x_{1}\mapsto \sigma \left(
x_{1}\right) ,...,x_{n}\mapsto \sigma
\left( x_{n}\right) \right\} $. 
A substitution $\sigma :\mathcal{X}%
\rightarrow {\mathcal{T}(\mathcal{F},\mathcal{X})}$ uniquely extends
to an endomorphism $\widehat{\sigma }:\mathcal{T}(\mathcal{F},\mathcal{X}%
)\rightarrow \mathcal{T}(\mathcal{F},\mathcal{X})$ defined by: $\widehat{%
\sigma }(x)=\sigma (x)$ for all $x\in \Dom(\sigma )$, and $\widehat{\sigma }%
(x)=x$ for all $x\not\in {\mathsf{Dom}}(\sigma )$, and $\widehat{\sigma }%
(f(t_{1},\ldots ,t_{n}))=f(\widehat{\sigma }(t_{1}),\ldots ,\widehat{\sigma }%
(t_{n}))$ for $f\in \mathcal{F}$. In what follows we do not
distinguish between a substitution and its extension.

For two terms $t,t^{\prime }\in \mycal{T}$, we say that $t$ matches $%
t^{\prime }$, written $\match{t}{t'}$, iff there exists a substitution $%
\sigma$, such that $\sigma(t)=t^{\prime }$. It turns out that if
such a substitution exists, then it is unique.
The most general unifier of the two terms $u$ and $u^{\prime }$ is a
substitution $\gamma $ such that $\gamma (u)=\gamma (u^{\prime })$
and, for any other substitution $\gamma ^{\prime }$ satisfying
$\gamma ^{\prime }(u)=\gamma ^{\prime }(u^{\prime })$, we have that
$\gamma ^{\prime }$ is subsumed by $\gamma $. Besides, we shall
write $u\wedge u^{\prime }$ to denote the term $\gamma (u)$.
The composition of functions will be denoted by ``$\circ$''.
For a set $A$, the set of all functions from  $A$ to $A$ will be denoted by $\funset{A}$.
If $l_1$ and $l_2$ are lists, then
we  denote  by $l_1 \sqcup  l_2$ (resp. $l_1 \sqcap  l_2$) their concatenation (resp. intersection).
Sometimes we shall write $\sqcup_{i=1,n} e_i$ to denote the list 
$[e_1,\ldots,e_n]$. 
For any $n\in \mathbb{N}$ we simply denote by $[n]$ the interval $[1,\ldots,n]$.

\section{Position-Based \ces and their combination}
\label{Implement:by:position:sec}

\begin{figure*}[tbh]
\begin{minipage}[b]{0.45\linewidth}
\centering
\scalebox{1}{
\begin{tikzpicture}[level distance=1cm,
  level 1/.style={sibling distance=1.2cm},
  level 2/.style={sibling distance=0.8cm},
  scale=0.95]
  \node {$\partial$}
    child {node {$v$}
           child{node{$x$} }
           child{node{$\mathtt{nil}$} edge from parent node[right,draw=none] {$q$} }
           }
    child {node {$x$}
      child {node {$\mathtt{nil}$} edge from parent node[right,draw=none] {$p$}}
    };
\begin{scope}[xshift=2.5cm,level 1/.style={sibling distance=0.8cm}]
  \node {$\mathtt{list}$}
    child {node {$\square$}
           }
    child {node {$j$}
     };
\end{scope}
\begin{scope}[xshift=5cm,level 1/.style={sibling distance=1.3cm}, level 2/.style={sibling distance=0.8cm}]
  \node {$\partial$}
    child {node {$v$}
           child{node{$x$}}
           child{node{$\mathtt{nil}$}}
           }
    child {node {$x$}
      child {node {$\mathtt{list}$}
        child {node {$\mathtt{nil}$}
        }
        child {node {$j$}
        }}
    };
\end{scope}
\end{tikzpicture}}
\subcaption{Application of the position-based \ce that inserts the 
context $\tau=\mathtt{list}(\square,j)$) in 
   the term $t=\partial_xv(x)$ at the position $p$, yielding the  term $\partial_{x_j}v(x)$.}
\label{fig:UnitOutwardGrowthToP:1}
\end{minipage}%
\hfil
\begin{minipage}[b]{0.45\linewidth}
\centering
\scalebox{1}{
\begin{tikzpicture}[level distance=1cm,
  level 1/.style={sibling distance=1.3cm},
  level 2/.style={sibling distance=0.8cm},
  scale=0.94]
  \node {$\partial$}
    child {node {$v$}
           child{node{$x$}}
           child{node{$\mathtt{nil}$}  edge from parent node[right,draw=none] {$q$}}
           }
    child {node {$x$}
      child {node {$\mathtt{nil}$}  edge from parent node[right,draw=none] {$p$}}
    };
\begin{scope}[xshift=2.5cm,level 1/.style={sibling distance=0.8cm}]
  \node {$\mathtt{list}$}
    child {node {$\square$}}
    child {node {$i$}}
    ;
\end{scope}
\begin{scope}[xshift=5cm,level 1/.style={sibling distance=1.3cm}]
  \node {$\partial$}
  child {node {$v$}
    child{node{$x$}}
    child {node {$\mathtt{list}$}
      child {node {$\mathtt{nil}$}}
      child{node{$i$}
      }
    }
  }
  child {node {$x$}
    child {node {$\mathtt{nil}$}
    }
  };
\end{scope}
\end{tikzpicture}}
\subcaption{The application of the position-based \ce  that inserts the
 context $\tau'=\mathtt{list}(\square,i)$ 
 in  the term $t=\partial_xv(x)$  at the position $q$, yielding the  term $\partial_{x}v_i(x)$.}
\label{fig:UnitOutwardGrowthToP:2}
\end{minipage}%
\caption{Application of two position-based \ces to the term $t=\partial_xv(x)$.}
\label{fig_application_extension}
\end{figure*}
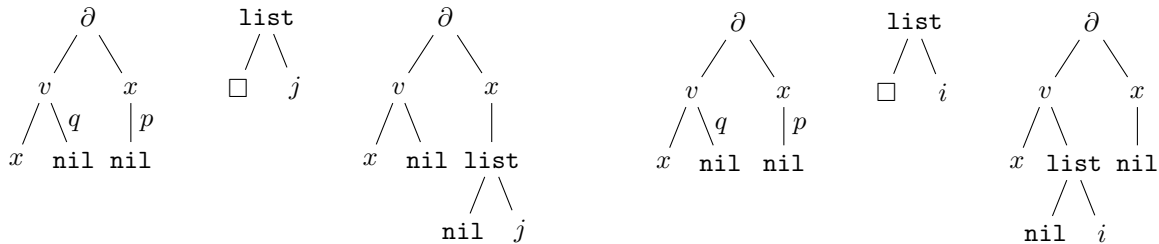

In this section we introduce  position-based \ces.
 The definition of  combination of position-based \ces  is given by means of  
their unification. 
Finally, the algebraic properties of the unification 
and combination are stated in   Proposition \ref{main:prop:elemntary:og:prop}, 
namely the associativity and the idempotence which are important in the applications. 
We first explain the ideas through an example.




\subsection{Example of combination of two position-based \ces}

We illustrate the idea and the interest of position-based \ces through
the simple example, presented in \cite{belkhir:SYNASC:15}, of an
extension of a
mathematical expression encountered in an extension of a proof. 
The context $\tau=\mathtt{list}(\square ,j)$ depicted in Figure 
\ref{fig:UnitOutwardGrowthToP:1} captures the idea that the extension
transforms a one-dimensional space coordinate variable $x$ to an
indexed multi-dimensional space coordinate variable $x_{j}$. 
Similarly, the context $\tau'=\mathtt{list}(\square,i)$ depicted in Figure 
\ref{fig:UnitOutwardGrowthToP:2}  captures the idea that the extension
transforms a function  $v(x)$ to  $v_{i}(x)$. 

Figure \ref{fig:UnitOutwardGrowthToP:3} shows the combination of these two
position-based \ces and their application to the term $t=\partial _{x}v(x)$.
We mean by combination of two \ces,  a \ce that behaves as if we have 
\emph{merged}  these two \ces. 
The combination is clearly different from the sequential composition.
\vspace{-0.4cm}
\begin{figure}[H]
\scalebox{0.90}{
\begin{tikzpicture}[level distance=1cm,
  level 1/.style={sibling distance=1.2cm},
  level 2/.style={sibling distance=0.8cm},
  scale=0.95]
  \node {$\partial$}
    child {node {$v$}
           child{node{$x$}}
           child{node  {$\mathtt{nil}$}  edge from parent node[right,draw=none] {$q$} }
           }
    child {node {$x$}
      child {node {$\mathtt{list}$ }
        child {node {$\mathtt{nil}$}
        }
        child {node {$j$}
        }edge from parent node[right,draw=none] {$p$}}
    };

\begin{scope}[xshift=3cm,level 1/.style={sibling distance=1.3cm}]
  \node {$\partial$}
  child {node {$v$}
    child{node{$x$}}
    child {node {$\mathtt{list}$}
      child {node {$\mathtt{nil}$} }
      child{node{$i$}
      } edge from parent node[right,draw=none] {$q$}
    }
  }
  child {node {$x$}
    child {node {$\mathtt{nil}$} edge from parent node[right,draw=none] {$p$}
    }
  };
\end{scope}
\begin{scope}[xshift=5.9cm,  level 1/.style={sibling distance=1.5cm}, level 2/.style={sibling distance=0.8cm}, level 3/.style={sibling distance=0.6cm} ]
  \node {$\partial$}
  child {node {$v$}
    child{node{$x$}}
    child {node {$\mathtt{list}$}
      child {node {$\mathtt{nil}$}}
      child{node{$j$}
      }
    }
  }
  child {node {$x$}
    child {node {$\mathtt{list}$}
      child {node {$\mathtt{nil}$}
      }
      child {node {$i$}
  }
  }
  };
\end{scope}
\end{tikzpicture}}
\caption{The combination of the 
   position-based \ce that inserts $\tau$ at the position $p$ with the 
      position-based \ce that inserts $\tau'$ at the position $q$,  and its application  to the term $t=\partial_{x}v(x)$, yielding the
      term $\partial_{x_j}v_i(x)$.}
\label{fig:UnitOutwardGrowthToP:3}
\end{figure}
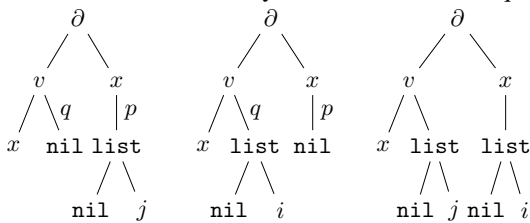

\subsection{Position-Based \ces and their semantics}
We formally define the class of position-based \ces    and their semantics. 
The semantics of a \ce  is a  function on terms and takes into account the 
failure of the application position-based \ce.

\begin{definition}[Position-Based \ces]
An position-based \ce is either the failing strategy $\emptylist$ or the list $[@p_{1}.%
\mbf{\tau}_{1},\ldots ,@p_{n}.\mbf{\tau}_{n}]$, where $n\geq 1$,
each $p_{i}$
is a positions and each $\mbf{\tau}_{i}$ is a tuple of contexts in $\mycal{T}%
_{_{\square }}^{\star }$.
\end{definition}

We impose 
constraints on the positions of insertions to avoid conflicts: the order of context
insertions goes from the leaves to  the root.

\begin{definition}[Well-founded position-based \ce]
\label{Well-founded:simple:ext:def} An position-based \ce
$E=[@p_{1}.\mbf{\tau}_{1},...,@p_{n}.\mbf{\tau}_{n}]$  is
well-founded iff
\begin{enumerate}[i.)]
\item a position occurs at most one time in $E$, i.e. $p_i \neq p_j$ for all $i\neq j$, and
\item insertions at lower positions  occur earlier  in $E$, i.e.  $i < j$ iff $p_i \sqsubset p_j$, for all $i,j \in [n]$.
\end{enumerate}
In particular, the empty position-based \ce $\emptylist$ is
well-founded.
\end{definition}

In all what follows we work only with the set of well-founded
position-based \ces, denoted by  $\eceSet$.  
For two position-based \ces $E'$ and $E'$, we shall write $E=E'$ to mean that they are equal up to 
a permutation of their parallel positions. 
For a position $p$, we let  $p.[@p_1.\mbf{\tau}_1,\ldots,@p_n.\mbf{\tau}_n]=[@pp_1.\mbf{\tau}_1,\ldots,@pp_n.\mbf{\tau}_n]$.
We next define the semantics of  position-based \ces
as a function in  $\funset{\mycal{T} \cup \set{\fail}}$,
with the idea that if the application of
an position-based \ce to a term fails, the result is $\fail$.

\begin{definition}[Semantics of position-based \ces]
The semantics of an position-based \ce $E$ is a function
 $\sembrackk{E}$ in $\funset{\mycal{T}\cup \set{\fail}}$  inductively  defined by:
\begin{small}
\begin{align*}
\sembrackk{\emptylist}(t) & \uberEq{def} \fail,   \\
\sembrackk{E}(\fail) & \uberEq{def} \fail,  \\
\sembrackk{@p. \mbf{\tau}}(t)  & \uberEq{def}
  \begin{cases}
   t{[\mbf{\eval (\tau)}[t_{|p}]]}_{p} & \tif p \in \PPos(t) \\
   \fail & \totherwise,
\end{cases}  \\
\sembrackk{[@p_1.\mbf{\tau}_1,\ldots,@p_n.\mbf{\tau}_n]}(t) &
\uberEq{def}
   (\sembrackk{@p_n.\mbf{\tau}_n} \circ \cdots \circ \sembrackk{@p_1. \mbf{\tau}_1})
   (t).
\end{align*}
\end{small}
\end{definition}

\subsection{Unification and combination of position-based \ces}

To ensure the idempotence of the combination of position-based \ces, we need 
to pay attention to  the combination of contexts when they are inserted at the same position.
More precisely, while combining contexts, 
which are assumed to be inserted at the same position, we form a \emph{tuple} 
of such contexts. This tuple will be evaluated during the application of 
an position-based \ce to a term. 
\begin{definition}[Combination and evaluation of contexts]
For any $\tau ,\tau ^{\prime }\in \mathcal{T}_{\square }$, we define
the combination of two contexts by ${\tau}[{\tau}']=\tau
\lbrack \tau ^{\prime }]_{\mathcal{P}os\left(\tau,\square \right)}$, where ${\mathcal{P}os\left(\tau,\square \right)}$ is the position of $\square$ in $\tau$.
For any two finite tuples of contexts $\mbf{\tau}=(\tau _{1},\ldots ,\tau_{n})$ and $\mbf{\tau'}=(\tau
_{1}^{\prime },\ldots ,\tau _{m}^{\prime })$ in 
$\mycal{T}_{\square }^{\star}=\mycal{T}_{\square}  \cup (\mycal{T}_{\square }\times \mycal{T}_{\square }) \cup \cdots$, we define the concatenation operation
"$\cdot$" by $\mbf{\tau}\cdot \mbf{\tau'}=(\tau _{1},\ldots ,\tau
_{n},\tau _{1}^{\prime },\ldots ,\tau _{m}^{\prime })$. The
evaluation of a tuple of
contexts $\mbf{\tau}=\mbf{\tau}_{1} \cdot \ldots \cdot \mbf{\tau}_{n}$, denoted as $\eval(%
\mbf{\tau})$, is inductively defined by:
\begin{enumerate}[i.)]
\item if  $\mbf{\tau}_i = \mbf{\tau}_{i+1}$, for some $i\in [1,\ldots,n]$, then
\begin{align*}
\hspace{-0.65cm}
\eval(\mbf{\tau}_1\mycdot \ldots \cdot \mbf{\tau}_i \mycdot \mbf{\tau}_{i+1}  \mycdot \ldots & \mycdot \mbf{\tau}_n)
 = \\
& \eval(\mbf{\tau}_1 \mycdot \ldots \mycdot \mbf{\tau}_i \mycdot  \mbf{\tau}_{i+2} \mycdot \ldots \mycdot \mbf{\tau}_n),
\end{align*}
\vspace{-0.5cm}
\item otherwise, 
\vspace{-0.4cm}
\begin{align*}
\hspace{-0.65cm}
\eval\left(\left(\tau_1,\ldots,\tau_m\right)\right)  =
\begin{cases}
 \tau_1,  & \tif n=1 \\
 \tau_1[\eval\left(\left(\tau_2,\ldots,\tau_m\right)\right)],   &  \tif n \ge  2.
\end{cases}
\end{align*}
\end{enumerate}
\end{definition}
The unification  of two position-based \ces amounts to sort and merge their positions,
and to combine their contexts if they are inserted at the same position.

\begin{definition}[Unification  of two position-based \ces]
\label{unif:posi:non:empty:def}
The unification  of two position-based
\ces is the binary operation $\combb: \eceSet\times
\eceSet\longrightarrow \eceSet$
defined as
\begin{align*}
E\combb E^{\prime }=
\begin{cases}
E^{\prime \prime } & \tif E\neq \emptylist\text{ and }E^{\prime }\neq %
\emptylist \\
\emptylist  & \tif E =  \emptylist\text{ or  } E^{\prime }=\emptylist 
\end{cases}
\end{align*}
where the first case $E=[@p_{1}.\mbf{\tau}_{1},\ldots ,@p_{n}.\mbf{\tau}%
_{n}]$, $E^{\prime }=[@p_{1}^{\prime }.\mbf{\tau}_{1}^{\prime
},\ldots ,@p_{m}^{\prime }.\mbf{\tau}_{m}^{\prime }]$ and $E^{\prime
\prime }=[@p_{1}^{\prime \prime }.\mbf{\tau}_{1}^{\prime \prime
},\ldots ,@p_{r}^{\prime \prime }.\mbf{\tau}_{r}^{\prime \prime }]$
with sets of positions $P$, $P^{\prime }$ and $P^{\prime \prime
}=P\cup P^{\prime }$ and the contexts $\mbf{\tau}_{k}^{\prime \prime
}$ defined as follows. For a position $p_{k}^{\prime \prime }\in
P^{\prime \prime }\setminus P\cap P^{\prime }$,
\begin{align*}
\mbf{\tau}_{k}^{\prime \prime }=\mbf{\tau}_{i} \tif p_{k}^{\prime \prime }=p_{i}\in P && \tand && \mbf{\tau}_{k}^{\prime \prime }=\mbf{\tau}%
_{j}^{\prime }\text{ if }p_{k}^{\prime \prime }=p_{j}^{\prime }\in P^{\prime }.
\end{align*}
Otherwise,  $p_{k}^{\prime \prime }=p_{i}=p_{j}^{\prime }\in P\cap
P^{\prime }$ for some $i,j$ and
$\mbf{\tau}_{k}^{\prime \prime }=\mbf{\tau}_{j}^{\prime }\cdot
\mbf{\tau}_{i}$.
Besides, the other of the positions in   $P''$ is chosen so that  $E''$ is well-founded. 
\end{definition}

The combination  of two position-based \ces is similar to their 
unification apart that we relax the constraint on the failure. 
\begin{definition}[Combination  of two   position-based \ces]
\label{comb:posi:def}
The combination  of two  position-based \ces is a binary operation 
$\comb:  \eceSet  \times  \eceSet  \longrightarrow \eceSet $
defined for any     $E$ and $E'$ in $\eceSet$ by
\vspace{-0.2cm}
\begin{align*}
 E \comb  E'=
\begin{cases}
E \combb E'  & \tif E\neq \emptylist \tand E' \neq \emptylist \\
E  & \tif E\neq \emptylist \tand   E'=\emptylist \\
E'  & \tif E = \emptylist \tand   E' \neq \emptylist \\
\emptylist & \tif E = \emptylist \tand   E' = \emptylist 
\end{cases}
\end{align*}
\end{definition}
\begin{proposition}
\label{main:prop:elemntary:og:prop}
The following hold.
\begin{enumerate}
\item The set $\eceSet$ of position-based \ces  together with the unification and combination operations enjoys the following properties. 
    \begin{enumerate}
    \item The neutral element of the unification and combination  is $@\E.\square$.
    \item Every position-based  \ce   $E$ is idempotent for  the unification and combination, i.e.     $E \combb E = E$ and $E \comb E = E$.
    \item The unification  and combination are associative.
    \end{enumerate}
\item The unification and combination of position-based \ces  is non commutative.
\end{enumerate}
\end{proposition}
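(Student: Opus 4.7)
The plan is to reduce each sub-claim to two elementary facts: (a) the concatenation ``$\cdot$'' of tuples of contexts is associative, and (b) the context $\square$ plays the role of a left/right identity under the rules defining $\eval$ (since $\square[t_{|\epsilon}]_\epsilon = t$ for every term $t$). Once these are isolated, each part of the proposition follows by unfolding the definitions of $\combb$ and $\comb$ and performing careful bookkeeping on positions.

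For the neutral element, I would treat the failure case first: if $E = \emptylist$, then $E \combb @\E.\square = \emptylist$ while $E \comb @\E.\square = @\E.\square$, and in the latter the insertion at $\E$ is semantically the identity on terms. For nonempty $E$, two sub-cases: if $\E \in P$ the merged tuple at $\E$ is $\square \cdot \mbf{\tau}$, and the recursive clause of $\eval$ gives $\square[\eval(\mbf{\tau})] = \eval(\mbf{\tau})$; if $\E \notin P$ we merely adjoin $@\E.\square$, whose action on any term is the identity. For the idempotence $E \combb E$, every position is shared, so the new tuple at $p_i$ is $\mbf{\tau}_i \cdot \mbf{\tau}_i$, which by clause (i) in the definition of $\eval$ collapses to $\eval(\mbf{\tau}_i)$. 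The claim $E \comb E = E$ reduces to this since neither argument is empty.

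For associativity of $\combb$, the sets of positions of $(E \combb E') \combb E''$ and $E \combb (E' \combb E'')$ are both $P \cup P' \cup P''$, ordered by the sorting forced by Definition~\ref{Well-founded:simple:ext:def}, which is deterministic on any fixed set of positions. At a position lying in the intersection of all three, the tuples obtained are $\mbf{\tau}''_k \cdot (\mbf{\tau}'_j \cdot \mbf{\tau}_i)$ and $(\mbf{\tau}''_k \cdot \mbf{\tau}'_j) \cdot \mbf{\tau}_i$, equal by associativity of ``$\cdot$''. Positions lying in fewer than three of the sets produce the same tuple on both sides by inspection. For $\comb$, associativity follows by a case split on which of $E,E',E''$ equal $\emptylist$; each case either reduces to the associativity of $\combb$ just proved, or is trivial because $\comb$ treats $\emptylist$ as a two-sided unit.

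For non-commutativity, I would exhibit the minimal counterexample $E=[@\E.\tau]$ and $E'=[@\E.\tau']$ with $\tau=f(\square)$, $\tau'=g(\square)$ and distinct $f,g$: then the tuple at $\E$ in $E \combb E'$ evaluates to $g(f(\square))$, while in $E' \combb E$ it evaluates to $f(g(\square))$, and these differ as contexts. The main obstacle I expect is not any individual computation but keeping the case analysis of $\comb$ aligned with the failure clause of the semantics, and clarifying at the outset that ``equality'' of extensions throughout this proposition must be understood up to the permutation of parallel positions stated after Definition~\ref{Well-founded:simple:ext:def} together with the collapsing rule (i) of $\eval$; once that convention is fixed, the remaining arguments are routine applications of (a) and (b).
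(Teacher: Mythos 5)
Your proposal is correct and follows essentially the same route as the paper, whose entire proof is precisely the reduction you make to the three $\eval$ identities --- $\eval(\mbf{\tau}\cdot\mbf{\tau})=\eval(\mbf{\tau})$ for idempotence, associativity of tuple concatenation for associativity, and order-dependence of $\eval(\mbf{\tau}_1\cdot\mbf{\tau}_2)$ for non-commutativity --- which you merely flesh out with the routine bookkeeping on position sets and the $\emptylist$ case analysis for $\comb$. The one caveat, a wrinkle in the statement rather than in your argument, is the edge case you yourself flag: $\emptylist \comb @\E.\square = @\E.\square$, which is not equal to $\emptylist$ even semantically (since $\sembrackk{\emptylist}$ is constantly $\fail$ while $@\E.\square$ acts as the identity on terms), so neutrality of $@\E.\square$ for $\comb$ strictly holds only on nonempty position-based strategies --- a point the paper's own proof sketch does not address at all.
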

The idempotence follows from the equality $\eval(\mbf{\tau} \cdot  \mbf{\tau})=\eval(\mbf{\tau})$,
the associativity follows from the equality  
$\eval((\mbf{\tau_1} \mycdot \mbf{\tau}_2) \mycdot \mbf{\tau}_3)=\eval(\mbf{\tau_1} \mycdot (\mbf{\tau}_2 \mycdot \mbf{\tau}_3))$,
and the non-commutativity is a consequence of $\eval(\mbf{\tau}_1  \mycdot \mbf{\tau}_2) \neq \eval(\mbf{\tau}_2 \mycdot \mbf{\tau}_1)$ in general,
for any tuples of  contexts $\mbf{\tau},\mbf{\tau}_1,\mbf{\tau}_2$ and $\mbf{\tau}_3$.

\section{The class of context-embedding strategies ({\ces})}
\label{Implement:by:strategies:sec}

In this section  we enrich the framework of position-based \ces  introduced
 in Section \ref{Implement:by:position:sec}  by 
introducing navigation  strategies to form a class of \emph{\ces}.
The $\rho$-calculus  strategy constructors of \cite{Cirstea:Rew:Staretgies:03}
or the standard traversal strategies of  \cite{Terese03} yield a class of 
strategies which is not closed under combination. 
The design of the class of \ces  is inspired by the $\mu $-calculus formalism \cite{rudimemt:mu-calculus:book} 
since  we  need   very rudimentary strategy constructors.
In particular the jumping into the  immediate positions of the term tree 
is morally similar to  the diamond and box modalities ($\langle \cdot \rangle$ and $[ \cdot ]$) of the propositionsal modal $\mu$-calculus.
And the fixed-point constructor is much finer than the iterate operator of e.g. \cite{Cirstea:Rew:Staretgies:03}.
Besides, we incorporate the left-choice strategy constructor 
and a restricted form of the composition.

\subsection{Specification of failure by Boolean formulas}
The first enrichment  of the position-based \ces is to specify and handle the failure.

Assume  that we applied the  position-based \ce $E=[@p_1.\mbf{\tau}_1, \ldots, @p_n.\mbf{\tau}_n]$ to a term,
and assume that  one of the $@p_i.\mbf{\tau}_i$ fails. In this case the whole position-based \ce $E$ fails. 
We shall  relax this strong failure specification  
by allowing one to explicitly  specify  whether  the application of a strategy to a term fails 
depending on the  failure  of the application of its sub-strategies.
In this subsection we propose to specify the failure  by means of  Boolean formulas that  we next introduce.       
For this purpose, to  each  position $p$ in $\PosSet$, we associate a  Boolean 
\emph{position-variable} denoted by $\varOf{p}$. 
The idea is that when we apply  a \ce, say  $@p.\mbf{\tau}$, to a term,  then we get 
$\varOf{p}:=\true$ if this application succeeds, and   $\varOf{p}:=\false$ if it  fails. 
For instance, assume that we want  that the application of 
the position-based \ce $[@p_1.\mbf{\tau}_1, @p_2.\mbf{\tau}_2]$ succeeds  if the application of $@p_1.\mbf{\tau}_1$ succeeds  \emph{or}
the application of $@p_2.\mbf{\tau}_2$ succeeds. 
This is specified  by the Boolean formula $\varOf{p}_1 \lor \varOf{p}_2$.

In what follows, the set of  Boolean position-variables is denoted by  $\PosSetVar$.

\begin{definition}[Boolean formulas over $\PosSetVar$]
The set of Boolean formulas over $\PosSetVar$, denoted 
by $\bool(\PosSetVar)$, is defined by the  grammar:
\begin{align*}
\mycal{B} ::= \true \gvert \mathtt{False} \gvert \varOf{p} \gvert \mycal{B} \land \mycal{B} \gvert \mycal{B} \lor \mycal{B}
\end{align*}
where $\varOf{p} \in \PosSetVar$. 
The set of position-variables  of $\phi \in \bool(\PosSetVar)$ will be denoted by $\var{\phi}$.
A valuation is a mapping $\nu: \PosSetVar  \longrightarrow \set{\true,\false}$. 
We write   $\nu \models \phi$ to mean that $\nu(\phi)$ holds.
\end{definition}

\subsection{Syntax and  semantics of \ces}
Besides the specification of failure, 
the second enrichment  of the position-based \ces is the introduction 
of navigation strategies. 
Namely,  we shall introduce the left-choice strategy 
constructor ($\oplus$), a restricted form of the composition (``;''),
and the fixed-point constructor (``$\mu$'') allowing the 
recursion in the definition of strategies.  
In what follows we assume that 
there is a denumerable set of \emph{fixed-point variables} denoted by  $\fixset$.
Fixed-point variables in  $\fixset$ will be denoted by $X,Y,Z, \ldots$ 

\begin{definition}[\ces]
The class of \ces  is  defined by the following grammar: 
\begin{align*}
\mycal{S}  \; ::= \;  &\emptylist \gvert X \gvert (\fleche{u}{u}) ;  \mycal{S} \gvert  \mycal{S} \oplus \mycal{S}   \gvert \fleche{u}{u[\mbf{\tau}]} \gvert \mu X. \mycal{S}  \gvert  \\ 
                      &  @p. \mycal{S} \gvert  @p. \mbf{\tau} \gvert  \pair{[@p_1.\mycal{S}_1\ldots,@p_n.\mycal{S}_n]}{\phi}
\end{align*}
where  
$X$ is a fixed-point variable in $\fixset$,
and $u$  is a  term   in $\mycal{T}$, 
and $\mbf{\tau}$ is a tuple of contexts  in $\mycal{T}_{\square}^{\star}$ 
and $p,p_1,\ldots,p_n$ are positions in $\PPos$,
and $\phi$ is a  Boolean formula in  $\bool(\PosSetVar)$ with  $\var{\phi}=\set{\varOf{p}_1,\ldots,\varOf{p}_n}\setminus \set{\E}$.
The set of \ces will be denoted by $\ceSet$.
\end{definition}

The strategy $@p.S$ means to jump  to the position $p$ and  to  apply $S$ there. 
The strategy $\pair{[@p_1.S_1\ldots,@p_n.S_n]}{\phi}$ consists in applying 
each of $@p_i.S_i$, which yields a valuation that sends the position-variable 
$\varOf{p}_i$ to $\false$ iff  the application of $@p_i.S_i$ fails, 
then  evaluating the Boolean formula $\phi$.  
If this evaluation is false then   the whole strategy $\pair{[@p_1.S_1\ldots,@p_n.S_n]}{\phi}$ fails,
otherwise, every  sub-strategy $@p_i.S_i$ that failed behaves like  the identity, i.e. it does nothing, while 
the other non-failing sub-strategies $@p_j.S_j$ are applied. 
For example, if we apply the \ce $S=\pair{[@p_1.S_1, @p_2.S_2]}{\varOf{p}_1 \lor \varOf{p}_2}$ to a term $t$,
and $@p_1.S_1$ fails while $@p_2.S_2$ does not, we get an  evaluation $\nu$ with 
$\nu(\varOf{p}_1)=\false$ and $\nu(\varOf{p}_1)=\true$.
Since $\nu \models \varOf{p}_1 \lor \varOf{p}_2$,  then the result  of the application 
of $S$ to $t$ is precisely the result of the application of $@p_2.S_2$ to $t$,
making $@p_1.S_1$ behaving like the identity. 

It's worth mentioning that the aim of 
incorporation of the Boolean formulas in \ces is to make it expressive enough so we can 
write the standard  traversal  strategies (see Example \ref{ex:TD:strategy}).   
The fragment of \ces without Boolean formulas remains closed under  unification and combination.

We shall sometimes write  $\mu X. S(X)$ instead of $\mu X. S$ 
to emphasize that the fixed-point variable $X$ is free in $S$. 

To define the semantics of \ces  we need  to introduce  an intermediary 
function   $\eta: \funset{\mycal{T} \cup \set{\fail}}  \rightarrow  \mycal{T}\cup \set{\fail}  \rightarrow  \mycal{T}\cup \set{\fail}$,
 that stands for the \emph{fail as identity}.
It is defined for any function $f$ in $\funset{\mycal{T}\cup \set{\fail}}$  and any term $t \in \mycal{T}\cup \set{\fail}$ by 
\begin{align*}
(\eta(f))(t) =  \begin{cases}
f(t) & \tif f(t)  \neq \fail \\
t    & \totherwise. 
\end{cases}
\vspace{-0.4cm}
\end{align*} 
Beside, let $S^{i+1}(S') \uberEq{def}  S^{i}(S(S'))$, for all  any  \ces $S(X)$ and $S'$ in  $\ceSet$.
A \ce strategy  is closed if  all its fixed-point variables are bound.

\begin{definition}[Semantics of \ces]
 The semantics of a closed \ce  $S$ is a  function 
$\sembrackk{S}$ in $\funset{\mathcal{T} \cup \mathbb{F}}$, which is   defined inductively  as follows.
\vspace{-0.1cm}
\begin{align*}
 \sembrackk{\emptylist}(t) & \uberEq{def} \fail.&
        & \\
\sembrackk{(u,s')}(t)      & {\overset{def}{=}}%
                                    \begin{cases}
                                      \sembrackk{s'}(t)  & \textrm{if } \match{u}{t}, \\
                                             {\mathbb{F}} & \text{otherwise}.
                                    \end{cases} \\
\sembrackk{@p.\mbf{\tau}}(t)  & \uberEq{def}  
\begin{cases}
  t[\mbf{\tau}(t_{|p})]_{p} & \textrm{if } p \in \PPos(t), \\
  \fail & \textrm{otherwise}.
\end{cases} \\
\sembrackk{S_{1}\oplus S_{2}}(t)& {\overset{def}{=}}%
\begin{cases}
\sembrackk{S_{1}}(t) & {\text{if }}\sembrackk{S_{1}}(t)\neq {\mathbb{F}}, \\
\sembrackk{S_{2}}(t) & \text{otherwise.}%
\end{cases}    \\
\sembrackk{\mu X. S(X)}(t) & \uberEq{def} \sembrackk{\bigoplus_{i=1,\delta(t)} S^{i}(\emptylist)}(t).  \\
\sembrackk{@p.S}(t) & \uberEq{def}  
\begin{cases}
  t[\sembrackk{S}(t_{|p})]_{p}   & \textrm{if } \sembrackk{s}(t_{|p}) \neq \fail  \tand \\ & \;\;\; \; p \in \PPos(t), \\
  \fail & \textrm{otherwise}.
\end{cases} 
\end{align*}
\begin{align*}
&\sembrackk{\pair{ \bigsqcup_{i=1,n}@p_i.S_i}{\phi}}(t)  \uberEq{def}   \\
 & \;\;  \begin{cases} 
   \big(\eta(\sembrackk{@p_n.S_n})  \circ \cdots \circ \eta(\sembrackk{@p_1.S_1})\big)(t) & \textrm{if } \evf{\mathfrak{f}}{S}{t} \models \phi,  \\
   \fail & \textrm{otherwise},
   \end{cases}\\
& \twhere   S  = [@p_1.S_1,\cdots,@p_n.S_n],  \tand   \\
&       \ev{S}{t}(\varOf{p}_i) = \false  \tiff \sembrackk{@p_i.S_i}(t) = \fail    
\end{align*}
\end{definition}


For any \ces $S,S'$ in $\ceSet$, we shall write 
$S \equiv S'$  iff $\sembrackk{S} = \sembrackk{S'}$.
To simplify the presentation, 
we shall write $(u,S)$ instead of $(\fleche{u}{u}); S$
and we shall write $(u,\mbf{\tau})$ instead of $\fleche{u}{u[\mbf{\tau}]}$.
\begin{example}
Let  $\mycal{F}=\set{f,a}$ be a set of two functional symbols where $ar(f)=1$  and $ar(a)=0$.
The \ce $S = \mu.X \big( (f(a) , \mbf{\tau}) \oplus (@1.X)\big)$, when applied to a term $t$,
 matches the  pattern $f(a)$ with the term $t$ at the root. If it is the case, then 
it inserts $\mbf{\tau}$ at the root position of $t$. Otherwise, it jumps to the position $1$ of $t$ and 
restarts again. Obviously, if it reaches a leaf of $t$ then it fails since the position $1$ does not exist.
\end{example}
\begin{example}
\label{ex:TD:strategy}
We show how to encode some standard traversal strategies in our formalism using  the fixed-point 
constructor.
In  what follows we assume that $S$ is a \ce. 
We recall that, when applied to a term $t$,
the \ce  $\mathtt{OneLeft}(S)$  tries to apply $S$ to  the subterm of $t$ (if any) which is 
the closest to the root  and on the far-left.  
The \ce  $\TDDD(S)$ tries to apply $S$ to the maximum  of the sub-terms of $t$ starting from the root of $t$,
it stops when it is successfully applied. Hence,
\vspace{-0.2cm}
\begin{align*}
&\mathtt{OneLeft} (S)  = \\ 
& \mu X. \big(S \oplus \bigoplus_{\substack{f \in \mycal{F}, \\  ar(f)=n}} \big(f(x_1,\cdots,x_n), 
  \bigoplus_{i=1,n} \bigpair{[@i. X]}{ \bigor_{i=1,n} \varOf{i}\, }\big) \big), \\
& \tand \\
& \TDDD(S)             =  \mu X. \big(\\
& S \oplus \bigoplus_{\substack{f \in \mycal{F}, \\ ar(f)=n}}  \big(f(x_1,\cdots,x_n),\bigpair{[@1. X,\cdots,@i.X]}{ \bigor_{i=1,n}\varOf{i}\,}\big) \big).
\end{align*}
\end{example}
We generalize next the condition of well-foundedness from position-based \ces 
to \ces. Before that, it is helpful to view a  \ce    as a tree with back-edges.
A tree with back-edges is an oriented tree with possible edges 
going from a node to at most one of its ancestors in the tree. 
For instance, the tree-like structure 
of the \ce   $S(X)=(f(x) ,\mbf{\tau}) \oplus (@1.X)$, where the  fixed-point variable $X$ is free,
is depicted on the left of Figure \ref{tree:with:back-edges}.
While the tree with- back-edge   related to   $\mu X. S(X)$ is depicted on the right.
\vspace{-0.22cm}
\begin{figure}[H]
\centering
\begin{tikzpicture}[->,level distance=1cm,
  level 1/.style={sibling distance=1.2cm},
  level 2/.style={sibling distance=0.8cm},
  scale=0.95]
  \node {$\oplus$}
    child {node {$;$}
           child{node{$f(x)$} }
           child{node{$\mbf{\tau}$} }
           }
    child {node {$@1$}
      child {node {$X$}}
    };
\begin{scope}[xshift=3cm,yshift=1cm,level 1/.style={sibling distance=0.8cm}]
 \node(root){$\mu X$}
 child{ node{$\oplus$}
    child {node {$;$}
           child{node{$f(x)$} }
           child{node{$\mbf{\tau}$} }
           }
    child {node {$@1$}
      child {node(leaf) {$X$}}
    }};
\draw (leaf) to[out=30,in=-20] (root);
\end{scope}
\end{tikzpicture}
\caption{The tree-like structure 
  of the \ce   $S(X)=(f(x) ,\mbf{\tau}) \oplus (@1.X)$ (left) and  $\mu X. S(X)$ (right). }
\label{tree:with:back-edges}
\end{figure}
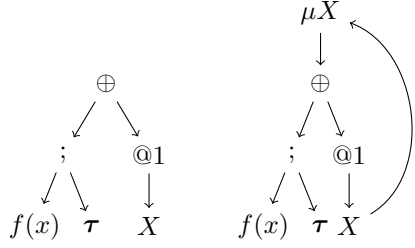


\begin{definition}[Well-founded \ces.]
\label{Well-founded:strategy:ext:def}
 A \ce  $S$  is well-founded iff
\begin{enumerate}[i.)]
\item  Every cycle in $S$        passes through a   position\footnote{This constraint is similar to the one imposed on the modal  $\mu$-calculus formulas in which each cycle has to pass through a modality \cite{rudimemt:mu-calculus:book}.}.
\item All its   sub-strategies  of the form 
      $\pair{[@p_1.{S}_1,\ldots,@p_n.{S}_n, @q_1.\mbf{\tau}_1,\ldots, @q_m,\mbf{\tau}_m]}{\phi}$, 
     where $n+m \ge 1$ and $p_i,q_j$ are positions and $\mbf{\tau}_i$ are tuples of contexts in ${T}_{\square}^{\star}$ and ${S}_i$ are \ces, 
    are subject to the  following conditions: 
\vspace{-0.1cm}
  \begin{enumerate}
  \item $q_i \sqsubset  q_j$, for all $i < j$,    where $i,j \in [m]$, and  
  \item $p_i \parallel p_j$,  for all $i \neq j$, where $i,j \in [n]$, and 
  \item $q_j \sqsubset p_i$,  for all $j \in [m]$ and $i \in [n]$.
  \end{enumerate}
\end{enumerate}
\end{definition}

For instance the \ce $\mu X. ((f(x),\mbf{\tau}) \oplus X)$ is not well-founded because the cycle that corresponds to  the 
regeneration of the variable $X$ does not cross a position. 
while  the \ce $\mu X. ((f(x) ,\mbf{\tau}) \oplus (@1.X))$ is well-founded.
In all  what follows we assume that the \ces are well-founded. 
Notice that any  \ce is terminating.  
This is a  direct consequence of Item (i) of the well-foundedness  of \ces, that is, every cycle in a well-founded 
\ce    passes through a   position delimiter.

The set of Boolean formulas   (resp. positions) 
of an \ce $S$, will be denoted by  $\Phi(S)$ (resp. $\PPos(S)$).
It is defined in a straightforward   way. 

\subsection{Canonical form of \ces}

Instead of the direct combination of \ces, 
we shall first simplify the  \ces by 
turning each \ce into an  equivalent \ce  in the \emph{canonical form}. 
A \ce is in  the canonical form if  each of its Boolean  formulas 
is a conjunction  of position-variables, where each position-variable is
in $\varOf{\mathbb{N}}_{\E}$ instead of $\PosSetVar$.
The  advantage  of the use of canonical  \ces is   that their
combination is  much simpler.

\begin{definition}[Canonical form of \ces]
An \ce  strategy  $S$ is in the canonical form iff    
any Boolean formula  $\phi$ in $\Phi(S)$ is of the form 
$ \phi = \bigand_i \varOf{p}_i$, where $\varOf{p}_i \in  \varOf{\mathbb{N}}_{\E}$.
The set of \ces in the canonical form is  denoted by $\ceSetCan$.
\end{definition}
 It follows that if a \ce   $S$ is in the canonical form, then we have 
$\PPos(S) \subset \mathbb{N}_{\E}$.
\begin{lemma}
\label{canonical:form:equiv}
Any  \ce can be turned into an equivalent \ce in the canonical form.
\end{lemma}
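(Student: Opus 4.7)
The plan is to proceed by structural induction on the \ce $S$. For the atomic and purely structural constructors that introduce neither Boolean formulas nor non-immediate positions (namely $\emptylist$, $X$, $\fleche{u}{u[\mbf{\tau}]}$, $(\fleche{u}{u});S'$, $S_1\oplus S_2$, and $\mu X.S'$), the inductive hypothesis on the immediate sub-\ces suffices: replacing each sub-\ce by its canonical equivalent preserves semantics and gives a canonical outer strategy. For the jump constructors $@p.S'$ and $@p.\mbf{\tau}$ with a non-immediate position $p = j_1 j_2 \cdots j_k$, I would unfold hop-by-hop using the semantic identity
\begin{align*}
\sembrackk{@p.S'} = \sembrackk{@j_1.@j_2.\ldots @j_k.S'},
\end{align*}
which follows directly from the recursive definition of $\sembrackk{@p.\cdot}$ via repeated use of $(t_{|jq}) = (t_{|j})_{|q}$, and analogously for $@p.\mbf{\tau}$. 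After unfolding, each hop has immediate position and the inductive hypothesis on the innermost strategy closes the case.

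The main case is $S = \pair{[@p_1.A_1,\ldots,@p_n.A_n]}{\phi}$, where each $A_i$ is first canonicalized by the inductive hypothesis. The transformation then proceeds in two stages. \textbf{Stage one (DNF).} I convert $\phi$ to disjunctive normal form $\phi \equiv \bigvee_{k=1}^m \psi_k$, with each $\psi_k = \bigand_{i \in I_k}\varOf{p_i}$ a pure conjunction of position-variables; this is possible because the grammar of $\bool(\PosSetVar)$ omits negation. Using the definitions of $\sembrackk{\oplus}$ and $\sembrackk{\pair{\cdot}{\cdot}}$ one verifies
\begin{align*}
\pair{[\vec{A}]}{\phi} \equiv \bigoplus_{k=1}^m \pair{[\vec{A}]}{\psi_k},
\end{align*}
since on any term $t$ both sides compute the same composed transformation $\eta(\sembrackk{@p_n.A_n})\circ\cdots\circ\eta(\sembrackk{@p_1.A_1})$ exactly when $\ev{S}{t}\models \phi$, and both fail otherwise. \textbf{Stage two (position grouping).} A resulting conjunctive pair $\pair{[\vec{A}]}{\bigand_i \varOf{p_i}}$ may still contain deep positions. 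I group the indices by the first coordinate of $p_i$: for each first-index $j$ occurring in some $p_i$, set $\vec{A}^{(j)}=[@q.A_i : p_i = j\cdot q]$ and $\psi^{(j)}=\bigand \varOf{q}$, and rewrite
\begin{align*}
\pair{[\vec{A}]}{\bigand_i \varOf{p_i}} \equiv \pair{\bigl[@j.\pair{[\vec{A}^{(j)}]}{\psi^{(j)}}\bigr]_{j \in J}}{\bigand_{j\in J}\varOf{j}},
\end{align*}
where $J$ is the set of first-indices. I iterate on the inner pairs; the process terminates because the maximum position length strictly decreases at each application, so eventually every position in every Boolean formula (and every standalone $@p.\cdot$) lies in $\mathbb{N}_\E$.

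The main obstacle is verifying the grouping equivalence of stage two. Its justification consists of unfolding both sides: (i) the success of the inner pair $\pair{[\vec{A}^{(j)}]}{\psi^{(j)}}$ on $t_{|j}$ is exactly the joint success of the original $@p_i.A_i$ with $p_i = j\cdot q$ on $t$, because $(t_{|j})_{|q} = t_{|p_i}$; and (ii) the nested $\eta$-compositions in the grouped form merge, position by position, to reproduce $\eta(\sembrackk{@p_n.A_n})\circ\cdots\circ\eta(\sembrackk{@p_1.A_1})$, where any rearrangement of the composition order across and within groups is harmless thanks to the pairwise parallelism of $\{p_i\}$ guaranteed by well-foundedness (Definition~\ref{Well-founded:strategy:ext:def}). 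The boundary case $p_i = \epsilon$ is absorbed by the convention that $\varOf{\epsilon}$ is trivially true and does not appear in $\var{\phi}$, so such an $A_i$ is grafted into the $j = \epsilon$ group without contributing an outer conjunct. Finally, well-foundedness is preserved at each rewriting step since parallelism of the original $p_i$ descends to parallelism of the tails $q$ within each group, distinct first-indices yield distinct outer positions, and no cycle of a fixed-point is opened.
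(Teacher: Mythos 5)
Your overall plan (structural induction; DNF elimination of $\lor$ via the left-choice $\oplus$; hop-by-hop decomposition $@p.S\equiv @j_1.(@j_2.\cdots @j_k.S)$; grouping by first index) is the same skeleton as the paper's, but there is a genuine gap exactly at the interface of your two stages, which is where the paper invests its technical work (Lemmas~\ref{equiv:lemma:2} and~\ref{equiv:lemma:3}). Your stage one keeps the \emph{full} list in every disjunct $\pair{[\vec{A}]}{\psi_k}$. As a semantic identity this is fine, but such a pair is not a \ce: the grammar requires $\var{\phi}=\set{\varOf{p}_1,\ldots,\varOf{p}_n}\setminus\set{\varOf{\E}}$, and a DNF disjunct of, say, $\varOf{p}_1\lor\varOf{p}_2$ mentions only a proper subset of the list's positions. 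This is not a pedantic point, because everything downstream hinges on how the positions \emph{missing} from $\psi_k$ are treated. If you repair the syntax naively by restricting the list to the support of $\psi_k$, you change the semantics: on a term where both $@p_1.A_1$ and $@p_2.A_2$ succeed, the original $\pair{[@p_1.A_1,@p_2.A_2]}{\varOf{p}_1\lor\varOf{p}_2}$ applies both sub-strategies, while the restricted first disjunct applies only $A_1$. If instead you keep the full list, then your stage-two grouping formula breaks on the very output stage one produces: you set $\psi^{(j)}=\bigand\varOf{q}$ over \emph{all} tails in the group, which turns optional sub-strategies into mandatory ones. Concretely, for $\pair{[@{11}.A_1,@{12}.A_2]}{\varOf{11}}$ your grouped form is $\pair{[@1.\pair{[@1.A_1,@2.A_2]}{\varOf{1}\land\varOf{2}}]}{\varOf{1}}$, which fails on any term where $A_1$ succeeds at position $11$ but $A_2$ fails at $12$, whereas the original succeeds there, with $@{12}.A_2$ neutralized into the identity by $\eta$. (Relatedly, your convention that ``$\varOf{\epsilon}$ is trivially true'' is not the paper's semantics: $@\E.S_i$ can fail; $\varOf{\E}$ is merely excluded from formulas and absorbed by $\eta$.)

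The paper closes exactly this hole. Lemmas~\ref{equiv:lemma:2} and~\ref{equiv:lemma:3} replace $\pair{\mycal{S}}{\phi\lor\phi'}$ not by one disjunct per DNF clause, but by a left-choice over \emph{all} subsets $\wp$ of position-variables, with the list restricted to $\mycal{S}_{|\wp}$ and, crucially, the disjuncts enumerated in \emph{decreasing cardinality}: on any term the first non-failing disjunct is then the one whose $\wp$ is exactly the set of succeeding positions, so no succeeding sub-strategy is dropped, the fail-as-identity behaviour is reproduced, and the grammar invariant (the conjunction covers exactly the restricted list) is maintained, which is what Lemma~\ref{equiv:lemma:4} needs as input. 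Your hop decomposition and first-index grouping do match Lemma~\ref{equiv:lemma:4}, and your stage-two equivalence is correct in the exact-cover case; the missing idea is the decreasing-cardinality subset enumeration (or an equivalent device resolving the mismatch between positions \emph{present} in the list and positions \emph{mentioned} in the conjunction), without which your two stages do not compose into a proof.
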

\begin{proof}(Sketch)  Firstly, we turn all the Boolean formulas 
of the \ce into formulas in the disjunctive normal form. 
Then  we express  the disjunction in terms of the left-choice strategy. 
Secondly, we turn each position in $\PosSet$ into a secession  of positions in $\mathbb{N}_{\E}$  
by relying  on the fact  that a \ce $@(ip).S$ is equivalent to $@i.(@p.S)$, where $i\in \mathbb{N}_{\E}$  and $p \in \PosSet$.
\end{proof}

\subsection{From \ces  to position-based \ces}
Out of  a \ce and a term  it is possible to construct an position-based  \ce. 
The main purpose of this mapping   is to formulate 
a correctness  criterion  for the  combination
of \ces in terms of position-based \ces. 
\begin{definition}
\label{psi:def} Define the function
$\Psi : \ceSet \times \mycal{T} \longrightarrow \eceSet$,
 that associates to each   closed \ce   $S$ in $\ceSet$ and a term
  $t$  in $\mycal{T}$ an position-based  \ce $\Psi(S,t)$ in $\eceSet$ by
\vspace{-0.1cm}
\begin{align*}
\Psi(\emptylist,t)                                   & = \emptylist.  \\
\Psi(@p. \mbf{\tau},t)                              &=  @p.\mbf{\tau}. \\
\Psi((u,\mbf{\tau}),t)                &=      \begin{cases}
                                                              (\E, \mbf{\tau}) &\textrm{if }  \match{u}{t}, \\
                                                              \emptylist  & \textrm{otherwise}.
                                                             \end{cases} 
\\
\Psi((u,S),t)                                &=    
\begin{cases}
  \Psi(S,t)  & \textrm{if }    \match{u}{t}, \\
  \emptylist & \textrm{otherwise}.
\end{cases}  \\
\Psi(\pair{\bigsqcup_{i\in [n]}@p_i.\mbf{\tau}_i}{\phi},t)    &= \bigsqcup_{i \in [n]}@p_i.\mbf{\tau}_i. \\
\Psi(@p. S,t) &=  @p \cdot \Psi(S,t_{|p}).  
\vspace{-0.22cm}
\end{align*}
\begin{align*}
\Psi(S \oplus S',t) &=   \begin{cases}
                                                      \Psi(S,t) & \textrm{if } \Psi(S,t) \neq \emptylist, \\
                                                      \Psi(S',t) & \textrm{otherwise}.
                                                      \end{cases}   \\
\Psi(\mu X.S(X),t)                            &= \Psi\big(\bigoplus_{i=1,\delta(t)}S^{i}(\emptylist),t\big).
\end{align*}
If $S =  \sqcup_{i \in [n]} @p_i.S_i$,  then  
\begin{align*}
\Psi\big(\bigpair{S}{\phi},t\big)    &= \begin{cases}
                                                                 \bigsqcup_{i\in[n]} @p_i.\eta(\Psi(S_i,t_{|p_i})) &\tif \ev{S}{t} \models \phi, \\
                                                                 \emptylist  & \totherwise.
                                                               \end{cases}
\end{align*}
The application of the position-based  \ce $\Psi(S,t)$ to the term $t$ will be simply  written 
as $\Psi(S,t)(t)$ instead of $\sembrackk{\Psi(S,t)}(t)$.
\end{definition}

It turns out that the  function $\Psi$ (Definition \ref{psi:def}) preserves the semantics
of \ces  in the following sense.

\begin{lemma}
  \label{psi:sem:lemma} 
  For any \ce  $S$ in $\ceSet$ and  any term $t$ in $\mycal{T}$, we have 
  $\sembrackk{S}(t)= \Psi(S,t)(t)$.
\end{lemma}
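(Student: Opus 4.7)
The plan is to prove the lemma by structural induction on the \ce $S$, walking through the same case split used in the definitions of $\Psi$ and $\sembrackk{\cdot}$. The base cases $S=\emptylist$, $S=@p.\mbf{\tau}$ and $S=(u,\mbf{\tau})$ are a direct unfolding: both the semantics and $\Psi(S,t)(t)$ either fail or insert the tuple at the advertised position, and matching is handled identically on both sides. For the ``match-then-run'' case $S=(u,S')$ and the ``jump'' case $S=@p.S'$, one simply applies the induction hypothesis to $S'$ (with the subterm $t_{|p}$ in the jump case, whose depth is strictly smaller, which will matter for the fixed-point case).

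Before attacking the compound cases I would prove a short auxiliary claim: $\sembrackk{S}(t)=\fail$ iff $\Psi(S,t)=\emptylist$, by another induction mirroring the cases of $\Psi$. This bridges the failure conditions used in the two definitions and is what makes the left-choice case $S=S_1\oplus S_2$ immediate: split on whether $\Psi(S_1,t)=\emptylist$, invoke the IH on $S_1$ (and on $S_2$ in the failing branch), and conclude using the definition of $\oplus$. The same auxiliary claim is essential for the tuple case $\bigpair{\sqcup_i@p_i.S_i}{\phi}$: by definition $\ev{S}{t}(\varOf{p_i})=\false$ iff $\sembrackk{@p_i.S_i}(t)=\fail$, which by the claim is iff $\Psi(S_i,t_{|p_i})=\emptylist$, so the guard $\ev{S}{t}\models\phi$ governs both sides equally. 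When $\phi$ holds, the expected equality follows from pushing the IH inside each $@p_i.S_i$ and observing that $\eta$ on the semantics side corresponds exactly to the $\eta$ applied within $\Psi$; the well-foundedness constraints (iii) in Definition \ref{Well-founded:strategy:ext:def} (incomparable $p_i$'s) ensure that the sequential composition on the right-hand side of the semantics and the simultaneous insertions on the position-based side produce the same term regardless of the order of the $@p_i$ steps.

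The fixed-point case $S=\mu X.S(X)$ is where most of the work sits. Here both definitions reduce $\mu X.S(X)$ to the finite left-choice $\bigoplus_{i=1,\delta(t)}S^i(\emptylist)$, so in principle we only need to invoke the induction hypothesis on each unfolding $S^i(\emptylist)$ and the already-handled left-choice case. The subtlety is that $S^i(\emptylist)$ is not a structurally smaller \ce than $S$, so the induction must be carried out on a well-chosen well-founded measure rather than raw subterm structure. The natural choice is the lexicographic pair (depth of $t$, syntactic size of $S$ where fixed-point unfoldings are not counted as growth beyond the depth of $t$): condition (i) of well-foundedness guarantees that every cycle in $S$ crosses a position delimiter, so between two unfoldings of the same fixed-point variable the term under consideration strictly decreases in depth via some $@p.(\cdot)$, which matches the cap $\delta(t)$ in the unfolding. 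This measure, combined with the IH, lets us equate $\sembrackk{\bigoplus_{i=1,\delta(t)}S^i(\emptylist)}(t)$ with $\Psi(\bigoplus_{i=1,\delta(t)}S^i(\emptylist),t)(t)$ and hence with $\Psi(\mu X.S(X),t)(t)$.

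The main obstacle will be making the fixed-point argument rigorous: one must formalise the well-founded measure and verify that every recursive call of $\Psi$ and every recursive step of $\sembrackk{\cdot}$ decreases it, using condition (i) of Definition \ref{Well-founded:strategy:ext:def}. A secondary but more mechanical difficulty is the tuple case, where one has to check that the parallelism condition $p_i\parallel p_j$ really lets the ordered composition $\eta(\sembrackk{@p_n.S_n})\circ\cdots\circ\eta(\sembrackk{@p_1.S_1})$ coincide with the unordered bundle $\bigsqcup_i@p_i.\eta(\Psi(S_i,t_{|p_i}))$ once reinterpreted as a well-founded position-based \ce. Both of these points, once settled, make the remaining cases routine applications of the inductive hypothesis and Definitions \ref{psi:def} and the semantics clauses.
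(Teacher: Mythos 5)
Your overall skeleton matches the paper's proof --- an induction whose cases mirror the clauses of Definition \ref{psi:def}, with failure bridged between the two sides and the fixed-point case reduced to the finite unfolding $\bigoplus_{i=1,\delta(t)}S^{i}(\emptylist)$ --- but your induction measure, which you yourself flag as the main obstacle, does not work as stated, and this is a genuine gap. You propose the lexicographic pair (depth of $t$, syntactic size of $S$ ``where fixed-point unfoldings are not counted as growth''). At the one step where the measure must do real work, namely replacing $\mu X.S(X)$ by $\bigoplus_{i=1,\delta(t)}S^{i}(\emptylist)$, the term $t$ is unchanged and the syntactic size strictly increases, so neither component decreases; the proviso that unfoldings ``are not counted'' is exactly the thing that needs a definition, so the argument is circular. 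Your appeal to condition (i) of Definition \ref{Well-founded:strategy:ext:def} does not repair this: first, the semantics clause for $\mu$ already caps the unfolding at $\delta(t)$ \emph{by definition}, so no per-unfolding term decrease is needed anywhere in this lemma; second, condition (i) only forces each cycle through \emph{some} position, which may be $\E$ (in canonical form positions range over $\mathbb{N}_{\E}$), so the current subterm need not strictly lose depth between two regenerations of a fixed-point variable. The paper's fix is different and simpler: it introduces a depth $\Delta(S)\in\mathbb{N}\times\mathbb{N}$ (Definition \ref{def:Delta:strategy}) whose \emph{first} coordinate counts the nesting of $\mu$, with $\Delta(\mu X.S(X))=(1,0)+\Delta(S(X))$ and every other constructor contributing only $(0,1)$, ordered lexicographically; then $\Delta\big(\bigoplus_{i=1,\delta(t)}S^{i}(\emptylist)\big)<\Delta(\mu X.S(X))$ because the unfolding consumes the outer $\mu$, and the induction hypothesis applies directly, with no reference to the term or to guardedness at this step. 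If you insist on a term-indexed measure, the strategy component dominated by the $\mu$-count --- not the depth of $t$ --- must come first.

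Two smaller points. Your auxiliary claim ($\sembrackk{S}(t)=\fail$ iff $\Psi(S,t)=\emptylist$) is true but should not be a separate prior induction: proved standalone it hits the same measure problem, and it is anyway an immediate consequence of the main induction hypothesis together with the one-line observation that $\Psi(S,t)$, being built from positions of $t$ itself, satisfies $\Psi(S,t)(t)=\fail$ iff $\Psi(S,t)=\emptylist$; this is exactly how the paper handles the left-choice case inline. Finally, in the tuple case no permutation argument via $p_i\parallel p_j$ is needed: $\Psi$ returns the list $\bigsqcup_{i}@p_i.\eta(\Psi(S_i,t_{|p_i}))$ in the \emph{same} order as the composition $\eta(\sembrackk{@p_n.S_n})\circ\cdots\circ\eta(\sembrackk{@p_1.S_1})$, and the semantics of a position-based \ce composes its entries in list order, so the paper simply matches the factors one by one with the induction hypothesis, the guard $\ev{S}{t}\models\phi$ being identical on both sides by construction.
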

The proof of this Lemma  does not provide any difficulties 
since the definition of $\Psi$ is close to the definition of the semantics of \ces.

\begin{lemma}
\label{nice:prop:Psi:lemma}
The function $\Psi$ enjoys the following properties.
\begin{enumerate}[i.)] 
\item \label{item:1} For  any  position-based \ces $E, E'$ in $\eceSet$, we have that 
$E =  E'$ iff $\Psi(E,t)=\Psi(E',t)$  for any  term $t$.
\item \label{item:2} For any  \ces $S,S'$ in $\ceSet$, we have that 
 $S \equiv S'$ iff  $\Psi(S,t)=\Psi(S',t)$  for any  term $t$.  
\end{enumerate}
\end{lemma}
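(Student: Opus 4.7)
The plan is to treat each item as a separate iff and handle the two directions independently, leaning on Lemma \ref{psi:sem:lemma} for the semantic side and on the explicit recursive definition of $\Psi$ on the syntactic side.

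For item (i), the $(\Rightarrow)$ direction is immediate because $\Psi$ is functional in its first argument and equality of position-based \ces is taken up to permutation of parallel positions, which is preserved by $\Psi$. For $(\Leftarrow)$, I would observe that for a position-based \ce $E=[@p_1.\mbf{\tau}_1,\ldots,@p_n.\mbf{\tau}_n]$, one can always exhibit a term $t_E$ (for instance a sufficiently deep complete tree) whose set of positions $\PPos(t_E)$ contains $\{p_1,\ldots,p_n\}$. On such a $t_E$, unfolding the definition of $\Psi$ case by case (the primitive $@p.\mbf{\tau}$ case and the guarded-pair case into which $E$ naturally embeds) yields $\Psi(E,t_E)=E$, since every position is valid and no tail is truncated by the Boolean guard. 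Applying this with a term $t_0$ whose positions cover those of both $E$ and $E'$ simultaneously, the hypothesis forces $E=\Psi(E,t_0)=\Psi(E',t_0)=E'$.

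For item (ii), the $(\Leftarrow)$ direction is a direct consequence of Lemma \ref{psi:sem:lemma}: for every term $t$ one has $\sembrackk{S}(t)=\Psi(S,t)(t)=\Psi(S',t)(t)=\sembrackk{S'}(t)$, whence $S\equiv S'$.

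The main obstacle is the $(\Rightarrow)$ direction of item (ii). Assume $S\equiv S'$ and fix $t$. I would first invoke Lemma \ref{canonical:form:equiv} to bring both \ces into canonical form, so that Boolean formulas appearing inside them are conjunctions of variables in $\varOf{\mathbb{N}_{\epsilon}}$. Then I would argue by induction on $\delta(t)$ (which controls the finite unfolding $\mu X.S(X)=\bigoplus_{i=1,\delta(t)}S^i(\emptylist)$) combined with structural induction on the canonical form of $S$ and $S'$. At each constructor ($\oplus$, $@p.\cdot$, $@p.\mbf{\tau}$, and the guarded pair $\pair{\cdot}{\phi}$), the semantic identity $\sembrackk{S}=\sembrackk{S'}$ forces the same $\oplus$-branch to be chosen on $t$ and the same Boolean valuation of $\phi$ to be produced, so the corresponding clauses of $\Psi$ line up. The delicate point is to show that $\Psi(S,t)$ and $\Psi(S',t)$ cannot disagree on positions or contexts that happen to be ``invisible'' in the final value $\sembrackk{S}(t)$: if they did, one would construct a small perturbation $t^\star$ of $t$ at the disagreement position and, by propagating it through the semantics, obtain $\sembrackk{S}(t^\star)\neq\sembrackk{S'}(t^\star)$, contradicting $S\equiv S'$. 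This reduction of an apparent syntactic disagreement to a semantic one, driven by the canonical form, is the heart of the proof.
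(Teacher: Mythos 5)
Your item (i) and the $(\Leftarrow)$ half of item (ii) are essentially right, but note that item (i) is even more immediate than your covering-term construction suggests: by the clauses $\Psi(@p.\mbf{\tau},t)=@p.\mbf{\tau}$ and $\Psi(\pair{\bigsqcup_{i\in[n]}@p_i.\mbf{\tau}_i}{\phi},t)=\bigsqcup_{i\in[n]}@p_i.\mbf{\tau}_i$, the map $\Psi(\cdot,t)$ is the identity on $\eceSet$ for \emph{every} term $t$ --- no position is checked and no Boolean guard is evaluated in these clauses --- so there is no truncation to worry about and no deep tree $t_0$ is needed; this is exactly why the paper dispatches (i) as ``immediate from the definition of $\Psi$''. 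Your $(\Leftarrow)$ of (ii), chaining $\sembrackk{S}(t)=\Psi(S,t)(t)=\Psi(S',t)(t)=\sembrackk{S'}(t)$ via Lemma \ref{psi:sem:lemma}, is the paper's argument verbatim.

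The $(\Rightarrow)$ half of item (ii) is where your proposal has genuine gaps, and where it diverges from the paper. First, invoking Lemma \ref{canonical:form:equiv} is circular: replacing $S$ by an equivalent canonical $\tilde{S}$ is only useful if $\Psi(S,t)=\Psi(\tilde{S},t)$, which is an instance of the very implication you are proving. Second, your lockstep structural induction presumes that $S$ and $S'$ have matching constructors (``the same $\oplus$-branch is chosen''), but semantic equivalence imposes no such syntactic alignment: one side may be a $\mu$-term and the other a left-choice. Third, and decisively, the perturbation step cannot be carried out in general: since $\eval(\mbf{\tau}\cdot\mbf{\tau})=\eval(\mbf{\tau})$, the \ces $@\E.\mbf{\tau}$ and $@\E.(\mbf{\tau}\cdot\mbf{\tau})$ satisfy $S\equiv S'$ while their $\Psi$-images are syntactically distinct position-based \ces on every term, and no perturbation $t^{\star}$ can separate them semantically --- so an ``invisible'' syntactic disagreement cannot always be converted into a semantic one unless equality in $\eceSet$ is read modulo such $\eval$-collapses (the paper only quotients by permutation of parallel positions). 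The paper's own proof takes a completely different, two-line route: from $S\equiv S'$ and Lemma \ref{psi:sem:lemma} it gets $\Psi(S,t)(t)=\Psi(S',t)(t)$ for all $t$ and then concludes $\Psi(S,t)=\Psi(S',t)$ by appeal to item (i); this avoids your induction entirely, though it quietly elides the same pointwise-versus-syntactic subtlety your ``delicate point'' correctly identifies --- you saw the real difficulty, but the fix you propose does not close it.
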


\section{Unification and combination of \ces}
\label{unification:combination:section}


We define  the combination of \ces (Definition \ref{combination:def}) by means of  their unification (Definition \ref{unif:ces}) together 
with an example. 
The first main result of this section is Theorem \ref{main:theorem:2} that 
guarantees the correctness       of the combination of \ces. 
The correctness is given  in terms of the position-based \ces, it imposes
that the mapping (via the homomorphism $\Psi$ of Definition \ref{psi:def}) 
of the combination  of two \ces is equivalent to the 
combination of their respective  mapping. 
Besides, Theorem \ref{main:theorem:2} is  a consequence of 
Theorem \ref{main:theorem:1} which is more difficult  and proves the same result but for 
the unification of \ces instead of the combination. 
The second main result is the nice algebraic properties of 
the unification and combination of \ces, stated in Proposition \ref{main:prop:2}.
In particular, the combination and unification are associative, 
which is an important property in the applications, and are a congruence


Instead of  unifying/combining  \ces directly, we unify/combine their canonical forms.  
We omit the symmetric cases in the following definition which is given 
by an induction on the \ces by exhibiting all the possible cases.

\begin{definition}[Unification of canonical \ces]
\label{unif:ces}
The unification of \ces  in the canonical form  is a
binary operation $\combb: \ceSetCan \times \ceSetCan \longrightarrow \ceSetCan$ 
inductively defined as follows. 
\vspace{-0.3cm}
\begin{align*}
\emptylist \combb S     &=  \emptylist.  \\
 S \combb \emptylist  & =   \emptylist.    \\
@i.\mbf{\tau} \combb @i.\mbf{\tau}' & =   @i.(\mbf{\tau}\cdot \mbf{\tau}').    \\
@i.\mbf{\tau} \combb @j.\mbf{\tau}' & =  [@i.\mbf{\tau}, @j.\mbf{\tau}'], \tif j \sqsubset  i.  \\
@i.\mbf{\tau}  \combb @i.S & = @i.(@\E.\mbf{\tau} \combb S). \\
@i.\mbf{\tau} \combb @j.S & = [@i.\mbf{\tau}, @j.S,], \tif j \sqsubset i.    \\
(u, \mbf{\tau}) \combb @i.\mbf{\tau}' & = (u,    @\E.\mbf{\tau} \combb @i.\mbf{\tau}'), \;\; \tif i \in  [ar(u)]\cup \set{\E}.    \\
(u,\mbf{\tau}) \combb @i.\mbf{\tau}' & = \emptylist, \tif i \notin [ar{(u)}] \cup \set{\E}.    \\
@i.\mbf{\tau}  \combb (u,S)    &=  (u, (@i.\mbf{\tau}) \combb  S),   \tif [ar{(u)}] \cup \set{\E}.  \\
 @i.\mbf{\tau} \combb (u,S)    &=  \emptylist,  \tif i \notin [ar{(u)}] \cup \set{\E}.  
\end{align*}
\begin{align*}
(u,\mbf{\tau}) \combb (u',S')  &=   (u \land u', (@\E.\mbf{\tau}  \combb  S')). \\
(u,S)          \combb (u',S')  &=   (u \land u', S \combb  S'). 
\end{align*}
For the rest, let   $\mycal{L} = \bigsqcup\limits_{i\in \mycal{I}}@i.S_i$  
and $\mycal{L}'=\bigsqcup\limits_{j\in \mycal{J}}@j.S'_j$. 
\\ Let
$\mycal{L}_1 = \bigsqcup\limits_{i \in \mycal{I} \cap \mycal{J}}@i.(S_i \combb S'_i)$ and 
$\mycal{L}_2 = \bigsqcup\limits_{i \in \mycal{I} \setminus  \mycal{J}}@i.S_i$  and 
$\mycal{L}_3 = \bigsqcup\limits_{i \in \mycal{J} \setminus  \mycal{I}}@i.S'_i$. Define
\begin{align*}
\pair{\mycal{L}}{\phi} \combb \pair{\mycal{L}'}{\phi'}      & = 
 \bigpair{\mycal{L}_1  \sqcup  \mycal{L}_2  \sqcup \mycal{L}_3}{  \phi \land  \phi'}. \\
(u,S)  \combb  \pair{\mycal{L}}{\phi}        & = 
\big(u, S \combb  \pair{\mycal{L}}{\phi} \big). \\
(S_1 \oplus S_2) \combb S & = (S_1 \combb S) \oplus (S_2 \combb S). &
\end{align*}
For the fixed-point \ces, 
\begin{align*}
 \mu X. S(X) \combb \mu X'. S'(X')   &= \mu Z. S''\big(\mu X. S(X),\mu X'. S'(X'),Z \big), \\
 \twhere S''(X,X',Z) &= [S(X) \combb S'(X')]_{|X\combb X':=Z}, \tand \\ 
  &\;\;\;\;  Z \textrm{ is fresh fixed-point variable.}   \\
 (\mu X. S(X)) \combb  S' &= S''(\mu X. S(X)), \\ 
   & \;\;\;\; \twhere    S''(X) = S(X) \combb S'.
\end{align*}
\end{definition}

\noindent \textbf{Comments.} We comment on the key points in Definition \ref{unif:ces}.
The unification of $(u,S)$ with $(u',S')$ is naturally $(u\land u', S \combb S')$ since we want to merge them. 
The idea behind the unification of $\mu X.S(X)$ with $\mu X'.S'(X')$ is to unfold $\mu X.S(X)$ (resp. $\mu X'.S'(X')$) 
to $S(\emptylist) \oplus S\big(\mu X.S(X)\big)$ (resp. $S'(\emptylist) \oplus S'(\mu X'.S'(X'))$) and to combine the resulting \ce.
This is achieved by firstly  unifying $S(X)$ with $S'(X')$, where clearly the fixed-point 
variable $X$ (resp. $X'$) is   free in $S(X)$ (resp. $S'(X')$). The resulting 
\ce $S''(X,X',X\combb X')$ contains  three free fixed-point variables. 
The key point is to view $X \combb X'$ as a fresh fixed-point variable, say $Z$,
and to bind it to the full expression $S''(\mu X.S(X),\mu X'.S'(X'),Z)$, meaning that
 $Z$ corresponds exactly to the \ce that we are defining.     
\begin{example}
Let 
$S(X)   =(u,\mbf{\tau}) \oplus @1.X$ and 
$S'(X') =(u',\mbf{\tau}') \oplus @1.X'$,
 be two \ces.
We  compute $  \mu X. S(X) \combb \mu X'. S'(X')$.
Firstly, the unification $(*)$ of $S(X)$ and $S'(X')$ yields:
\begin{align*}
(*)
& = && S(X) \combb S'(X')   \\
& = && ((u,\mbf{\tau}) \oplus @1.X) \combb  ((u',\mbf{\tau}') \oplus @1.X') \\
& = && ((u,\mbf{\tau}) \combb (u',\mbf{\tau}')) \oplus  (@1.X \combb (u',\mbf{\tau}')) \; \oplus   \\ 
&   && ((u',\mbf{\tau}) \combb @1.X')   \oplus (@1.X \combb @1.X') \\
& = && (u \land u',\mbf{\tau}' \mycdot \mbf{\tau}) \oplus  (u,[@1.X' , @\E.\mbf{\tau}]) \; \oplus \\ 
&   && (u',[@1.X, @\E. \mbf{\tau}'])  \oplus (@1.(X \combb X')).
\end{align*}
Hence,  combination  of $\mu X. S(X)$ and  $\mu X'. S'(X')$ is 
\begin{align*}
\mu X. S(X) \combb \mu X'. S'(X') =  & \mu Z.\big((u \land u',\mbf{\tau}'\hspace{-0.10cm} \cdot \mbf{\tau})    \oplus  \\
&   \;\;\;\;  (u,[@1.(\mu X'. S'(X')) , @\E.\mbf{\tau}]) \oplus \\
&   \;\;\;\;  (u',[@1.(\mu X. S(X)), @\E. \mbf{\tau}'])  \;\oplus  \\
&   \;\;\;\;  (@1.Z) \big).
\end{align*}
\end{example}

\begin{definition}[Combination  of canonical  \ces]
\label{combination:def}
The combination of \ces in the canonical form is a binary operation $\comb: \ceSetCan \times \ceSetCan \longrightarrow \ceSetCan$,
defined  for any  ${S}$ and ${S}'$ in $\ceSetCan$ by 
${S} \comb  {S}' \uberEq{def}  ({S} \combb  {S}') \oplus  {S} \oplus  {S}'$.
\end{definition}
The unification and combination of \ces can be defined in terms of their canonical form.
\begin{definition}[Unification and combination of \ces]
Let $S,S'$ be two \ces in $\ceSet$ and $\tilde{S},\tilde{S}' \in \ceSetCan$ their canonical form, respectively. 
The unification (resp. combination) of $S$ and $S'$ is defined by $S \combb S' \uberEq{def} \tilde{S} \combb \tilde{S}'$ 
(resp. $S \comb S' \uberEq{def} \tilde{S} \comb \tilde{S}'$).
\end{definition}


Now we are ready to state  the main results of this paper: 
the unification (Theorem \ref{main:theorem:1})
and combination (Theorem \ref{main:theorem:2}) of \ces are correct.
\begin{theorem}
\label{main:theorem:1}
For every term $t \in \mycal{T}$ and  for every \ces  $S$ and $S'$ in the canonical form in $\ceSetCan$, 
we have that 
\begin{align*}
\Psi(S \combb S',t)  =  \Psi(S,t) \combb \Psi(S',t).
\end{align*}
\end{theorem}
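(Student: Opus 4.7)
The plan is to proceed by well-founded induction following the case analysis of Definition \ref{unif:ces}, measuring pairs $(S,S')$ by the lexicographic pair consisting of the total depth of fixed-point unfolding permitted by $t$ (bounded by $\delta(t)$, by well-foundedness of \ces) together with the syntactic size of the pair. For the fixed-point case I would perform a secondary induction on $\delta(t)$, exploiting the fact that Definition \ref{psi:def} unfolds $\mu X.S(X)$ only $\delta(t)$ times and that every cycle in a well-founded \ce passes through a position delimiter, so each recursive call of $\Psi$ lands on a strict subterm.

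\textbf{Easy cases.} First I would dispatch the null cases: when either operand is $\emptylist$, both sides collapse to $\emptylist$ directly from the definitions. For the pure position cases $@i.\mbf{\tau} \combb @i.\mbf{\tau}'$ and $@i.\mbf{\tau} \combb @j.\mbf{\tau}'$ (with $j \sqsubset i$), the identity follows from the matching clauses of Definition \ref{unif:posi:non:empty:def}: $\Psi$ on a positional atom is the atom itself (possibly prefixed by jumps), and the definition of position-based $\combb$ was designed precisely to concatenate the context tuples at a shared position. The mixed cases such as $@i.\mbf{\tau} \combb @i.S$ and $@i.\mbf{\tau} \combb (u,S)$ reduce via the recursive clause of $\Psi$ (namely $\Psi(@p.S,t) = @p\cdot\Psi(S, t_{|p})$) to the induction hypothesis applied at the subterm $t_{|i}$. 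For the pattern cases $(u,\mbf{\tau}) \combb (u',S')$ and $(u,S) \combb (u',S')$, the key observation is that $\match{u \land u'}{t}$ iff $\match{u}{t}$ and $\match{u'}{t}$, by the universal property of the mgu; thus $\Psi$ on the combined pattern agrees with combining the $\Psi$ values componentwise. The arity side-conditions (e.g.\ $(u,\mbf{\tau}) \combb @i.\mbf{\tau}' = \emptylist$ when $i\notin[ar(u)]\cup\{\E\}$) translate to $\emptylist$ on the $\Psi$ side because the position $i$ lies outside $\PPos(t)$ whenever $\match{u}{t}$.

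\textbf{Parallel-branching and left-choice.} For the clause $\pair{\mycal{L}}{\phi}\combb\pair{\mycal{L}'}{\phi'}$ with parallel positions, I would unfold $\Psi$ on both sides: each side evaluates into a $\sqcup_i @p_i.\eta(\Psi(S_i,t_{|p_i}))$ list, and positions appearing in $\mycal{I}\cap\mycal{J}$ combine recursively via the induction hypothesis, while positions unique to one list are copied verbatim, exactly matching the three sublists $\mycal{L}_1,\mycal{L}_2,\mycal{L}_3$ of Definition \ref{unif:ces}. The Boolean condition becomes $\phi\wedge\phi'$, which holds under the merged valuation iff each conjunct held under its own valuation. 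For $(S_1\oplus S_2)\combb S = (S_1\combb S)\oplus (S_2\combb S)$, the clause for $\oplus$ in $\Psi$ returns the first non-$\emptylist$ result, so after applying the induction hypothesis the two sides agree by case-split on whether $\Psi(S_1,t)$ is $\emptylist$.

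\textbf{The main obstacle: the fixed-point case.} The hardest case is $\mu X.S(X)\combb \mu X'.S'(X')=\mu Z. S''(\mu X.S(X),\mu X'.S'(X'),Z)$, because $\Psi(\mu X.S(X),t)=\Psi(\bigoplus_{i=1,\delta(t)}S^i(\emptylist),t)$ replaces the fixed-point by a $\delta(t)$-deep unfolding. Here I would proceed by induction on $\delta(t)$, using the algebraic identity (justified by the distributive clause above) that $\bigl(\bigoplus_i A_i\bigr) \combb \bigl(\bigoplus_j B_j\bigr) = \bigoplus_{i,j}(A_i \combb B_j)$, together with the fact that each recursive occurrence of $X\combb X'$ (renamed to $Z$) in $S''$ sits under a position delimiter, so when $\Psi$ descends into $t_{|p}$ the induction hypothesis applies with strictly smaller depth $\delta(t_{|p})<\delta(t)$. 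The technical core is to verify that the substitution $[X\combb X':=Z]$ commutes with $\Psi$-unfolding, i.e.\ that unfolding $\mu Z.S''(\ldots,Z)$ to depth $\delta(t)$ reproduces exactly the cross-product of unfoldings of $\mu X.S(X)$ and $\mu X'.S'(X')$. The mixed clause $(\mu X.S(X))\combb S' = S''(\mu X.S(X))$ is a simpler specialisation of the same argument, with $S'$ acting as a leaf in the cross-product.
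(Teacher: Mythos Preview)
Your case analysis tracks the paper's, and the non-fixed-point cases go through essentially as you sketch. The substantive difference is the well-founded measure. The paper does \emph{not} bring $\delta(t)$ into the induction at all: it runs a double induction on the pair $(\Delta(S),\Delta(S'))$, where $\Delta:\ceSet\to\mathbb{N}\times\mathbb{N}$ is a purely syntactic depth on strategies whose \emph{first} lexicographic coordinate counts the nesting of $\mu$-binders (appendix, Definition~\ref{def:Delta:strategy}). For the $\mu$--$\mu$ case the paper simply unfolds each side to $\bigoplus_{i\le\delta(t)}S^i(\emptylist)$ and $\bigoplus_{j\le\delta(t)}S'^{\,j}(\emptylist)$; one outer $\mu$ disappears, so the first coordinate of $\Delta$ drops and the \emph{outer} induction hypothesis applies immediately to the unfolded pair --- no secondary induction on $\delta(t)$, no operational tracking of how $\Psi$ reaches $Z$ through a position delimiter. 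Your $\delta(t)$-based route can be made to work, but it is more delicate: after unfolding, the syntactic size blows up, so you are forced to reason about the traversal of $S''$ and about which strict subterm of $t$ each residual occurrence of $\mu X.S(X)$, $\mu X'.S'(X')$, $Z$ is evaluated on. The paper's $\Delta$ makes the decrease visible at the level of syntax and collapses the fixed-point case to a direct IH call.

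A smaller point: you describe the target as the full cross-product $\bigoplus_{i,j}(S^i(\emptylist)\combb S'^{\,j}(\emptylist))$, whereas the paper goes through the \emph{diagonal}, using the recursion $S^i(\emptylist)\combb S'^{\,i}(\emptylist)=S''\bigl(S^{i-1}(\emptylist),\,S'^{\,i-1}(\emptylist),\,S^{i-1}(\emptylist)\combb S'^{\,i-1}(\emptylist)\bigr)$ to identify the $\delta(t)$-fold unfolding of $\mu Z.S''(\ldots,Z)$ with $\bigoplus_i\bigl(S^i(\emptylist)\combb S'^{\,i}(\emptylist)\bigr)$. Under the left-choice semantics of $\oplus$ the two agree after applying $\Psi$, but the diagonal is what Definition~\ref{unif:ces} hands you directly and is what the paper actually manipulates.
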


\begin{theorem}
\label{main:theorem:2}
For every term $t \in \mycal{T}$, for every \ces $S$ and $S'$ in the canonical form in 
$\ceSetCan$, we have that 
\begin{align*}
\Psi(S \comb S',t)  = \Psi(S,t) \comb \Psi(S',t).
\end{align*}
\end{theorem}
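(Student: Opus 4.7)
The plan is to reduce Theorem \ref{main:theorem:2} to Theorem \ref{main:theorem:1}, exploiting the fact that combination on \ces is defined from unification via $S \comb S' = (S \combb S') \oplus S \oplus S'$, whereas combination on position-based \ces is defined by a case split on whether its arguments equal the failing position-based \ce $\emptylist$. Once Theorem \ref{main:theorem:1} is in hand, the remaining work is essentially bookkeeping, and no new induction on the structure of $S$ or $S'$ should be required.

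First I would unfold the left-hand side as $\Psi(S \comb S', t) = \Psi((S \combb S') \oplus S \oplus S', t)$ and push $\Psi$ through the outer $\oplus$'s using the clause of Definition \ref{psi:def} for left-choice. Applying Theorem \ref{main:theorem:1} rewrites $\Psi(S \combb S', t)$ as $\Psi(S, t) \combb \Psi(S', t)$, reducing the whole identity to a purely combinatorial question about position-based \ces.

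Next I would perform a four-way case split on whether $\Psi(S, t)$ and $\Psi(S', t)$ are equal to $\emptylist$. When both are non-empty, $\Psi(S,t) \combb \Psi(S',t)$ is itself non-empty by Definition \ref{unif:posi:non:empty:def}, so the outer $\oplus$ selects its first branch, and the result matches $\Psi(S,t) \comb \Psi(S',t)$ by the first clause of Definition \ref{comb:posi:def}. When exactly one of them is empty, Theorem \ref{main:theorem:1} forces $\Psi(S \combb S', t) = \emptylist$, so the outer $\oplus$ falls through and returns whichever of $\Psi(S,t)$ or $\Psi(S',t)$ is non-empty, again matching the corresponding clause of $\comb$ on $\eceSet$. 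When both are empty, every branch of $\oplus$ yields $\emptylist$, as does the right-hand side by the last clause of $\comb$.

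The main obstacle is really located in Theorem \ref{main:theorem:1}; once that is proved, Theorem \ref{main:theorem:2} is a direct reading of the definitions. The only subtlety worth checking along the way is that the sub-expressions $S \combb S'$, $S$, and $S'$ appearing inside the outer $\oplus$ all remain in the canonical form $\ceSetCan$, so that Theorem \ref{main:theorem:1} does indeed apply as stated to every invocation we make of it.
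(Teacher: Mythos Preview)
Your proposal is correct and follows essentially the same approach as the paper: unfold $\comb$ on \ces, push $\Psi$ through $\oplus$, apply Theorem~\ref{main:theorem:1}, and then recognize the result as $\comb$ on position-based \ces. The paper compresses your four-way case split into a single appeal to the definition of $\comb$ on $\eceSet$; your explicit case analysis is exactly what is needed to justify that final step, and your remark about $S \combb S'$ remaining in $\ceSetCan$ is a legitimate point of hygiene that the paper leaves implicit.
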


Since each \ce  can  be turned into an equivalent \ce in the canonical form (Lemma \ref{canonical:form:equiv})
and since the image of two equivalent \ces under 
 the homomorphism  $\Psi$ is identical (Item \ref{item:2}. of Lemma \ref{nice:prop:Psi:lemma}), then 
Theorems \ref{main:theorem:1} and \ref{main:theorem:2} hold  for the class of \ces as well.

Besides, thanks  to the fact that the function $\Psi$ is an  homomorphism (in the first argument),
one can transfer   all the properties of the combination and unification of  position-based \ces  (stated 
in Proposition \ref{main:prop:elemntary:og:prop}) to   \ces.
\begin{proposition}
\label{main:prop:2}
The following hold.
\begin{enumerate}
  \item The set $\ceSet$ of  \ces  together with the unification and combination  operations enjoy the following properties.
    \begin{enumerate}
    \item The neutral element of the unification and combination  is $@\E.\square$. 
    \item Every   \ce   $S$ is idempotent for the unification and combination, i.e.     $S \combb S = S$ and $S \comb S = S$.
    \item The unification and combination of \ces  are associative. 
    \end{enumerate}
\item The unification and combination of \ces  is non commutative.
\item For any  \ces $S$ and $S'$ in $\ceSet$, and for any term $t$ in $\mycal{T}$,  we have that 
\vspace{-0.1cm}
\begin{align*}
\Psi(S \combb S',t)=\emptylist  &&\tiff && \Psi(S,t)=\emptylist \;\tor\; \Psi(S',t)=\emptylist. \\
\Psi(S \comb S',t)=\emptylist  &&\tiff && \Psi(S,t)=\emptylist \;\textrm{and }\; \Psi(S',t)=\emptylist.
\end{align*}
\item  The unification and combination of \ces is a congruence, that is, 
   for any \ces   $S_1,S_2, S$ in $\ceSet$, we have that: 
\begin{align*} \textrm{If } S_1 \equiv S_2 &&\tthen&&  S_1 \combb S   &\equiv S_2 \combb S  \;\tand\; \\ 
                                           &&   &&  S \combb  S_1  &\equiv S \combb S_2.\\
\textrm{If } S_1 \equiv S_2  &&\tthen &&  S_1 \comb S  &\equiv S_2 \comb S  \; \tand\;  \\
                             &&       &&  S \comb S_1  &\equiv S \comb S_2.
\end{align*} 
\end{enumerate}
\end{proposition}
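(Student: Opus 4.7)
The plan is to reduce every claim in the proposition to the analogous statement for position-based \ces in Proposition~\ref{main:prop:elemntary:og:prop}, using the homomorphism identities of $\Psi$ established in Theorems~\ref{main:theorem:1} and~\ref{main:theorem:2} together with the characterization of semantic equivalence from Lemma~\ref{nice:prop:Psi:lemma}(ii): $S\equiv S'$ iff $\Psi(S,t)=\Psi(S',t)$ for every $t\in\mycal{T}$. In other words, the proof is a straightforward transfer of algebraic laws along the homomorphism $\Psi(\cdot,t)$, uniformly in $t$.

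For item~1 the common pattern is to fix an arbitrary term $t$ and push $\Psi$ through the outermost operator. For the neutral element I would compute
\[
\Psi(S\combb @\E.\square,t)=\Psi(S,t)\combb\Psi(@\E.\square,t)=\Psi(S,t)\combb @\E.\square=\Psi(S,t),
\]
using Theorem~\ref{main:theorem:1} and then Proposition~\ref{main:prop:elemntary:og:prop}(1a); Lemma~\ref{nice:prop:Psi:lemma}(ii) converts this pointwise equality into $S\combb @\E.\square\equiv S$, and symmetrically for $\comb$. Idempotence is handled the same way: $\Psi(S\combb S,t)=\Psi(S,t)\combb\Psi(S,t)=\Psi(S,t)$. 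Associativity requires one extra layer of unfolding: $\Psi((S_1\combb S_2)\combb S_3,t)$ rewrites to $\Psi(S_1,t)\combb\Psi(S_2,t)\combb\Psi(S_3,t)$ with either bracketing, thanks to associativity in $\eceSet$. Item~2 is inherited verbatim: any pair $E_1,E_2\in\eceSet$ witnessing non-commutativity in Proposition~\ref{main:prop:elemntary:og:prop}(2) is still available as a pair of \ces, and the homomorphism identity exhibits a term $t$ on which $\Psi(E_1\combb E_2,t)\neq\Psi(E_2\combb E_1,t)$.

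For item~3 the failure clauses in Definitions~\ref{unif:posi:non:empty:def} and~\ref{comb:posi:def} give, at the position-based level, $E\combb E'=\emptylist$ iff $E=\emptylist$ or $E'=\emptylist$, and $E\comb E'=\emptylist$ iff $E=\emptylist$ and $E'=\emptylist$; applying $\Psi$ via the homomorphism identities transports both equivalences to $\ceSet$. For item~4, if $S_1\equiv S_2$ then $\Psi(S_1,t)=\Psi(S_2,t)$ for every $t$, whence
\[
\Psi(S_1\combb S,t)=\Psi(S_1,t)\combb\Psi(S,t)=\Psi(S_2,t)\combb\Psi(S,t)=\Psi(S_2\combb S,t),
\]
and Lemma~\ref{nice:prop:Psi:lemma}(ii) yields $S_1\combb S\equiv S_2\combb S$; the three remaining congruence equalities, as well as the $\comb$ variants of every item, are obtained by the same template. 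The main obstacle in this proposition is therefore essentially nil once Theorems~\ref{main:theorem:1} and~\ref{main:theorem:2} are at hand: the whole statement is a transfer argument, and the only verification not purely syntactic is that $\Psi(@\E.\square,t)=@\E.\square$, which follows directly from Definition~\ref{psi:def}.
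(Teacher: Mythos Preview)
Your proposal is correct and follows essentially the same approach as the paper: the paper's proof explicitly spells out only associativity and the congruence clause, but in both cases it proceeds exactly as you do---push $\Psi$ through the operator via Theorems~\ref{main:theorem:1} and~\ref{main:theorem:2}, invoke the corresponding property for position-based \ces{} from Proposition~\ref{main:prop:elemntary:og:prop}, and convert the resulting pointwise equality back to $\equiv$ via Lemma~\ref{nice:prop:Psi:lemma}(\ref{item:2})---then dismisses the remaining items as similar. Your write-up is in fact more complete than the paper's, since you also make the item~3 argument explicit by reading the failure characterizations directly off Definitions~\ref{unif:posi:non:empty:def} and~\ref{comb:posi:def}.
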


\section{Conclusion and future work}
We  addressed the  problem of extension and combination of
proofs encountered in the field of computer aided asymptotic model derivation.
We identified a class of rewriting strategies of which the operations of unification and 
combination were defined and proved correct.  
The design of this class is inspired by the $\mu $-calculus formalism \cite{rudimemt:mu-calculus:book}.
On the other hand  we  use of the fixed-point operator which  is finer and more   powerful
than the \texttt{repeat} constructor used e.g. in \cite{Cirstea:Rew:Staretgies:03}. 
  
The \ces are indeed modular  in the sense that they   navigate in the tree without modifying it,  then 
they insert contexts. This makes our  formalism flexible since it allows one to modify and enrich the navigation part and/or the insertion
part without disturbing the set-up.

Although the \ces can be viewed as a finite algebraic representation  of infinite trees \cite{COURCELLE-infinite-trees:83},   
our technique of the unification and combination involving  $\mu$-terms and their unfolding is new.  
Therefore, we envision consequences of these
results on the study of the syntactic (or modulo a theory)
unification and the pattern-matching of infinite trees once the
infinite trees are expressed as  $\mu $-terms in the same
way we expressed the \ces. Thus, a rewriting
language that transforms algebraic infinite trees can be elaborated.


\newcommand{\etalchar}[1]{$^{#1}$}

\newpage

\section*{Appendix A: detailed proofs for reviewers} 

\section{Proofs for preliminary section \ref{Preliminaries}}

If $\pi = \set{\varOf{p}_1,\ldots,\varOf{p}_n}$ is a set of variable-positions, then we shall write
$\bigand \pi$ (resp. $\bigor \pi$) for the Boolean formula $\varOf{p}_1 \land \ldots \land \varOf{p}_n$ 
(resp. $\varOf{p}_1 \lor  \ldots \lor \varOf{p}_n$). In particular, 
$\bigand \emptyset  = \bigor \emptyset = \false$.

\begin{fact}
\label{trivial:fact:unification}
Let $u,t$ be two terms and $\gamma, \gamma'$ two substitutions.
We have that,
if $\match{\gamma(u)}{t}$ and $\gamma$ is subsumed by $\gamma'$, then
$\match{\gamma'(u)}{t}$ as well.
\end{fact}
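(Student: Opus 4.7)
The plan is to unfold the definitions involved and then compose substitutions. Recall the paper's convention fixed in the definition of the mgu: any unifier $\gamma'$ of two terms is \emph{subsumed by} the mgu $\gamma$, and $\gamma$ is most general, so ``subsumed by'' must mean ``obtained by further substitution''. Hence the hypothesis that $\gamma$ is subsumed by $\gamma'$ unfolds to the existence of a substitution $\delta$ with $\gamma = \delta \circ \gamma'$. Applying both sides to the term $u$ gives the identity $\gamma(u) = \delta(\gamma'(u))$, which is the only algebraic content needed.

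Next I would use the matching hypothesis $\match{\gamma(u)}{t}$. By definition of matching, there exists a (unique) substitution $\sigma$ such that $\sigma(\gamma(u)) = t$. Substituting the identity above yields $\sigma(\delta(\gamma'(u))) = t$, which, after regrouping as $(\sigma \circ \delta)(\gamma'(u)) = t$, provides a substitution witnessing $\match{\gamma'(u)}{t}$. This gives the conclusion.

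I do not foresee any serious obstacle: the statement is essentially a sanity check that the matching relation is monotone with respect to subsumption, and the proof reduces to one composition of substitutions. The only point to handle carefully is the direction of the subsumption relation, which I would pin down explicitly at the start by referring back to the definition of most general unifier given in the preliminaries, so that the reader is not left guessing which of the two standard conventions is in force.
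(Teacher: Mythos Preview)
Your argument is correct. The paper does not actually supply a proof of this statement: it is recorded as a \emph{Fact} in the appendix and left unjustified, then invoked once in the proof of the subsequent lemma on matching and unification. Your unfolding of ``$\gamma$ is subsumed by $\gamma'$'' as $\gamma = \delta \circ \gamma'$ for some $\delta$ is consistent with the paper's convention (the mgu is the substitution that subsumes every other unifier), and the composition $(\sigma \circ \delta)(\gamma'(u)) = t$ is exactly the witness needed. There is nothing to compare against on the paper's side; your write-up simply fills in what the authors left implicit.
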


\begin{lemma}
\label{unif:matching:lemma:annex}
Let $u,u',t$ be terms in $\mycal{T}$. Then,
\begin{align*}
\match{(u \land u')}{t}  && \tiff && \match{u }{t} \tand \match{u'}{t}.
\end{align*}
\end{lemma}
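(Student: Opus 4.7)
The plan is to prove the two implications separately, both times invoking Fact~\ref{trivial:fact:unification}. Throughout, let $\gamma$ denote the mgu of $u$ and $u'$, so that by definition of the notation $u \land u'$, one has $u \land u' = \gamma(u) = \gamma(u')$. Note that the proof has to carefully justify that the mgu exists in the $(\Leftarrow)$ direction, since $u \land u'$ is only defined when $u$ and $u'$ are unifiable.

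For the $(\Rightarrow)$ direction, suppose $\sigma$ witnesses $\match{(u \land u')}{t}$, i.e., $\sigma(u \land u') = t$. Then the substitution $\sigma \circ \gamma$ satisfies
\begin{align*}
(\sigma \circ \gamma)(u) = \sigma(\gamma(u)) = \sigma(u \land u') = t,
\end{align*}
so $\match{u}{t}$, and symmetrically $(\sigma \circ \gamma)(u') = t$ yields $\match{u'}{t}$. This direction is immediate and does not even require the mgu property beyond the definition of $u \land u'$.

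For the $(\Leftarrow)$ direction, let $\sigma, \sigma'$ be the witnesses of $\match{u}{t}$ and $\match{u'}{t}$. I would first rename the variables of $u'$ so that $\var{u}\cap \var{u'} = \emptyset$, adjusting $\sigma'$ accordingly; this is harmless because matching witnesses are unique on the domain of interest. Then define a substitution $\sigma''$ that coincides with $\sigma$ on $\var{u}$, with $\sigma'$ on $\var{u'}$, and is the identity elsewhere. By construction,
\begin{align*}
\sigma''(u) = \sigma(u) = t = \sigma'(u') = \sigma''(u'),
\end{align*}
so $\sigma''$ is a unifier of $u$ and $u'$. Hence $u$ and $u'$ are unifiable, their mgu $\gamma$ exists, and by the mgu property, $\sigma''$ is subsumed by $\gamma$. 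Since trivially $\match{\sigma''(u)}{t}$ (with witness the identity, because $\sigma''(u) = t$), Fact~\ref{trivial:fact:unification} applied with $\sigma''$ in the role of the ``less general'' substitution and $\gamma$ in the role of $\gamma'$ yields $\match{\gamma(u)}{t}$, i.e., $\match{(u\land u')}{t}$, as required.

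The only mildly delicate point is the bookkeeping around variable-disjointness and the construction of the common unifier $\sigma''$; once that is in place, the rest is a direct invocation of the universal property of the mgu together with Fact~\ref{trivial:fact:unification}. I do not expect a deeper obstacle here, and the lemma should follow cleanly in a few lines.
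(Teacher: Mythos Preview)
Your argument follows essentially the same route as the paper: the $(\Rightarrow)$ direction is handled identically, and for $(\Leftarrow)$ both you and the paper build a single substitution unifying $u$ and $u'$ from the two matching witnesses and then invoke Fact~\ref{trivial:fact:unification} with the mgu. The only difference is in how that common unifier is obtained---the paper decomposes $\sigma,\sigma'$ along the intersection of their domains and argues they must agree there, whereas you rename the variables of $u'$ to force disjointness before merging---but note that this renaming is not quite ``harmless'' in general, since it can alter $u\land u'$ when $u$ and $u'$ share variables; the lemma (and both arguments) tacitly relies on $\var{u}\cap\var{u'}=\emptyset$ from the start.
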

\begin{proof}
For the direction $(\Rightarrow)$, let $\gamma$ be the most general unifier of $u$ and $u'$, and 
$\alpha$ be a substitution such that $\alpha(u\land u')=t$.
This means that  $\alpha(\gamma(u))=t$ and $\alpha(\gamma(u'))=t$.
That is, $\match{\gamma(u)}{t}$ and $\match{\gamma(u')}{t}$. 

For the direction $(\Leftarrow)$, let $\sigma$ and $\sigma'$ be substitutions 
such that $\sigma(u)=t$ and $\sigma'(u')=t$.  
Consider the decomposition  $\sigma=\sigma_1 \uplus \sigma_2$ and $\sigma'=\sigma'_1 \uplus \sigma'_2$, 
where $\Dom(\sigma_1)\cap \Dom(\sigma'_1) = \emptyset$ and $\Dom(\sigma_2)=\Dom(\sigma'_2)$. 
Since $\sigma(u)=\sigma'(u')$, it follows that 
$\sigma_2=\sigma'_2$.  But this  means that $\sigma_2(u)=\sigma_2(u')$,  and  $\match{\sigma_2(u)}{t}$.
In other words, $u$ and $u'$ can be unified. Let $\gamma$ be the most general unifier 
of $u$ and $u'$.  But since $\match{\sigma_2(u)}{t}$ and $\sigma_2$ is subsumed by $\gamma$, then it follows 
from  Fact \ref{trivial:fact:unification} that $\match{\gamma(u)}{t}$.
\end{proof}


\section{Proofs and formal definitions for section \ref{Implement:by:strategies:sec}}

\subsection{Set of Boolean formulas and positions   of a \ce}

\begin{definition}[Set of Boolean formulas and positions   of a \ce]
\label{def:set:of:pozich:formulas:in:ce}
The set of Boolean formulas   (resp. positions) of an \ce $S$, denoted by $\Phi(S)$ (resp. $\PPos(S)$), is inductively defined by
\begin{align*}
\Phi(@p.\mbf{\tau} )     &= \emptyset                 & \PPos(@p.\mbf{\tau} )     &= \set{p} \\
\Phi((u,\mbf{\tau}))        &= \emptyset                 & \PPos((u,\mbf{\tau}))        &= \emptyset \\ 
\Phi(X)                     &= \emptyset                 &  \PPos(X)                     &= \emptyset \\
\Phi((u,S))                 & = \Phi(S)                  & \PPos((u,S))                 & = \PPos(S)\\ 
\Phi\big(\bigoplus_{i\in[n]} S_i\big) & = \bigcup_{i\in[n]}\Phi(S_i) & \PPos\big(\bigoplus_{i\in[n]} S_i\big) & = \bigcup_{i\in[n]}\PPos(S_i) \\
\Phi(\pair{\bigsqcup_{i\in [n]} @p_i.S_i}{\phi}) & = \set{\phi} \cup  \bigcup_{i\in[n]}\Phi(S_i) & \PPos(\pair{\bigsqcup_{i\in [n]} @p_i.S_i}{\phi}) & = \bigcup_{i\in[n]} \set{p_i} \cup  \PPos(S_i)\\
\Phi(\mu X.S(X)) & = \Phi(S(X))  & \PPos(\mu X.S(X)) & = \PPos(S(X))
\end{align*}
\end{definition}

\subsection{Depth of \ces}
Taking into account that the structure of a \ce
is no longer a tree but  a tree with back-edges 
that may contains cycles, 
we slightly modify the standard  measure  of 
the depth of trees in order to capture both the number of nested 
loops, caused by the nested application  of the constructor $\mu$, and the 
distance from the root of the tree to the leaves. 
Many  proofs will be done by  induction   with respect to this measure.

\begin{definition}[Depth  of a  \ce]
\label{def:Delta:strategy}
The  depth  of an \ce   is function $\Delta: \ceSet  \longrightarrow  \mathbb{N} \times \mathbb{N}$ defined inductively as follows.
\begin{align*}
\Delta(@p.\mbf{\tau} )     &= (0,0) \\
\Delta((u,\mbf{\tau}))        &= (0,0) \\ 
\Delta(X)                     &= (0,0)   \\
\Delta((u,S))                 & = (0,1) + \Delta(S)  \\ 
\Delta(S_1\oplus \ldots \oplus  S_n) & = (0,1) + max(\Delta(S_1), \ldots, \Delta(S_n))  \\
\Delta(\pair{@p_n.S_1,\cdots,  @p_n.S_n}{\phi}) & =  (0,1) + max(\Delta(S_1), \ldots, \Delta(S_n)) \\
\Delta(\mu X.S(X)) & = (1,0) + \Delta(S(X))  
\end{align*}
We shall denoted by $<,\le$ and $>,\ge$ the related lexicographic orders  on $\mathbb{N}\times \mathbb{N}$.
\end{definition}
Notice that if a \ce $S$ is iteration-free, i.e. it does not contain the 
constructor $\mu$,  then its depth  $\Delta(S)=(0,n)$, for some $n \in \mathbb{N}$.

\subsection{Canonical form of  \ces}

\begin{lemma}
\label{canonical:form:equiv:annex}(i.e. Lemma \ref{canonical:form:equiv})
Any  \ce can be turned into an equivalent \ce in the canonical form.
\end{lemma}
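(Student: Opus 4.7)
The plan is to proceed by structural induction on the \ce $S$, using the depth measure $\Delta(S)$ from Definition \ref{def:Delta:strategy} to accommodate the fixed-point constructor. The base cases $\emptylist$, $X$, $@p.\mbf{\tau}$, and $(u,\mbf{\tau})$ carry no Boolean formulas, so only the position decomposition applies, while the compositional constructors $(u,S)$, $S_1\oplus S_2$, $@p.S$, and $\mu X.S(X)$ reduce to canonicalizing their immediate substrategies; for $\mu X.S(X)$ one exploits the fact that its semantics coincides with that of the finite left-choice $\bigoplus_{i=1,\delta(t)} S^i(\emptylist)$, which has strictly smaller $\Delta$-measure.

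The first key step is to handle Boolean formulas. I would rewrite every $\phi \in \bool(\PosSetVar)$ in disjunctive normal form $\bigvee_j C_j$, each $C_j$ being a conjunction of position-variables, and then invoke the equivalence
\[
\pair{L}{\bigvee_j C_j} \equiv \bigoplus_j \pair{L}{C_j}.
\]
This equivalence is justified by observing that the pre-evaluation $\ev{L}{t}$ depends only on $L$ and $t$, so $\nu \models \bigvee_j C_j$ iff $\nu \models C_j$ for some $j$; a straightforward case analysis then matches the left-choice semantics of $\oplus$. The second key step is the identity $@(i\cdot p).S \equiv @i.(@p.S)$, verified directly from the semantics using $t_{|i\cdot p} = (t_{|i})_{|p}$ and $t[s]_{i\cdot p} = t[t_{|i}[s]_p]_i$. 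Applying this transformation iteratively inside each occurrence of $@p.\mbf{\tau}$ and $@p.S$, I decompose any position in $\PosSet$ into a chain of single-integer jumps, ensuring $\PPos(S) \subseteq \mathbb{N}_{\E}$.

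The delicate case is the list $[@p_1.S_1,\ldots,@p_n.S_n]$ inside a $\pair{L}{C}$ after DNF reduction, because the canonical form also forces every position-variable appearing in $C$ to lie in $\varOf{\mathbb{N}}_{\E}$. Writing $p_k = i_k\cdot p'_k$, when the first coordinates $i_k$ are pairwise distinct one simply rewrites each entry as $@i_k.(@p'_k.S_k)$ and substitutes $\varOf{p_k}$ by $\varOf{i_k}$ in $C$; when several $p_k$ share a common first coordinate $i$, I would group their entries into a nested block $@i.\pair{[@p'_k.S_k]_{k\in J_i}}{C'_i}$ and let the outer Boolean formula refer to the block through a single variable $\varOf{i}$, recursing canonicalization on the inner strategy. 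I expect the main obstacle to be the bookkeeping of this grouping: one must verify, using the semantic identity ``$\sembrackk{@i.S'}(t) \neq \fail$ iff $i\in\PPos(t)$ and $\sembrackk{S'}(t_{|i}) \neq \fail$'', that the block-level success of $@i.(\cdots)$ faithfully tracks the joint success (and joint effect) of the inner entries on $t_{|i}$ versus the flat entries on $t$, so that the rewritten Boolean formula is logically equivalent to $C$ under every admissible pre-evaluation. Well-foundedness (Definition \ref{Well-founded:strategy:ext:def}), in particular the parallelism of the $p_k$'s and the shape of the $q_j$-prefixes, guarantees that each regrouping step is legal and that the whole canonicalization procedure terminates.
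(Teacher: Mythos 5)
Your overall plan---first normalize the Boolean formulas via DNF, then flatten positions using $@(i\cdot p).S \equiv @i.(@p.S)$ with grouping of entries by their first coordinate---is the same two-phase strategy as the paper's (its Lemmas \ref{equiv:lemma:2}--\ref{equiv:lemma:4}), and your phase two, including the nested-block construction $@i.\pair{[@p'_k.S_k]_{k\in J_i}}{C'_i}$ with an outer conjunction over the $\varOf{i}$'s, is essentially the paper's Lemma \ref{equiv:lemma:4}. The gap is in phase one. The equivalence you invoke, $\pair{L}{\bigvee_j C_j} \equiv \bigoplus_j \pair{L}{C_j}$, is semantically plausible (the valuation $\ev{L}{t}$ indeed depends only on $L$ and $t$), but its right-hand side is not a \ce: the grammar of \ces requires $\var{\phi}=\set{\varOf{p}_1,\ldots,\varOf{p}_n}\setminus \set{\E}$ for every sub-strategy $\pair{[@p_1.S_1,\ldots,@p_n.S_n]}{\phi}$, so a single conjunct $C_j$ naming only some of the list's position-variables is not admissible above the full list $L$. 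This tie between formula and list is not a cosmetic technicality: Definition \ref{unif:ces} and the proof of Theorem \ref{main:theorem:1} rely on it, e.g.\ in the step where $\ev{\Eu{S}}{t}\models\phi$ is argued to force $i\in\PPos(t)$ precisely because every listed position contributes a conjunct to $\phi$.

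The obvious repair---restricting $L$ to the entries whose variables occur in $C_j$---is unsound: an entry $@p_k.S_k$ not named in $C_j$ that succeeds on $t$ still contributes its effect through $\eta$ in the composite map, and deleting it changes the output term (only \emph{failing} entries may be dropped, since they act as identity). This is exactly why the paper's Lemmas \ref{equiv:lemma:2} and \ref{equiv:lemma:3} do not distribute one branch per disjunct: they replace $\pair{\mycal{S}}{\phi\lor\phi'}$ by a left-choice over \emph{all} restrictions $\pair{\restr{\mycal{S}}{\pi\cup\wp'}}{\phi\land\restr{\phi'}{\wp'}}$ for $\wp'\subset\pi'$, enumerated by decreasing cardinality (and symmetrically for subsets of $\pi$); the correctness proof then exhibits, for each $t$, the unique branch whose $\wp'$ is the set of position-variables evaluating to $\true$ under $\ev{\mycal{S}}{t}$, so the discarded entries are exactly the failing ones, while every earlier (larger or incomparable) branch fails outright. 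This order-sensitive subset enumeration is the missing idea in your proposal: without it, phase one either leaves the class $\ceSet$ or alters the semantics, and your phase-two substitution of $\varOf{p_k}$ by $\varOf{i_k}$ in $C$ inherits the same mismatch whenever $C$ does not mention all the listed positions.
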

\begin{proof}  Firstly, we turn all the Boolean formulas 
of the \ce into formulas in the disjunctive normal form. 
Then  we express  the disjunction in terms of the left-choice strategy. 
(Lemmas \ref{equiv:lemma:2} and  \ref{equiv:lemma:3}). 
Thus we obtain an equivalent \ce in which all the Boolean formulas are conjunctions of position-variables.
Secondly, we turn each position in $\PosSet$ into a secession  of positions in $\mathbb{N}_{\E}$  (Lemma \ref{equiv:lemma:4})
by relying  on the fact  that  the \ce $@(ip).S$ is equivalent to $@i.(@p.S)$, where $i\in \mathbb{N}_{\E}$  and $p \in \PosSet$.
\end{proof}

\begin{lemma}
\label{equiv:lemma:2}
Let  $p_1,\ldots,p_n$ be parallel positions in $\PosSet$, 
and $S_1,\ldots,S_n$ be  \ces, with $n\ge 1$.
Let $\pi,\pi' \subseteq \set{\varOf{p}_1,\ldots,\varOf{p}_n}$ with $\pi \cup \pi' = \set{\varOf{p}_1,\ldots,\varOf{p}_n}$ and  
let  $\phi  = \bigand \pi $ and  $\phi'= \bigand \pi'$ be  Boolean formulas. 
Let   $\mycal{S}=[@p_1.S_1,\ldots,@p_n.S_n]$.
Then we have the equivalence
\begin{align} 
\label{equiv:lemma:2:main:eq}
\pair{\mycal{S}}{\phi \lor \phi'}  \equiv  
& \pair{\mycal{S}}{\phi \land \phi'}  \; \oplus  \notag \\ 
& \bigoplus\limits_{\n{ \wp' \subset \pi'  \\ \wp'=|\pi'|-1}} \pair{\restr{\mycal{S}}{\pi \cup \wp'}}{\phi \land \restr{\phi'}{\wp'}}  \oplus  \cdots  \oplus
\bigoplus\limits_{\n{ \wp' \subset \pi'  \\ |\wp'|=1}} \pair{\mycal{S}_{|\pi  \cup \wp'}}{\phi \land \restr{\phi'}{\wp'}} \; \oplus  \notag \\
& \bigoplus\limits_{\n{ \wp \subset \pi  \\ |\wp|=|\pi|-1}} \pair{\restr{\mycal{S}}{{\wp \cup \pi'}}} {\restr{\phi}{\wp} \land \phi'} \oplus  \cdots  \oplus
 \bigoplus\limits_{\n{ \wp \subset \pi \\ |\wp|=1}} \pair{\mycal{S}_{|\wp \cup \pi'}}{\restr{\phi}{\wp} \land \phi'}
\end{align}
\end{lemma}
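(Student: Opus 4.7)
I would prove the equivalence by a direct semantic analysis, showing that both sides produce the same output on every term $t$. Fix $t$ and let $\nu = \ev{\mycal{S}}{t}$, the valuation that sends $\varOf{p}_i$ to $\false$ iff $\sembrackk{@p_i.S_i}(t) = \fail$. Write $T := \{\varOf{p}_i : \nu(\varOf{p}_i) = \true\}$; this set depends only on $t$ and $\mycal{S}$, not on the Boolean formula wrapping $\mycal{S}$, so it is the common reference point for both sides. A crucial preliminary observation, to be used throughout, is that since the positions $p_1,\ldots,p_n$ are pairwise parallel the functions $\eta(\sembrackk{@p_i.S_i})$ pairwise commute, and moreover $\eta(\sembrackk{@p_i.S_i})$ is the identity whenever $\varOf{p}_i \notin T$.

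For the left-hand side, unfolding the semantics of $\pair{\mycal{S}}{\phi\lor\phi'}$ gives: it fails iff $T \not\supseteq \pi$ \emph{and} $T \not\supseteq \pi'$, and when it succeeds, the result equals the composition $\bigl(\eta(\sembrackk{@p_n.S_n}) \circ \cdots \circ \eta(\sembrackk{@p_1.S_1})\bigr)(t)$. By the observation above, this composition is the same as the one indexed only by the successful indices $\{i : \varOf{p}_i \in T\}$.

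For the right-hand side, I would analyse the left-choice in its stated order and identify the first branch that fires. The opening branch $\pair{\mycal{S}}{\phi\land\phi'}$ fires exactly when $T \supseteq \pi\cup\pi'$. Otherwise, when $T\supseteq\pi$ but $T\not\supseteq\pi'$, the branches indexed by $\wp' \subset \pi'$ are traversed from $|\wp'|=|\pi'|-1$ downwards, and the largest $\wp'$ that both satisfies $\phi\land\restr{\phi'}{\wp'}$ under $\nu$ and produces a successful restriction is $\wp'_0 := T\cap\pi'$: the left-choice thus picks $\pair{\restr{\mycal{S}}{\pi\cup\wp'_0}}{\phi\land\restr{\phi'}{\wp'_0}}$. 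A symmetric argument handles $T\supseteq\pi'$ but $T\not\supseteq\pi$ using the $\wp\subset\pi$ block. In each case, because the restriction $\pi\cup\wp'_0$ (resp.\ $\wp_0\cup\pi'$) contains all of $T$ and the omitted indices are failing ones (hence their $\eta$-wrapped semantics are identity), the composition over the restriction coincides with the composition over $T$, which matches the LHS result.

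The main obstacle will be the careful bookkeeping at the boundary of the index ranges of the left-choice (which run from $|\pi'|-1$ down to $1$ and $|\pi|-1$ down to $1$) against the actual shapes of $T\cap\pi$ and $T\cap\pi'$: I must verify that whenever the LHS succeeds some RHS branch genuinely fires (in particular that the chosen $\wp'_0$ or $\wp_0$ falls inside the enumerated sizes, which relies on the joint hypothesis $\pi\cup\pi'=\{\varOf{p}_1,\ldots,\varOf{p}_n\}$), and conversely that no spurious branch triggers before the intended one. Once this bookkeeping is settled, the equality of the two resulting composed functions reduces, on both sides, to the fail-as-identity property of $\eta$ and to the commutativity of strategies at parallel positions, together with the propositional identity $\phi\lor\phi'\equiv(\phi\land\phi')\lor(\phi\land\bigvee\text{proper nonempty sub-conjuncts of }\phi')\lor(\phi'\land\bigvee\text{proper nonempty sub-conjuncts of }\phi)$ that underlies the RHS decomposition.
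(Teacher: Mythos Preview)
Your approach is essentially the same as the paper's: a semantic case analysis on the valuation $\nu=\ev{\mycal{S}}{t}$ according to whether $\nu\models\phi$ and/or $\nu\models\phi'$, identifying in the asymmetric cases the unique proper subset $\wp'_0=T\cap\pi'$ (your notation; the paper writes $\wp'=\{\varOf{p}_i:\ev{S}{t}(\varOf{p}_i)=\true\}\cap\pi'$) that makes the corresponding RHS branch fire, and checking that all earlier branches fail via Eq.~(B.3). Your explicit invocation of commutativity of the $\eta(\sembrackk{@p_i.S_i})$ at parallel positions and of the fail-as-identity property is slightly more careful than the paper's write-up, but the skeleton is identical.

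One remark on the ``main obstacle'' you flag: you are right to worry about it, and in fact the paper's own proof does not address it either. When $T\supseteq\pi$, $T\not\supseteq\pi'$ and $T\cap\pi'=\emptyset$ (which can happen when $\pi\cap\pi'=\emptyset$), the candidate $\wp'_0$ has size $0$ and falls outside the enumerated range $|\wp'|\in\{1,\ldots,|\pi'|-1\}$; the hypothesis $\pi\cup\pi'=\{\varOf{p}_1,\ldots,\varOf{p}_n\}$ alone does not rule this out. So the bookkeeping you anticipate is genuinely delicate, and you may find that the lemma as stated needs the $|\wp'|=0$ (resp.\ $|\wp|=0$) endpoint adjoined, with an appropriate convention for $\restr{\phi'}{\emptyset}$, before the argument goes through cleanly.
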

\begin{proof}
Recall that 
\begin{align*}
\sembrackk{\pair{\mycal{S}}{\phi \lor \phi'}}(t) = 
\begin{cases}
(\eta(\sembrackk{@p_n.S_n}) \circ \cdots \circ \eta(\sembrackk{@p_1.S_1})(t) & \tif \ev{\mycal{S}}{t} \models \phi \lor \phi', \\
\fail & \totherwise.
\end{cases}
\end{align*}
We discuss four cases depending on whether $\ev{\mycal{S}}{t} \models \phi$ or $\ev{\mycal{S}}{t} \models \phi'$.
\begin{enumerate}
\item If $\ev{\mycal{S}}{t} \models \phi$ and $\ev{\mycal{S}}{t} \models \phi'$, then in this case
  \begin{align*}
    \ev{\mycal{S}}{t} \models \phi \land \phi' && \tand && \sembrackk{\pair{\mycal{S}}{\phi \lor \phi'}}(t)  = \sembrackk{\pair{\mycal{S}}{\phi \land  \phi'}}(t).
  \end{align*}
Thus Eq. (\ref{equiv:lemma:2:main:eq}) holds.
\item If $\ev{\mycal{S}}{t} \models \phi$ and $\ev{\mycal{S}}{t} \notmodels \phi'$, 
 then  we must show that 
\begin{align}
& &   \sembrackk{\pair{\mycal{S}}{\phi \land  \phi'}}(t)  = \fail, &  \label{eq:B:1} \\
&& \tand & \notag \\
&\exists ! \wp'  \subset  \pi',  & 
\sembrackk{\pair{\mycal{S}}{\phi \lor \phi'}}(t)  = \sembrackk{\pair{\mycal{S}_{|\pi  \cup \wp'}}{\phi \land \restr{\phi'}{\wp'}}}(t)&,   \label{eq:B:2}\\
&& \tand & \notag \\
&\forall \varrho' \subset \pi', \twhere |\varrho'| \ge  |\wp'| \tand  && \notag \\
 &  \varrho' \neq  \wp', &   \sembrackk{\pair{\mycal{S}_{|\pi \cup \varrho'}}{\phi \land \restr{\phi'}{\varrho'}}}(t)  = \fail. \label{eq:B:3} &
\end{align}
However, Eq. (\ref{eq:B:1}) follows from the fact that $\ev{\mycal{S}}{t} \models \phi$ and $\ev{\mycal{S}}{t} \notmodels \phi'$. 

To prove   Eq. (\ref{eq:B:2}), we let 
\begin{align*}
\wp' \uberEq{def} \set{\varOf{p}_i \in \pi \gvert \ev{S}{t}(\varOf{p}_i)=\true} \cap \pi'.
\end{align*}
 Hence, $\ev{\mycal{S}}{t} \models \phi \lor  \phi'$ if and only if $\ev{\mycal{S}_{| \pi \cup \wp'}}{t} \models \phi \land \restr{\phi'}{\wp'}$.  
Besides,
\begin{align*}
\forall \varOf{p}_i \in \pi \cup  \pi', &&  \eta(\sembrackk{@p_i.S_i})(t) = 
\begin{cases}
\sembrackk{@p_i.S_i}(t)  & \tif \ev{\mycal{S}}{t}(\varOf{p}_i)= \true \\
 t                     & \totherwise,
\end{cases} 
\end{align*}
and
\begin{align*}
\forall \varOf{p}_i \in \pi \cup  \wp', && \eta(\sembrackk{@p_i.S_i})(t) = 
\sembrackk{@p_i.S_i}(t)  && \tand  && \ev{\mycal{S}_{| \pi \cup \wp'}}{t}(\varOf{p}_i)= \true. 
\end{align*}
Summing up, Eq. (\ref{eq:B:2}) holds.

To prove   Eq. (\ref{eq:B:3}), 
we notice that  there exists $\varOf{p} \in \varrho'$ such that $\ev{S}{t}(\varOf{p})=\false$, and hence
$\ev{S_{\pi \cup \varrho'}}{t} \notmodels \phi'_{|\varrho'}$, making  $\sembrackk{\pair{\mycal{S}_{|\pi \cup \varrho'}}{\phi \land \restr{\phi'}{\varrho'}}}(t)  = \fail$.
Thus Eq. (\ref{equiv:lemma:2:main:eq}) holds.

\item  If $\ev{\mycal{S}}{t} \notmodels \phi$ and $\ev{\mycal{S}}{t} \models \phi'$, then this case is similar 
    to the case when $\ev{\mycal{S}}{t} \models \phi$ and $\ev{\mycal{S}}{t} \notmodels \phi'$ discussed above in Item 2.

\item If $\ev{\mycal{S}}{t} \notmodels \phi$ and $\ev{\mycal{S}}{t} \notmodels \phi'$, then in this case 
\begin{align*}
\sembrackk{\pair{\mycal{S}}{\phi \lor \phi'}}(t) & = \fail,  && \tand\\
\forall \wp' \subset \pi',  \sembrackk{\pair{\restr{\mycal{S}}{\pi \cup \wp'}}{\phi \land \restr{\phi'}{\wp'}}}(t) &= \fail, && \tand  
 \forall \wp \subset \pi,  \sembrackk{\pair{\restr{\mycal{S}}{\wp \cup \pi'}}{\restr{\phi}{\wp} \land \phi'}}(t) = \fail, 
\end{align*}
making the Eq. (\ref{equiv:lemma:2:main:eq}) hold.
\end{enumerate}
\end{proof}

\begin{lemma}
\label{equiv:lemma:3}
Let  $p_1,\ldots,p_n$ be parallel positions in $\PosSet$, 
and $S_1,\ldots,S_n$ be  \ces, with $n\ge 1$.
Let $\pi=\set{\varOf{p}_1,\ldots,\varOf{p}_n}$ and  
Let   $\mycal{S}=[@p_1.S_1,\ldots,@p_n.S_n]$.
Then we have the equivalence

\begin{align}
\pair{\mycal{S}}{\bigor \pi} \equiv
\Big(\bigoplus\limits_{\substack {\wp \subseteq \pi  \\  |\wp| = |\pi| }} \pair{\mycal{S}_{| \wp}}{\bigand \wp} \Big) \oplus  
\cdots  \oplus
\Big(\bigoplus\limits_{\substack {\wp \subset \pi  \\  |\wp| = 0 }} \pair{\mycal{S}_{| \wp}}{\bigand \wp} \Big)
\end{align}
\end{lemma}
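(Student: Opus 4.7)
\textbf{Overall approach.} I would prove the equivalence pointwise: fix an arbitrary term $t \in \mycal{T}$ and show that $\sembrackk{\pair{\mycal{S}}{\bigor \pi}}(t)$ equals $\sembrackk{\mathrm{RHS}}(t)$, by case analysis on which sub-strategies $@p_i.S_i$ succeed on $t$. The whole proof reduces to a careful bookkeeping argument built on two observations, which I would isolate as sub-lemmas.

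\textbf{Key preliminaries.} First, let $\nu = \ev{\mycal{S}}{t}$ and set $\wp^\star = \{\varOf{p}_i \in \pi : \nu(\varOf{p}_i) = \true\}$. Second, I would use that the positions $p_1,\ldots,p_n$ are pairwise parallel to argue: for any $i \neq j$, the subterm of $t$ at position $p_i$ is unchanged by applying $\eta(\sembrackk{@p_j.S_j})$, so the success/failure of each $@p_i.S_i$ on every intermediate term in the composition $\eta(\sembrackk{@p_n.S_n}) \circ \cdots \circ \eta(\sembrackk{@p_1.S_1})$ coincides with its success/failure on $t$. Consequently, $\eta(\sembrackk{@p_i.S_i})$ acts as the identity at each step whenever $\varOf{p}_i \notin \wp^\star$, and the order of applications is irrelevant up to semantic equality. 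I would also record the basic fact that, for any $\wp \subseteq \pi$, the summand $\pair{\mycal{S}_{|\wp}}{\bigand \wp}$ succeeds on $t$ iff $\wp \neq \emptyset$ and $\wp \subseteq \wp^\star$ (the $\emptyset$ case forced by the stated convention $\bigand \emptyset = \false$).

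\textbf{Case analysis.} Case 1: $\wp^\star = \emptyset$. Then $\nu \notmodels \bigor \pi$ so the LHS evaluates to $\fail$. For the RHS, every summand with $\wp \neq \emptyset$ contains some $\varOf{p}_i$ with $\nu(\varOf{p}_i) = \false$ and hence fails, while the summand with $\wp = \emptyset$ also fails by the convention $\bigand \emptyset = \false$. So the RHS is $\fail$.

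Case 2: $\wp^\star \neq \emptyset$. Then $\nu \models \bigor \pi$ and the LHS returns the composition of $\eta(\sembrackk{@p_i.S_i})$ over all $i$, applied to $t$. By the parallelism observation this simplifies to the composition over only those $i$ with $\varOf{p}_i \in \wp^\star$ (the others act as identity). For the RHS, any $\wp$ with $|\wp| > |\wp^\star|$ must contain some $\varOf{p}_j \notin \wp^\star$ and hence fails; at level $|\wp| = |\wp^\star|$, the unique subset of $\pi$ of that size contained in $\wp^\star$ is $\wp^\star$ itself, so the left-choice selects $\pair{\mycal{S}_{|\wp^\star}}{\bigand \wp^\star}$, whose semantics is again the composition of $\sembrackk{@p_i.S_i}$ over $\varOf{p}_i \in \wp^\star$. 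Both sides agree.

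\textbf{Main obstacle.} The only non-trivial step is justifying the reduction from the LHS composition over all $n$ maps to the composition over just $\wp^\star$: this requires the parallelism of the $p_i$ to guarantee that $\sembrackk{@p_i.S_i}$ has the same success/failure status on every intermediate term as on $t$, and also that reordering the successful applications does not change the result. Once this is packaged as a small stand-alone lemma, the rest is a straightforward case analysis on $\wp^\star$, with the only subtlety being the convention $\bigand \emptyset = \false$, which is precisely what closes the $\wp^\star = \emptyset$ case without leaving a stray $\true$ summand on the RHS.
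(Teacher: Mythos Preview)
Your proposal is correct and follows essentially the same route as the paper: define the ``success set'' $\wp^\star=\{\varOf{p}_i\in\pi:\ev{\mycal{S}}{t}(\varOf{p}_i)=\true\}$, show that every summand $\pair{\mycal{S}_{|\wp}}{\bigand\wp}$ with $|\wp|\geq|\wp^\star|$ and $\wp\neq\wp^\star$ fails, and that the summand for $\wp^\star$ returns the same value as the left-hand side. The paper does not split explicitly into the cases $\wp^\star=\emptyset$ versus $\wp^\star\neq\emptyset$, but this is a purely cosmetic difference. The one place where you are more careful than the paper is in isolating the parallelism argument: the paper only writes down what each $\eta(\sembrackk{@p_i.S_i})$ does \emph{on $t$}, and silently passes to the composition, whereas you correctly note that one needs the $p_i$ to be pairwise parallel so that the success/failure status of each $@p_i.S_i$ is unchanged on every intermediate term in the chain. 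That observation is genuinely needed and the paper glosses over it, so your packaging it as a standalone sub-lemma is an improvement rather than a deviation.
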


\begin{proof}
We recall that
\begin{align*}
\pair{\mycal{S}}{\bigor \pi}(t) = 
\begin{cases}
\big(\eta(\sembrackk{@p_n.S_n}) \circ \cdots \circ \eta(\sembrackk{@p_1.S_1})\big)(t) & \tif \ev{\mycal{S}}{t} \models \bigor \pi, \\
\fail & \totherwise.
\end{cases}
\end{align*}
Out of the valuation $\ev{\mycal{S}}{t}$, we shall  show that there exists a unique $\wp \subseteq \pi$ such that 
\begin{align}
\label{equiv:lemma:3:eq:1}
\sembrackk{\pair{\mycal{S}}{\bigor \pi}}(t) = \sembrackk{\pair{\mycal{S}_{| \wp}}{\bigand \wp}}(t),
\end{align}
and that for all $\wp' \subseteq \pi$ where $|\wp'| \ge |\wp|$ and $\wp' \neq \wp$, we have that
\begin{align}
\label{equiv:lemma:3:eq:2}
\sembrackk{\pair{\mycal{S}_{| \wp'}}{\bigand \wp'}}(t) = \fail.
\end{align}
For this purpose, we define $\wp$ by
\begin{align*}
\wp \uberEq{def}  \set{ \varOf{p} \in \pi \gvert \ev{\mycal{S}}{t} = \true}.
\end{align*}
Therefore, $\ev{\mycal{S}}{t}\models \bigor \pi$  iff $\ev{\mycal{S}_{|\wp}}{t} \models \bigand \wp$,  and 
\begin{align*}
&\forall \varOf{p}_i  \in \pi, & \eta(\sembrackk{@p_i.S_i})(t) 
&= 
\begin{cases}
\sembrackk{@p_i.S_i}(t) & \tif \ev{\mycal{S}}{t}\models \bigor \pi \\
 t & \totherwise
\end{cases} \\
&\tand & \\
&\forall \varOf{p}_i \in \wp, \ev{\mycal{S}_{|\wp}}{t} \models \bigand \wp \tand  & \eta(\sembrackk{@p_i.S_i})(t) &= \sembrackk{@p_i.S_i}(t).
\end{align*}
Hence Eq. (\ref{equiv:lemma:3:eq:1}) holds.
And  Eq. (\ref{equiv:lemma:3:eq:2}) follows from the fact that there exists $\varOf{q} \in \wp'$ such that $\ev{\mycal{S}}{t}(\varOf{q}) = \false$, 
thus $\ev{\mycal{S}_{|\wp'}}{t}(\varOf{q}) = \false$ and $\ev{\mycal{S}_{|\wp'}}{t} \notmodels \bigand \wp'$.
\end{proof}


\begin{lemma}
\label{equiv:lemma:4}
Each  \ce in which every  Boolean formulas is  a conjunction of position-variables in $\PosSetVar$, 
can be turned into an equivalent \ce in which  every Boolean  formulas is  a conjunction of position-variables in $\varOf{\mathbb{N}}_{\E}$.
\end{lemma}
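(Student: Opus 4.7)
My plan is to proceed by induction on the structure of the \ce, combined with a secondary induction on the maximum length of positions appearing in it. The central ingredient is the equivalence $@(i \cdot q).S \equiv @i.(@q.S)$ for any $i \in \mathbb{N}_{\E}$ and $q \in \PosSet$, which follows directly from the semantic clauses for $@p.S$ by unfolding one step at a time. For the constructors $@p.\mbf{\tau}$, $@p.S$, $(u,\mbf{\tau})$, $(u,S)$, $S_1 \oplus S_2$, and $\mu X.S$, there is no Boolean formula to adjust, so a recursive decomposition of each position into a chain of single-step jumps, together with an application of the induction hypothesis to the sub-strategies, suffices.

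The nontrivial case is $\pair{\bigsqcup_{i \in [n]} @p_i.S_i}{\phi}$, where by assumption $\phi = \bigwedge_{i \in I} \varOf{p}_i$ for some $I \subseteq [n]$. Write each $p_i = j_i \cdot q_i$ with $j_i \in \mathbb{N}_{\E}$ and $q_i \in \PosSet$, and partition the indices by their leading component: for every $j \in \mathbb{N}_{\E}$ that appears as some $j_i$, set $\Gamma_j = \{ i \in [n] : p_i = j \cdot q_i \}$. I then replace the original \ce by the nested form
\begin{align*}
\bigpair{\bigsqcup_{j} @j.\bigpair{\bigsqcup_{i \in \Gamma_j} @q_i.S_i}{\phi_j}}{\phi'},
\end{align*}
where $\phi_j = \bigwedge_{i \in \Gamma_j \cap I} \varOf{q}_i$ and $\phi' = \bigwedge_{j \,:\, \Gamma_j \cap I \neq \emptyset} \varOf{j}$. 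The parallelism of the $p_i$ transfers to parallelism of the $q_i$ inside each $\Gamma_j$, so the well-foundedness conditions of Definition \ref{Well-founded:strategy:ext:def} remain satisfied. I would then reapply the same construction to the inner blocks, whose positions $q_i$ are strictly shorter, and invoke the structural induction hypothesis on each $S_i$ to eliminate the positions buried deeper inside.

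The verification of semantic equivalence is the main obstacle and the step that most deserves care. It amounts to checking, by case analysis on the valuations $\ev{\cdot}{t}$ produced in both versions, that whenever the outer $\pair{\cdot}{\phi'}$ fires the composition of $\eta(\sembrackk{@j.\pair{\cdot}{\phi_j}})$, the overall effect on $t$ coincides with $\bigl(\eta(\sembrackk{@p_n.S_n}) \circ \cdots \circ \eta(\sembrackk{@p_1.S_1})\bigr)(t)$ from the original semantics, and that the transformed \ce fails at $t$ exactly when the original one does. The key observation is that the inner $\pair{\cdot}{\phi_j}$ fails on $t_{|j}$ precisely when some $i \in \Gamma_j \cap I$ has $\sembrackk{@q_i.S_i}(t_{|j}) = \fail$, which by $@p_i.S_i \equiv @j.(@q_i.S_i)$ is equivalent to $\sembrackk{@p_i.S_i}(t) = \fail$; hence $\varOf{j}$ evaluates to $\false$ in the outer valuation iff some $\varOf{p}_i$ with $i \in \Gamma_j \cap I$ was false in the original, so $\phi'$ is violated iff $\phi$ was. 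Once these checks are in place, the induction closes and every Boolean formula in the resulting \ce is a conjunction of position-variables in $\varOf{\mathbb{N}}_{\E}$, as required.
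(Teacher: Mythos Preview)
Your proposal is correct and follows essentially the same approach as the paper's proof: both rely on the basic equivalence $@(i\cdot q).S \equiv @i.(@q.S)$ for the simple constructors, and for the $\pair{\cdot}{\phi}$ case both group the positions by their leading component $j$, wrap each group in an inner $\pair{\cdot}{\phi_j}$ with conjunction over the residual positions, and put a conjunction over the $\varOf{j}$'s on the outside. The paper states this as a single reduction rule and asserts the equivalence without further justification; you spell out the induction scheme (structural, with a secondary induction on position length) and sketch the semantic verification via case analysis on the valuations, which is a genuine addition of detail rather than a different idea. One small point: by the syntactic constraint $\var{\phi}=\{\varOf{p}_1,\ldots,\varOf{p}_n\}\setminus\{\varOf{\epsilon}\}$ in the grammar, your subset $I$ is in fact forced to be all of $[n]$ (modulo $\epsilon$), so the more general handling you give is unnecessary but harmless.
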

\begin{proof} 
Let $S$ be   \ce strategy.
The idea is simple. 
If there are no  Boolean formulas in the \ce, 
then we rely on the observation that the \ce $@(ip).S'$ is equivalent to $@i.(@p.S')$ 
where $i\in \mathbb{N}_{\E}$  and $p \in \PosSet$. Which means that we use the reduction rule
\begin{align}
\label{simple:rule}
@(ip).S' \reduce @i.(@p.S')
\end{align}
to put the \ce in the canonical form.  We generalize the rule (\ref{simple:rule}) to take into account  the presence of  Boolean formulas as follows.
Let 
\begin{align*}
\mycal{\mycal{S}}   & = \big(\bigsqcup_{j} @\varOf{1p_j}.S_j^1 \big) \sqcup \cdots \sqcup \big(\bigsqcup_{j} @\varOf{np_j}.S_j^n\big), \tand  \\
\mycal{\mycal{S}}_1 & =  @\varOf{1}. \bigpair{\big(\bigsqcup_{j} @\varOf{p}_j.S_j^1 \big)}{\bigand_j  \varOf{p_j} }, \tand \\
\mycal{\mycal{S}}_n & =  @\varOf{n}. \bigpair{\big(\bigsqcup_{j} @\varOf{p}_j.S_j^1 \big)}{\bigand_j  \varOf{p_j} }
\end{align*}
Then we  define the reduction rule 
\begin{align*}
\Bigpair{\mycal{\mycal{S}} }{\bigand_i \bigand_j  \varOf{ip_j}  }
\reduce 
\bigpair{[S_1,\ldots, S_n]}{\bigand_i  \varOf{i}}.
\end{align*}
Since all the Boolean formulas are conjunctions of position-variables, then we have 
\begin{align*}
\Bigpair{\mycal{\mycal{S}} }{\bigand_i \bigand_j  \varOf{ip_j}  }
\equiv
\bigpair{[S_1,\ldots, S_n]}{\bigand_i  \varOf{i}}.
\end{align*}
\end{proof}

\subsection{Properties of the function $\Psi$}
\begin{lemma}[$\Psi$ preserves the semantics, i.e. Lemma \ref{psi:sem:lemma}]
  \label{psi:sem:lemma:annex} 
  For any \ce  $S$ in $\ceSet$ and  any term $t$ in $\mycal{T}$,
  \begin{align}
    \sembrackk{S}(t)= \Psi(S,t)(t)
  \end{align}
\end{lemma}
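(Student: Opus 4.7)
The plan is to prove the lemma by induction on the lexicographic depth measure $\Delta(S)$ introduced in Definition \ref{def:Delta:strategy}, where necessary combined with an auxiliary induction on $\delta(t)$. This measure is designed precisely for \ces because the first component counts the nesting of fixed-point binders, which is what decreases when $\mu X. S(X)$ is unfolded to $\bigoplus_{i=1,\delta(t)} S^{i}(\emptylist)$ (each copy of $S$ is applied to an $X$-free argument, so no new $\mu$ is created at the outer level). Both $\sembrackk{\cdot}$ and $\Psi(\cdot,t)$ are defined by case analysis on the same grammar, so the proof essentially amounts to matching the two clauses case by case.

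First I would dispatch the base cases. For $S = \emptylist$, both sides reduce to $\fail$ by definition. For $S = @p.\mbf{\tau}$ and $S = (u,\mbf{\tau})$, the definitions of $\sembrackk{S}(t)$ and of $\Psi(S,t)$ align directly: $\Psi(@p.\mbf{\tau},t) = @p.\mbf{\tau}$ so that $\sembrackk{\Psi(@p.\mbf{\tau},t)}(t)$ unfolds to exactly $t[\mbf{\tau}(t_{|p})]_{p}$ when $p \in \PPos(t)$ and to $\fail$ otherwise; similarly $\Psi((u,\mbf{\tau}),t) = (\E,\mbf{\tau})$ when $\match{u}{t}$ and $\emptylist$ otherwise, reproducing the semantic clause.

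Next I would treat the compound constructors. The cases $(u,S)$, $S_1 \oplus S_2$ and $@p.S$ are immediate: each clause of $\Psi$ on these constructors mirrors the corresponding semantic clause, and the induction hypothesis applies to the strictly smaller \ce $S$ (for $@p.S$ the hypothesis is invoked on the subterm $t_{|p}$, which is legitimate because $\Delta$ is insensitive to $t$). The paired case $\pair{\bigsqcup_{i\in[n]} @p_i.S_i}{\phi}$ is slightly more delicate: one has to check that the valuation $\ev{S}{t}$ used on the semantic side agrees with the one implicit in $\Psi$, and that the composition $\eta(\sembrackk{@p_n.S_n}) \circ \cdots \circ \eta(\sembrackk{@p_1.S_1})$ coincides with $\sembrackk{\bigsqcup_{i} @p_i.\eta(\Psi(S_i,t_{|p_i}))}$. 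Here the well-foundedness condition (ii)(b), which forces the $p_i$ to be pairwise parallel, is used to argue that each insertion modifies a disjoint subtree, so the order of composition is irrelevant and the two layered $\eta$'s indeed match.

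The hard part is the fixed-point case $S = \mu X. S(X)$. Here both the semantic clause and the clause of $\Psi$ reduce, by definition, to $\sembrackk{\bigoplus_{i=1,\delta(t)} S^{i}(\emptylist)}(t)$ and $\Psi\bigl(\bigoplus_{i=1,\delta(t)} S^{i}(\emptylist),t\bigr)(t)$, respectively. I would argue that each $S^{i}(\emptylist)$ is a \ce in which the bound variable $X$ has been eliminated, so its $\Delta$-measure is strictly smaller than $\Delta(\mu X. S(X))$ in the first (fixed-point) component, which legitimises appealing to the induction hypothesis on the unfolded \ce. The well-foundedness condition (i) on cycles guarantees termination of the unfolding and ensures that $\delta(t)$ iterations suffice. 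Once the equality is established for the finite $\oplus$-unfolding (by the already treated case of $\oplus$), the fixed-point case closes.
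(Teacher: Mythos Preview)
Your proposal is correct and follows essentially the same approach as the paper: induction on the depth measure $\Delta(S)$, matching the clauses of $\sembrackk{\cdot}$ and $\Psi(\cdot,t)$ case by case, with the fixed-point case reduced to the finite $\oplus$-unfolding by observing that the first component of $\Delta$ drops. The auxiliary induction on $\delta(t)$ you mention up front is never actually needed (and the paper's proof does not invoke it); otherwise your case analysis, including the use of well-foundedness for the paired constructor, lines up with the paper's argument.
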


\begin{proof}
The proof is by induction on  $\Delta(S)$, the depth of $S$.
\begin{description}
\item[\underline{Basic case: $\Delta(S)=(0,0)$.}]
  We distinguish three cases depending on  $S$.
  \begin{enumerate}
  \item If $S=\emptylist$, then this case is trivial.

  \item If $S=@p.\mbf{\tau}$.  This case is trivial  since  $\Psi(S,t) \uberEq{def}S$.
    
  \item If $S=(u,\mbf{\tau})$. In this case 
  \begin{align*} 
    \sembrackk{S}(t)= 
    \begin{cases}
      \mbf{\tau}[t] & \tif \match{u}{t} \\
         \fail      & \totherwise,
    \end{cases}
  \end{align*}
and on the other hand, 
 \begin{align*} 
    \Psi(S,t)= 
    \begin{cases}
      @\E.\mbf{\tau}  & \tif \match{u}{t} \\
         \emptylist      & \totherwise
    \end{cases} 
   && \textrm{hence} &&
 \Psi(S,t)(t)= 
    \begin{cases}
       \mbf{\tau}[t]  & \tif \match{u}{t} \\
         \fail      & \totherwise
    \end{cases} 
  \end{align*}
That is,  $\sembrackk{S}(t)=\Psi(S,t)(t)$.
\end{enumerate}

\item[\underline{Induction case: $\Delta(S)>(0,0)$.}]
We distinguish three cases depending on $S$.
\begin{enumerate}

\item If $S$ is a left-choice of the form 
    \begin{align*}
      S= S_1 \oplus  S_2
    \end{align*}
then,
\begin{align*}
\sembrackk{S}(t)= 
\begin{cases}
  \sembrackk{S_1}(t) & \tif  \sembrackk{S_1}(t) \neq \fail, \\
   \sembrackk{S_2}(t)  & \totherwise. 
\end{cases}
\end{align*}
and 
\begin{align*}
\Psi(S_1 \oplus  S_2,t) &\uberEq{def}   \begin{cases}
                                                      \Psi(S_1,t) & \tif  \Psi(S_1,t) \neq \emptylist, \\
                                                      \Psi(S_2,t) & \totherwise.
                                                      \end{cases}
\end{align*}
Since $\Psi(S_1,t)=\emptylist$ iff $\Psi(S_1,t)(t)=\fail$, we get 
\begin{align*}
\Psi(S_1 \oplus S_2,t)(t) &\uberEq{def}   \begin{cases}
                                                      \Psi(S_1,t)(t) & \tif  \Psi(S_1,t)(t) \neq \fail, \\
                                                      \Psi(S_2,t)(t) & \totherwise.
                                                      \end{cases}
\end{align*}
From the induction hypothesis we have that $\sembrackk{S_i}(t)= \Psi(S_i,t)(t)$ for $i=1,2$. 
Hence, $\sembrackk{S_1 \oplus S_2}(t)=\Psi(S_1 \oplus S_2,t)(t)$.

\item If $S$ is of the form
\begin{align*}
  S = \pair{[@p_1.S_1,\ldots,@p_n.S_n,@q_1.\mbf{\tau}_1,\ldots,@q_m.\mbf{\tau}_m]}{\phi}, \;\;\; n \ge 1, m \ge 0,
\end{align*}
then let 
\begin{align*}
f  &= \eta(\sembrackk{@p_n.S_n}) \circ \cdots \circ \eta(\sembrackk{@p_1.S_1}) &&  \tand && 
f' & =\sembrackk{@q_m.\mbf{\tau}_m} \circ \cdots \circ \sembrackk{@q_1.\mbf{\tau}_1}
\end{align*}
On the one hand 
\begin{align*}
\sembrackk{S}(t) & \uberEq{def}   
   \begin{cases} 
   (f' \circ f)(t) & \tif \ev{S}{t} \models \phi  \\
   \fail & \totherwise
   \end{cases} \\
& = \begin{cases} 
   \big(\eta(\sembrackk{@p_n.S_n}) \circ \cdots \circ \eta(\sembrackk{@p_1.S_1})\big)(t) & \tif \ev{S}{t} \models \phi  \\
   \fail & \totherwise
   \end{cases}
\end{align*}
On the other hand,
let 
\begin{align*}
\mycal{L} &= [\eta(@p_1.\Psi(S_1,t_{|p_1})),\ldots,\eta(@p_n.\Psi(S_n,t_{|p_n}))],  \\
&\tand \\
 \mycal{L}' &= [@q_1.\mbf{\tau}_1,\ldots,@q_m.\mbf{\tau}_m].
\end{align*}
Thus
\begin{align*}
\Psi(S,t)    &\uberEq{def}
\begin{cases}
  \mycal{L} \sqcup \mycal{L}'  &\tif \ev{S}{t} \models \phi, \\
  \emptylist  & \totherwise.
\end{cases} 
\\  \textrm{Hence,} \\
\Psi(S,t)(t)    &=
\begin{cases}
  (\sembrackk{\mycal{L}'} \circ \sembrackk{\mycal{L}}) (t)  &\tif \ev{S}{t} \models \phi, \\
  \fail   & \totherwise.
\end{cases}
\end{align*}
It remains to show that, for any term $t$ in $\mycal{T}$,  
\begin{align*}
f(t) = \sembrackk{\mycal{L}}(t)  && \tand &&  f'(t) = \sembrackk{\mycal{L}'}(t). 
\end{align*}
But $f' = \sembrackk{\mycal{L}'}$,  and thus it remains to show that  
\begin{align}
\label{local:goal:mu:eq}
\forall i \in [n], &&  \sembrackk{@p_i.S_i}(t) = \sembrackk{@p_i. \Psi(S_i,t_{|{p_i}})}(t).
\end{align}
However,
\begin{align*}
\sembrackk{@p_i.S_i}(t) &= 
\begin{cases}
t \big[ \sembrackk{S_i}(t_{|p_i})\big]_{p_i} & \tif p_i \in \PPos(t) \\
\fail  & \totherwise
\end{cases} \\
 & \tand 
\\
\sembrackk{@p_i.\Psi(S_i,t_{|p_i})}(t) &= 
\begin{cases}
t \big[{\Psi(S_i,t_{|p_i})}(t_{|p_i})\big]_{p_i} & \tif p_i \in \PPos(t) \\
\fail  & \totherwise
\end{cases}
\end{align*}
From the  induction hypothesis  we have  $\sembrackk{S_i}(t_{|p_i}) = \Psi(S_i,t_{|p_i})(t_{|p_i})$. 
Therefore, the Eq. (\ref{local:goal:mu:eq})  holds.

\item If $S$ is of the form $S = \mu X. S(X)$, then the claims follows from the fact that
\begin{align*}
\sembrackk{\mu X. S(X)}(t) =  \sembrackk{\bigoplus_{i=1,\delta(t)} S^i(\emptylist)}(t) && \tand &&
\Psi(\mu X. S(X),t)  = \Psi(\bigoplus_{i=1,\delta(t)} S^{i}(\emptylist),t),
\end{align*}
by  applying  the induction hypothesis, since 
\begin{align*}
\Delta\big(\bigoplus_{i=1,\delta(t)} S^i(\emptylist)\big)  < \Delta\big(\mu X. S(X)\big),
\end{align*}
because if $\Delta\big(\bigoplus_{i=1,\delta(t)} S^i(\emptylist)\big) = (n,m)$, for some $n,m \in \mathbb{N}$, then
 $\Delta\big(\mu X. S(X)\big)=(n+1,m')$, for some $m' >m$.
 \end{enumerate}

\end{description}
This ends the proof of Lemma \ref{psi:sem:lemma}. 
\end{proof}

\begin{lemma}[i.e. Lemma \ref{nice:prop:Psi:lemma}]
\label{nice:prop:Psi:lemma:annex}
The function $\Psi$ enjoys the following properties.
\begin{enumerate}[i.)] 
\item \label{item:1}  For any  elementary \ces $E, E'$ in $\eceSet$, we have that 
      \begin{align*}
        E =  E' && \tiff &&   \Psi(E,t)=\Psi(E',t),
      \end{align*}
    for any  term $t$.
\item \label{item:2} For   any  \ces $S,S'$ in $\ceSet$, we have that 
\begin{align*}
     S \equiv S' && \tiff &&  \Psi(S,t)=\Psi(S',t),
\end{align*}
for any  term $t$.  
\end{enumerate}
\end{lemma}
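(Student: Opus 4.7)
The lemma has two parts, and I plan to handle Part (i) first since Part (ii) builds on it. For Part (i), the crucial observation is that $\Psi$ acts as the identity on the set $\eceSet$ of well-founded position-based \ces: for every $E \in \eceSet$ and every $t \in \mycal{T}$, one has $\Psi(E, t) = E$. This is verified by a direct case analysis on the three possible forms of a well-founded position-based \ce ($\emptylist$, a singleton $@p.\mbf{\tau}$, or a list $[@p_1.\mbf{\tau}_1, \ldots, @p_n.\mbf{\tau}_n]$ viewed in the \ces framework as $\pair{\bigsqcup_{i} @p_i.\mbf{\tau}_i}{\phi}$ with $\phi = \bigand_i \varOf{p}_i$) using the corresponding clauses of Definition \ref{psi:def}. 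Once this identity is in hand, Part (i) reduces in both directions to a tautology, since both $\Psi(E, t)$ and $\Psi(E', t)$ collapse to $E$ and $E'$ respectively.

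For Part (ii), the backward direction is a one-line consequence of Lemma \ref{psi:sem:lemma:annex}. Assuming $\Psi(S, t) = \Psi(S', t)$ for every $t$, evaluating both position-based \ces on $t$ and invoking Lemma \ref{psi:sem:lemma:annex} yields $\sembrackk{S}(t) = \Psi(S,t)(t) = \Psi(S',t)(t) = \sembrackk{S'}(t)$ for all $t$, i.e.\ $S \equiv S'$.

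The forward direction of Part (ii) is where the main technical content lies. Suppose $S \equiv S'$. By Lemma \ref{psi:sem:lemma:annex}, $\Psi(S, t)(t) = \Psi(S', t)(t)$ for every $t$. For a fixed $t$ I write $E_S := \Psi(S, t)$ and $E_{S'} := \Psi(S', t)$, so that $E_S(t) = E_{S'}(t)$. I would split on whether this common value is $\fail$: in the non-failure case, the well-foundedness conditions of Definition \ref{Well-founded:simple:ext:def} (no repeated positions, insertions ordered from leaves to root) ensure that the list of insertion positions together with the inserted contexts can be read off uniquely from the pair $(t, E_S(t))$, forcing $E_S = E_{S'}$ modulo the permutation of parallel positions used to define equality on $\eceSet$. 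In the failure case, a single evaluation does not pin down either position-based \ce, so the plan is a structural induction on the depth $\Delta(S)$ of Definition \ref{def:Delta:strategy} after canonicalization via Lemma \ref{canonical:form:equiv:annex}, matching outermost constructors of $S$ and $S'$ in turn (left-choice, $@p$-prefixing, pair-with-formula, $\mu$-fixed-point) and invoking the inductive hypothesis on sub-\ces, using the pointwise equivalence $\sembrackk{S} = \sembrackk{S'}$ on \emph{all} terms (not only the fixed $t$) to align the branches. The main obstacle I expect is the $\mu$-clause, which unfolds to a depth depending on $t$ and interleaves with left-choice: the well-foundedness assumption that every cycle in a \ce crosses a position delimiter guarantees that the induction terminates, but the case analysis on canonical forms of $S$ and $S'$ is where the bulk of the technical work will go.
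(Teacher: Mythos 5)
Your Part (i) observation (that $\Psi$ acts as the identity on $\eceSet$) and your backward direction of Part (ii) are correct and coincide with the paper, which dismisses Item i.) as immediate from the definition of $\Psi$ and obtains $\sembrackk{S}(t)=\Psi(S,t)(t)$ from Lemma \ref{psi:sem:lemma}. The genuine gap is in your forward direction, precisely in the non-failure case: the claim that the insertion positions and contexts ``can be read off uniquely from the pair $(t,E_S(t))$'' is false, and well-foundedness does not rescue it. Take $t=g(a)$ and $\tau=g(\square)$: then $\sembrackk{@\E.\tau}(t)=g(g(a))=\sembrackk{@1.\tau}(t)$, yet $@\E.\tau$ and $@1.\tau$ are distinct well-founded position-based \ces --- the positions $\E$ and $1$ are comparable, not parallel, so the two singleton lists are not identified by the paper's equality on $\eceSet$. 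Agreement of the two applications at the single term $t$ therefore cannot force $E_S=E_{S'}$, and no amount of appeal to Definition \ref{Well-founded:simple:ext:def} changes this.

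Worse, the defect is not local to your argument: lifting the example to \ces, let $S=(g(x),g(\square))$ and $S'=(g(x),@1.g(\square))$. On any term matching $g(x)$ both return $g(g(s))$, and both fail otherwise, so $S\equiv S'$; yet $\Psi(S,g(a))=@\E.g(\square)\neq @1.g(\square)=\Psi(S',g(a))$. So the implication you are trying to establish fails for the syntactic equality on $\eceSet$, which means your failure-case plan (canonicalization plus induction on $\Delta$, matching outermost constructors) cannot succeed either. It is worth noting that the paper's own proof glosses over exactly this point: it derives $\Psi(S,t)(t)=\Psi(S',t)(t)$ and then concludes $\Psi(S,t)=\Psi(S',t)$ by invoking Item i.), but Item i.) --- which, as your Part (i) makes transparent, merely says $\Psi(E,t)=E$ on $\eceSet$ --- relates equality of $E$ and $E'$ to equality of their $\Psi$-images, not to agreement of their applications at the single term $t$. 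So you have correctly isolated the one real difficulty of the lemma; your attempt to discharge it by an injectivity-of-insertion argument is where the proposal breaks, and the counterexample above shows the forward direction would need either a coarser notion of equality on $\eceSet$ (e.g.\ semantic equality of position-based \ces) or a restriction on the \ces considered before any proof along your lines, or the paper's, could go through.
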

\begin{proof}
We only prove Item \emph{ii.)}, the other item follows immediately  from the definition of $\Psi$.
On the one hand, from the definition of $\equiv$  we have that 
\begin{align*}
S \equiv S' &&\tiff && \sembrackk{S}(t) =  \sembrackk{S'}(t), \;\; \forall t \in \mycal{T}.
\end{align*}
However, it follows from Lemma \ref{psi:sem:lemma} that
\begin{align*}
\sembrackk{S}(t) = \Psi(S,t)(t) &&\tand && \sembrackk{S'}(t) = \Psi(S',t)(t).
\end{align*}
Therefore, 
\begin{align*}
\Psi(S,t)(t)  = \Psi(S',t)(t), & \forall t \in \mycal{T}.
\end{align*}
Since, both $\Psi(S,t)$ and $\Psi(S',t)$ are elementary \ces, it follows from Item \ref{item:1}.) of this Lemma that
$\Psi(S,t)  = \Psi(S',t)$.
\end{proof}

\section{Proofs for section \ref{unification:combination:section}}

\subsection{Correctness and Completeness of the unification and combination of \ces.}

\begin{theorem}[i.e. Theorem \ref{main:theorem:1}]
\label{main:theorem:1:annex}
For every term $t \in \mycal{T}$, for every \ces  $S$ and $S'$ in the canonical form in $\ceSetCan$, 
we have that 
\begin{align*}
\Psi(S \combb S',t) & =  
 \Psi(S,t) \combb \Psi(S',t)
\end{align*}
\end{theorem}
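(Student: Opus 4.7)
The proof will proceed by well-founded induction on the lexicographic sum $\Delta(S) + \Delta(S')$ (see Definition \ref{def:Delta:strategy}), with a case split that mirrors the inductive definition of $\combb$ on canonical \ces (Definition \ref{unif:ces}). In each case, the goal reduces to unfolding both $\Psi$ and $\combb$ on the left-hand side, unfolding $\Psi$ on each operand on the right-hand side, and then either checking the two descriptions match directly (base cases) or applying the induction hypothesis at a strictly smaller depth.

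The base cases concern $\emptylist$, pure insertions $@i.\mbf{\tau}$, and the interactions $@i.\mbf{\tau} \combb @j.\mbf{\tau}'$, $(u,\mbf{\tau}) \combb @i.\mbf{\tau}'$, etc. These are routine: the only subtlety is in the interaction with pattern matching, where the identity $\match{u \land u'}{t} \iff \match{u}{t} \tand \match{u'}{t}$ (Lemma \ref{unif:matching:lemma:annex}) guarantees that $\Psi$ of $(u \land u', \ldots)$ succeeds precisely when both $\Psi(u,\ldots)$ and $\Psi(u',\ldots)$ do. The left-choice case $(S_1 \oplus S_2) \combb S$ follows from distributivity of $\combb$ over $\oplus$ together with the defining clause of $\Psi$ on $\oplus$, which is itself a case split on whether $\Psi(S_1,t)=\emptylist$.

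For the parallel-brackets case $\pair{\mycal{L}}{\phi} \combb \pair{\mycal{L}'}{\phi'}$, the plan is to use that in canonical form, $\phi$ and $\phi'$ are conjunctions of position-variables in $\varOf{\mathbb{N}}_{\E}$; this forces all relevant positions to sit directly below the root and makes the partition $\mycal{I} \cap \mycal{J}$, $\mycal{I} \setminus \mycal{J}$, $\mycal{J} \setminus \mycal{I}$ transport verbatim from $\Psi$ of the combination to the combination of the $\Psi$'s, since $\eta$ (fail-as-identity) on the position-based side precisely matches the $\eta$ used in the semantics of canonical \ces. Checking that $\ev{S \combb S'}{t} \models \phi \land \phi'$ iff both component evaluations satisfy their formulas, and that the post-merge sorting by $\sqsubset$ is forced by Definition \ref{Well-founded:simple:ext:def}, finishes this case. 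The $(u,S)$ and $@i.S$ cases reduce to the above after one step of $\Psi$'s definition and one application of the induction hypothesis at the substrategy $S$ (whose depth is strictly smaller).

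The main obstacle is the fixed-point case $\mu X.S(X) \combb \mu X'.S'(X')$. Here $\Psi$ unfolds each $\mu$ into $\bigoplus_{i=1}^{\delta(t)} S^{i}(\emptylist)$ before computing, while the combination introduces a fresh variable $Z$ standing for $X \combb X'$ and then binds it to $\mu Z. S''(\mu X.S(X), \mu X'.S'(X'), Z)$. The strategy is to prove, by an inner induction on $\delta(t)$, that
\[
\Psi\big(\mu Z.\, S''(\mu X.S(X),\mu X'.S'(X'),Z),\, t\big) \;=\; \Psi\big(\bigoplus_{i=1}^{\delta(t)} (S \combb S')^{i}(\emptylist),\, t\big),
\]
using that every cycle in a well-founded \ce must cross a position delimiter, which strictly decreases the depth $\delta(t_{|p})$ of the subterm whenever the recursion variable is traversed. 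Combined with Lemma \ref{nice:prop:Psi:lemma} (so that two \ces with equal $\Psi$-images on every term are interchangeable) and the outer induction hypothesis on $S(X) \combb S'(X')$ (strictly smaller in $\Delta$), this lets us identify the two sides. The asymmetric mixed case $(\mu X.S(X)) \combb S'$ is handled in the same way but is simpler, since only one side requires unfolding.
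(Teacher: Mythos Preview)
Your overall plan---structural induction governed by the depth measure $\Delta$, with a case split tracking the clauses of Definition~\ref{unif:ces}---is exactly what the paper does, and your treatment of the base cases, the $(u,S)$ cases, the left-choice case, and the bracket case $\pair{\mycal{L}}{\phi}\combb\pair{\mycal{L}'}{\phi'}$ lines up with the paper's argument (including the use of Lemma~\ref{unif:matching:lemma:annex} for the pattern-matching interaction).

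The fixed-point case is where your proposal diverges, and there is a genuine gap. You propose an inner induction on $\delta(t)$, appeal to the well-foundedness clause ``every cycle crosses a position delimiter'', and then apply ``the outer induction hypothesis on $S(X)\combb S'(X')$''. Two issues: first, $S(X)$ and $S'(X')$ are \emph{open} (they carry free fixed-point variables), so $\Psi$ is not defined on them and the induction hypothesis is not available; second, the displayed equation you aim to prove by induction on $\delta(t)$ is essentially the defining clause of $\Psi$ on a $\mu$-term, so there is nothing to gain by that inner induction, and it is unclear how it would connect to the goal $\Psi(S,t)\combb\Psi(S',t)$.

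The paper's route here is more direct and exploits the specific design of $\Delta$: its first component counts the nesting depth of $\mu$. By definition, $\Psi(\mu X.S(X),t)=\Psi\big(\bigoplus_{i=1}^{\delta(t)} S^i(\emptylist),\,t\big)$, and $\bigoplus_i S^i(\emptylist)$ is \emph{closed} with exactly one fewer $\mu$ in its nesting, so $\Delta\big(\bigoplus_i S^i(\emptylist)\big)<\Delta\big(\mu X.S(X)\big)$ lexicographically. One then verifies the \emph{syntactic} identity $S^i(\emptylist)\combb S'^{\,i}(\emptylist)=\widetilde S^{\,i}(\emptylist)$ (where $\widetilde S$ is the body of the combined $\mu Z$), which shows that the unfolding of $\mu Z.\,S''(\mu X.S(X),\mu X'.S'(X'),Z)$ coincides with $\big(\bigoplus_i S^i(\emptylist)\big)\combb\big(\bigoplus_j S'^{\,j}(\emptylist)\big)$. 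The outer induction hypothesis now applies directly to these closed, strictly $\Delta$-smaller strategies, with no inner induction on $\delta(t)$ and no appeal to the cycle-through-position property.
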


\begin{proof}
The proof is by a double induction on $\Delta(S)$ and $\Delta(S')$. 
We recall that 
if there  are two  symmetric cases, we only prove one  of them.
We make an induction on $\Delta(S)$. 
\begin{description}
\setlength\itemsep{0.4cm}
\item[\underline{Base case: $\Delta(S)=(0,0)$.}] 
We make an induction on $\Delta(S')$.
\begin{description}
\item[\underline{Base case: $\Delta(S')=(0,0)$.}] 
We distinguish three cases depending on the structure of $S$ and $S'$.
\begin{enumerate}
\item The cases  when $(S,S')=(\emptylist,\emptylist)$   or   $(S,S')=(@i.\mbf{\tau}, @j.\mbf{\tau}')$ are trivial whether 
    $i=j$ or not. 
\item If $(S,S')=\big(@i.\mbf{\tau},(u,\mbf{\tau}')\big)$, 
      where  $i \in \mathbb{N}_{\E} \setminus \set{\E}$,  
      then in this case 
\begin{align*}
S\combb S' & = (u,[@i.\mbf{\tau}, @\E.\mbf{\tau}'])  \tand \\
\Psi(S \combb S', t) &= 
\begin{cases}
[@i.\mbf{\tau}, @\E.\mbf{\tau}']  & \tif  \match{u}{t}\\
\emptylist & \totherwise.
\end{cases}
\end{align*}
On the other hand, 
\begin{align*}
\Psi(S, t) = 
\begin{cases}
@\E.\mbf{\tau}  & \tif  \match{u}{t}\\
\emptylist & \totherwise.
\end{cases}
&& \tand && 
\Psi(S', t) =  @i.\mbf{\tau}' 
\end{align*}
Hence 
\begin{align*}
\Psi(S, t)\combb \Psi(S', t)  & = 
\begin{cases}
[@i.\mbf{\tau}, @\E.\mbf{\tau}']   & \tif  \match{u}{t}\\
\emptylist & \totherwise.
\end{cases} \\
 & = \Psi(S, t)\combb \Psi(S', t)
\end{align*}

\item If $S=@\E.\mbf{\tau}$ and $S'=(u,\mbf{\tau}')$, 
      then this case is similar to the previous one except that 
      the insertion of the tuples of the contexts  $\mbf{\tau}$ and $\mbf{\tau}'$ occurs at the 
      root position instead of two different positions.  We have that
\begin{align*}
S\combb S'= (u,\mbf{\tau}'\cdot \mbf{\tau}) && \tand &&
\Psi(S \combb S', t) = 
\begin{cases}
 @\E.(\mbf{\tau}'\cdot \mbf{\tau})  & \tif  \match{u}{t}\\
\emptylist & \totherwise.
\end{cases}
\end{align*}
On the other hand, 
\begin{align*}
\Psi(S, t) = 
\begin{cases}
@\E.\mbf{\tau}  & \tif  \match{u}{t}\\
\emptylist & \totherwise.
\end{cases}
&& \tand && 
\Psi(S', t) =  @\E.\mbf{\tau}' 
\end{align*}
Hence 
\begin{align*}
\Psi(S, t)\combb \Psi(S', t)  & = 
\begin{cases}
@\E.(\mbf{\tau}'\cdot \mbf{\tau})   & \tif  \match{u}{t}\\
\emptylist & \totherwise.
\end{cases} \\
 & = \Psi(S, t)\combb \Psi(S', t)
\end{align*}

\end{enumerate}

\item[\underline{Induction step: $\Delta(S')> (0,0)$.}] 
We distinguish six  cases depending on the structure of $S$ and $S'$.

\begin{enumerate}
\item If $S= (@i,\mbf{\tau})$ and  $S'=(u',R')$, where $i \in \mathbb{N}_{\E}$,  then in this case 
\begin{align*}
S \combb S' &= \big(u', (@i.\mbf{\tau}) \combb R'\big), \\ 
\tand \\
\Psi(S \combb S',t) &= 
 \begin{cases}
 \Psi\big((@i.\mbf{\tau}) \combb R', t\big) & \tif \match{u'}{t}, \\
\emptylist & \totherwise.
 \end{cases}
\end{align*}
Since $\Delta(R')< \Delta(S')$, it follows from  the induction hypothesis that
\begin{align*}
\Psi(S \combb S',t) &= 
 \begin{cases}
 \Psi(@i.\mbf{\tau},t) \combb \Psi(R', t\big) & \tif \match{u'}{t} \\
\emptylist & \totherwise.
 \end{cases} \\
&  = 
 \begin{cases}
   @i.\mbf{\tau}\;\;  \combb \Psi(R', t\big) & \tif \match{u'}{t} \tand i \in \PPos(t) \\
   \emptylist & \totherwise.
 \end{cases}
\end{align*}

On the other hand,
\begin{align*}
\Psi(S,t) = 
\begin{cases}
 @i.\mbf{\tau} & \tif i \in \PPos(t) \\
\emptylist  & \totherwise.
\end{cases}
 && \tand && 
\Psi(S',t) =
\begin{cases}
\Psi(R',t) & \tif \match{u'}{t} \\
\emptylist & \totherwise
\end{cases}
\end{align*}
Hence the unification $\Psi(S,t) \combb \Psi(S',t)$ is defined by
\begin{align*} 
\Psi(S,t) \combb \Psi(S',t) & = 
\begin{cases}
@i.\mbf{\tau} \combb \Psi(R',t) & \tif \Psi(R', t\big) \tand  \match{u'}{t} \\
\emptylist & \totherwise
\end{cases}\\
&= \Psi(S,t) \combb \Psi(S',t).
\end{align*}
\item 
  If $S= (@i,\mbf{\tau})$ and  $S'=\pair{\bigsqcup_{j \in J}@j.S_j}{\phi}$, where $i \in \mathbb{N}_{\E}$, 
  then we only discuss the case when $i\in J$, the case when $i\notin I$ is immediate.
  In this  case,  let 
\begin{align*}
\Eu{S} = \bigsqcup_{j \in J \setminus \set{i}} @j.S_j \sqcup  (@i.\mbf{\tau} \combb @i.S_i)
\end{align*}
and 
  \begin{align*}
  S \combb S' &= \bigpair{\Eu{S}}{\phi}, \tand \\
 \Psi(S \combb S',t) &=
 \begin{cases}
  \bigsqcup_{j \in J \setminus \set{i}} @j.\Psi\big(S_j,t_{|j} \big)  \;\sqcup\;   \Psi\big((@i.\mbf{\tau} \combb @i.S_i),t_{|i}\big) & \tif \ev{\Eu{S}}{t} \models \phi,\\
 \emptylist & \totherwise 
 \end{cases}\\
&=
 \begin{cases}
  \bigsqcup_{j \in J \setminus \set{i}} @j.\Psi\big(S_j,t_{|j} \big)  \;\sqcup\;   \big(\Psi\big(@i.\mbf{\tau},t_{|i})  \combb  \Psi(@i.S_i, t_{|i})\big) & \tif \ev{\Eu{S}}{t} \models \phi,\\
 \emptylist & \totherwise 
 \end{cases}
\end{align*}
Since $\Delta(S_i) < \Delta(S',t)$, it follows  from the induction hypothesis that
\begin{align*}
 \Psi(S \combb S',t)
&=
 \begin{cases}
  \bigsqcup_{j \in J \setminus \set{i}} @j.\Psi\big(S_j,t_{|j} \big)  \;\sqcup\;   \big(@i.\mbf{\tau}  \combb  \Psi(@i.S_i, t_{|i})\big) & \tif \ev{\Eu{S}}{t} \models \phi,\\
 \emptylist & \totherwise 
 \end{cases}
  \end{align*}
On the other hand,
\begin{align*}
\Psi(S,t) =
\begin{cases}
@i.\mbf{\tau} & \tif i \in \PPos(t),\\
\emptylist & \totherwise
\end{cases}
&&\tand &&
 \Psi(S',t) =
 \begin{cases}
  \bigsqcup_{j \in J}   @j.\Psi\big(S_j,t_{|j} \big)  & \tif \ev{\Eu{S}}{t} \models \phi,\\
 \emptylist & \totherwise 
 \end{cases}
\end{align*}
and since $i\in J$, the unification of $\Psi(S,t)$ and $\Psi(S',t)$ is 
\begin{align*}
&\Psi(S,t) \combb  \Psi(S',t) = \\
&\begin{cases}
 \bigsqcup_{j \in J\setminus \set{i}}   @j.\Psi\big(S_j,t_{|j} \big)  \sqcup (@i.\mbf{\tau} \combb @i.\Psi\big(S_i,t_{|i})) & \tif  i \in \PPos(t) \tand  \ev{\Eu{S}}{t} \models \phi,\\
  \emptylist  & \totherwise
\end{cases}
\end{align*}
Since $\phi$ is a conjunction of position-variables, and $i \in J$, which means $i \in \Var(\Eu{S})$, then
\begin{align*}
\ev{\Eu{S}}{t} \models \phi  &&\tiff && i \in \PPos(t) \tand  \ev{\Eu{S}}{t} \models \phi.
\end{align*}
That leads  to $\Psi(S \combb S',t) = \Psi(S,t) \combb  \Psi(S',t)$.
\item If $S= (@i.\mbf{\tau})$ and  $S'=\mu Z. R(Z)$, where $i \in \mathbb{N}_{\E}$, then in this case 
\begin{align*}
S \combb S' &= S''\big(\mu Z. R(Z)\big), \twhere  S''(Z) = (@i.\mbf{\tau}) \combb R(Z), \tand  \\
\Psi(S \combb S',t) &= \Psi(S''(\mu Z. R(Z)),t) \\
                    &= \Psi\Big(S''\big(\bigoplus_{i=1,\delta(t)} R^{i}(\emptylist)\big),t\Big)  \\
                    &= \Psi\Big((@i. \mbf{\tau}) \combb R\big(\bigoplus_{i=1,\delta(t)} R^{i}(\emptylist)\big) ,t\Big)   \\
                    &=  \Psi\Big((@i. \mbf{\tau}) \combb \bigoplus_{i=1,\delta(t)} R^{i+1}(\emptylist)\big) ,t\Big)      \\  
                    &=  \Psi\Big((@i. \mbf{\tau}) \combb \bigoplus_{i=1,\delta(t)} R^{i}(\emptylist)\big) ,t\Big)     \\
\end{align*}
If we assume that $\Delta\big(\bigoplus_{i=1,\delta(t)}R^{i}(\emptylist)\big)=(n,m)$, for some $n,m \in \mathbb{N}$, then
   $\Delta\big(\mu Z. R(Z)\big)=(n+1,m')$, for some $m' \in \mathbb{N}$. Meaning that  
$\Delta\big(\bigoplus_{i=1,\delta(t)}R^{i}(\emptylist)\big) < \Delta\big(\mu Z. R(Z)\big)$. 
Thus it follows from the induction hypothesis that
\begin{align*}
\Psi(S \combb S',t)    &=  \Psi((@i. \mbf{\tau}),t) \combb \Psi\big(\bigoplus_{i=1,\delta(t)} R^{i}(\emptylist) ,t\big)   
\end{align*}
On the hand, 
\begin{align*}
\Psi(S',t)
&= \Psi(\mu Z. R,t) \\
&=  \Psi\big(\bigoplus_{i=1,\delta(t)}R^{i}(\emptylist), t\big)  
\end{align*}
Hence, 
\begin{align*}
\Psi(S \combb S',t)  = \Psi(S,t)   \combb \Psi(S',t).  
\end{align*}
\item If $S= (u,\mbf{\tau})$ and  $S'=(u',R')$, then in this case
\begin{align*}
S \combb S' &= (u\land u', (@\E.\mbf{\tau}) \combb S') \\
 \tand \\
\Psi(S \combb S',t) &= 
\begin{cases}
\Psi\big((@\E.\mbf{\tau}) \combb S', t\big) &  \tif \match{(u \land u')}{t} \\
\emptylist, \totherwise.
\end{cases}\\
 &= 
\begin{cases}
\Psi(@\E.\mbf{\tau},t) \combb \Psi(S', t) &  \tif \match{(u \land u')}{t} \\
\emptylist, \totherwise.
\end{cases}
\end{align*}
On the other hand,
\begin{align*}
\Psi(S,t) =  
  \begin{cases}
    @\E.\mbf{\tau}  & \tif \match{u}{t} \\
    \emptylist      & \totherwise
  \end{cases}
&& \tand &&
\Psi(S',t) = 
\begin{cases}
\Psi(S', t) &  \tif \match{u'}{t} \\
\emptylist, \totherwise.
\end{cases}
\end{align*}
Since 
\begin{align}
\match{(u\land u')}{t} &&\tiff && \match{u}{t} \tand \match{u'}{t}  \tag{Lemma \ref{unif:matching:lemma:annex}}
\end{align}
We get 
\begin{align*}
\Psi(S \combb S',t) = \Psi(S,t) \combb \Psi(S',t).  
\end{align*}
\item If $S= (u,\mbf{\tau})$ and  $S'=\pair{\bigsqcup_{i \in I}@i.S'_i}{\phi'}$, then we only prove the case when 
    $I\subset \mathbb{N}_{\E}\setminus\set{\E}$, the case when $\E \in I$ is similar. 
We have that 
\begin{align*}
S \combb S'        &= \big(u, \pair{\bigsqcup_{i \in I}@i.S'_i \sqcup (@\E.\mbf{\tau})}{\phi'}\big), \tand \\
\Psi(S \combb S',t)  &=
\begin{cases}
\Psi\big(\pair{\bigsqcup_{i \in I}@i.S'_i \sqcup (@\E.\mbf{\tau})}{\phi'},t\big) & \tif \match{u}{t}\\
\emptylist, \totherwise
\end{cases}\\
&=
\begin{cases}
\bigsqcup_{i \in I}@i.\Psi(S'_i,t_{|i}) \sqcup (@\E.\mbf{\tau}) & \tif \match{u}{t} \tand \ev{\bigsqcup_{i \in I}@i.\Psi(S'_i)}{t} \models \phi'\\
\emptylist, \totherwise
\end{cases} \\
&=\Psi(S,t) \combb \Psi(S',t).
\end{align*}
\item If $S= (u,\mbf{\tau})$ and  $S'=\mu Z.R(Z)$, then this case is similar to the case where  $S=@i.\mbf{\tau}$
    discussed before.
 \end{enumerate}
\end{description}
\item[\underline{Induction step: $\Delta(S)> (0,0)$.}]
We make an induction on $\Delta(S')$. 
\begin{description}
\item[\underline{Base case: $\Delta(S')=(0,0)$.}] 
 
This case is symmetric to the case where $\Delta(S)=(0,0)$ and $\Delta(S')>(0,0)$ discussed before.

\item[\underline{Induction step: $\Delta(S')>(0,0)$.}] 
We distinguish four cases. 
\begin{enumerate}
\item  If $S=(u,R)$  and $S'=(u',R')$, then in this case
  \begin{align*}
   S \combb S' = (u \land u', R \combb  R')  && \tand &&
 \Psi(S \combb S',t) =
 \begin{cases}
   \Psi(R \combb R',t) & \tif \match{(u \land u')}{t} \\
  \emptylist & \totherwise 
\end{cases}
\end{align*}

Since  
\begin{align*}
\match{(u \land u')}{t}  && \tiff && \match{u }{t} \tand \match{u'}{t}  \tag{Lemma \ref{unif:matching:lemma:annex}}
\end{align*}
and since $\Delta(R) < \Delta(S)$ and $\Delta(R') < \Delta(S')$, it follows from the induction hypothesis that 
\begin{align*}
 \Psi(S \combb S',t) =
 \begin{cases}
  \Psi(R,t) \combb \Psi(R',t) & \tif \match{u}{t}  \tand \match{u'}{t} \\
  \emptylist & \totherwise 
\end{cases}
  \end{align*}
On the other hand,
\begin{align*}
 \Psi(S ,t) =
 \begin{cases}
   \Psi(R,t) & \tif \match{u}{t} \\
  \emptylist & \totherwise 
\end{cases} 
&& \tand &&
 \Psi(S' ,t) =
 \begin{cases}
   \Psi(R',t) & \tif \match{u'}{t} \\
  \emptylist & \totherwise 
\end{cases}
\end{align*}
Therefore $\Psi(S,t) \combb \Psi(S',t)= \Psi(S \combb S',t)$.
\item If $S$ and $S'$ are lists of position  delimiters of the form
  \begin{align*}
    S = \pair{\bigsqcup\limits_{i\in \mycal{I}}@i.S_i}{\phi}  && \tand &&  S' = \pair{\bigsqcup\limits_{j\in \mycal{J}}@j.S'_j}{\phi'}  
  \end{align*}
where $\mycal{I},\mycal{J} \subset \mathbb{N}_{\E}$, then let 
  \begin{align*}
    \Eu{S} = \bigsqcup\limits_{i\in \mycal{I}}@i.S_i  && \tand &&  \Eu{S}' = \bigsqcup\limits_{j\in \mycal{J}}@j.S'_j
  \end{align*}
On the one hand we have that 
\begin{align*}
& \Psi(S,t) = 
\begin{cases}
 \bigsqcup_{i \in \mycal{I} } @i.\eta\big(\Psi(S_i,t_{i})\big) & \tif \ev{\Eu{S}}{t} \models \phi  \\
 \emptylist  & \totherwise,
\end{cases} \\
&\tand \\
&\Psi(S',t) = 
\begin{cases}
 \bigsqcup_{j \in \mycal{J} } @j.\eta\big(\Psi(S'_j,t_{j})\big) & \tif \ev{\Eu{S}'}{t} \models \phi'  \\
 \emptylist  & \totherwise.
\end{cases}
\end{align*}
Hence, the unification $\Psi(S,t) \combb \Psi(S',t)$  is defined by 
\begin{align*}
&\Psi(S,t) \combb \Psi(S',t) =  \\
&\begin{cases}
 \bigsqcup_{i \in \mycal{I} } @i.\eta\big(\Psi(S_i,t_{i})\big) \combb \bigsqcup_{j \in \mycal{J} } @j.\eta\big(\Psi(S'_j,t_{|j})\big)  & 
   \tif \ev{\Eu{S}}{t} \models \phi \tand \ev{\Eu{S}'}{t} \models \phi'  \\
 \emptylist & \totherwise.
\end{cases}
\end{align*}

On the other hand, let
\begin{align*}
 \Eu{S}'' = 
\bigsqcup_{i  \in \mycal{I}\cap \mycal{J}} @i.(S_i \combb S'_i) \; \sqcup 
        \bigsqcup_{i \in \mycal{I}\setminus \mycal{J}} @i.S_i \sqcup 
       \bigsqcup_{i \in \mycal{J}\setminus \mycal{I}} @i. S'_i
\end{align*}
and thus the combination $S \combb S'$ is defined by 
\begin{align*}
S \combb S'     \uberEq{def}
 \bigpair{\Eu{S}'' }{ \phi \land  \phi'}.
\end{align*}
To simplify  the presentation, let
 
\begin{align*}
 \widetilde{\Eu{S}}'' = 
\bigsqcup_{i  \in \mycal{I}\cap \mycal{J}} @i.\eta\big(\Psi((S_i \combb S'_i),t) \big) \; \sqcup 
        \bigsqcup_{i \in \mycal{I}\setminus \mycal{J}} @i.\eta\big(\Psi(S_i,t)\big) \sqcup 
       \bigsqcup_{i \in \mycal{J}\setminus \mycal{I}} @i. \eta(\Psi(S'_i,t)).
\end{align*}
Thus $\Psi(S \combb S',t)$ can be written as
\begin{align*}
\Psi(S \combb S',t) =
\begin{cases}
 \widetilde{\Eu{S}}'' & \tif \Eu{S}'' \models  \phi \land  \phi' \\
\emptylist & \totherwise.
\end{cases}
\end{align*}

\item 
$ \pair{\bigsqcup\limits_{i\in \mycal{I}}@i.S_i}{\phi} \combb (u',S')    = \big(u', \pair{\bigsqcup\limits_{i\in \mycal{I}}@i.S_i}{\phi \combb  S' } \big)$

\item If $ S=\mu X. S(X)$ and $S'=\mu X'. S'(X')$, then
    we have   
  \begin{align*}
    (\mu X. S(X)) \combb (\mu X'. S'(X'))   & \uberEq{def} \mu Z. S''(\mu X. S(X),\mu X'. S'(X'),Z)\\
      \twhere    S''(X,X',X \combb X') & \uberEq{def}  S(X) \combb S'(X').
  \end{align*} 
Let $t\in \mycal{T}$ and let $n=\delta(t)$.
To show that 
\begin{align*}
\Psi\big(\mu X. S(X),t\big) \combb \Psi\big(\mu X'. S'(X'),t\big)   = 
 \Psi\big(\mu Z. S''(\mu X. S(X),\mu X'. S'(X'),Z),t\big)
\end{align*}
it is sufficient to show that 
\begin{align}
\bigoplus_{i \in [n]} \widetilde{S}^i(\emptylist) =  \bigoplus_{i \in [n]} \bigoplus_{i \in [n]}  S^i(\emptylist)\combb S'^{i}(\emptylist)
\end{align}
We have that for all $i \ge 1$,
\begin{align*}
  S^i(\emptylist) \combb S'^{i}(\emptylist)   & \uberEq{def}  S(S^{i-1}(\emptylist)) \combb S'(S'^{i-1}(\emptylist)) \\
  & \uberEq{def} S''\big(S^{i-1}(\emptylist), S'^{i-1}(\emptylist), S^{i-1}(\emptylist) \combb  S'^{i-1}(\emptylist)\big) \\
  & \uberEq{def} \widetilde{S}^{i}(\emptylist)
\end{align*}
hence,
\begin{align*}
\sembrackk{\mu X.S(X)}  \combb   \sembrackk{\mu X'.S'(X')}
 & \uberEq{def}
 \big(\bigoplus_{i\in [n]} S^i(\emptylist)\big) \combb \big(\bigoplus_{i\in [n]} S'^{i}(\emptylist)\big) \\ 
 & =  S''\big( \bigoplus_{i\in [n-1]}S^{i}(\emptylist),  \bigoplus_{i\in [n-1]}S'^{i}(\emptylist),  \bigoplus_{i\in [n-1]}S^{i}(\emptylist) \combb  
 \bigoplus_{i\in [n-1]}S'^{i}(\emptylist)\big) \\ 
 & =   \bigoplus_{i\in [n]} \widetilde{S}^{i}(\emptylist)\\
 & =  \sembrackk{\mu X.S(X)  \combb  \mu X'.S'(X')}  
\end{align*}
Hence,
\begin{align*}
 \Psi\big({\mu X.S(X)  \combb  \mu X'.S'(X')},t \big)
 & = \Psi\big(\big(\bigoplus_{i\in [n]} S^i(\emptylist)\big) \combb \big(\bigoplus_{i\in [n]} S'^{i}(\emptylist)\big),t\big) \\ 
 & = \Psi\big(\bigoplus_{i\in [n]} S^i(\emptylist),t\big) \combb \Psi\big(\bigoplus_{i\in [n]} S'^{i}(\emptylist),t\big) \\ 
 & =  \Psi({\mu X.S(X)},t) \combb  \Psi({\mu X'.S'(X')},t) 
\end{align*}
\item The cases of $(\mu X. S(X)) \combb  S'$ and 
     $S \combb \mu X'. S'(X')$ are similar to the previous one.

\item If $S=S_1\oplus S_2$ then  we recall that
\begin{align*} 
\Psi(S_1 \oplus S_2,t)     \equiv  \Psi(S_1,t) \oplus \Psi(S_2,t),  
\end{align*}
and 
\begin{align*}
\Psi(S\combb S',t) & \uberEq{def} \Psi\big( (S_1  \combb S') \oplus (S_2  \combb S'), t\big)  \\
& =  \Psi(S_1  \combb S',t)  \oplus \Psi(S_2  \combb S',t).
\end{align*}
Hence,
\begin{align*}
\Psi(S,t) \combb  \Psi(S',t) &= \big(\Psi(S_1,t) \oplus \Psi(S_2,t)\big)  \combb \Psi(S',t)\\
                             &= \big(\Psi(S_1,t) \combb   \Psi(S',t)\big) \oplus \big(\Psi(S_2,t) \combb   \Psi(S',t)\big).
\end{align*}
Since $\Delta(S_i)<\Delta(S)$, for $i=1,2$, it follows   from the induction hypothesis that
\begin{align*}
\Psi(S,t) \combb  \Psi(S',t)  &= \Psi(S_1 \combb S',t) \oplus \Psi(S_2 \combb S',t) \\
                              &=\Psi(S\combb S',t).
\end{align*}

\end{enumerate}
\end{description}
\end{description}
This ends the proof of Theorem \ref{main:theorem:1:annex}. 
\end{proof}

\begin{theorem}[i.e. Theorem \ref{main:theorem:2}]
\label{main:theorem:2:annex}
For every term $t \in \mycal{T}$, for every \ces   $S$ and $S'$ in the canonical form in $\ceSetCan$, 
we have that 
\begin{align}
\Psi(S \comb S',t) & =  \Psi(S,t) \comb \Psi(S',t).
\end{align}
\end{theorem}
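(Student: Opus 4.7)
The plan is to reduce this theorem directly to Theorem \ref{main:theorem:1} together with the characterisation of when unification of position-based \ces yields $\emptylist$. Recall that on canonical \ces the combination is defined by
\[
S \comb S' \;=\; (S \combb S') \oplus S \oplus S',
\]
and that $\Psi$ commutes with the left-choice constructor, i.e.\ $\Psi(A \oplus B, t)$ equals $\Psi(A, t)$ if that is non-empty and $\Psi(B, t)$ otherwise. Applying this twice and using Theorem \ref{main:theorem:1} on the leftmost disjunct gives
\[
\Psi(S \comb S', t) \;=\;
\begin{cases}
  \Psi(S, t) \combb \Psi(S', t) & \text{if } \Psi(S, t) \combb \Psi(S', t) \neq \emptylist, \\
  \Psi(S, t) & \text{else if } \Psi(S, t) \neq \emptylist, \\
  \Psi(S', t) & \text{otherwise.}
\end{cases}
\]

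The next step is to observe, from Definition \ref{unif:posi:non:empty:def}, that for position-based \ces $E, E' \in \eceSet$ we have $E \combb E' = \emptylist$ iff $E = \emptylist$ or $E' = \emptylist$; when both are non-empty, the construction of $E''$ produces a well-founded non-empty list. Applying this with $E = \Psi(S, t)$ and $E' = \Psi(S', t)$, and comparing with Definition \ref{comb:posi:def} of $\comb$ on position-based \ces, the right-hand side $\Psi(S, t) \comb \Psi(S', t)$ unfolds into exactly the same four cases.

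I would then verify the four cases one by one. If both $\Psi(S, t)$ and $\Psi(S', t)$ are non-empty, the first branch of the display agrees with the first branch of $\comb$. If exactly one of them is $\emptylist$, the unification $\Psi(S, t) \combb \Psi(S', t)$ collapses to $\emptylist$ by the characterisation above, so the display falls through to the surviving argument, which is precisely what $\comb$ returns. If both are $\emptylist$, all branches yield $\emptylist$. In every case, $\Psi(S \comb S', t) = \Psi(S, t) \comb \Psi(S', t)$.

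There is no real obstacle: the theorem is a corollary of Theorem \ref{main:theorem:1} by a short case analysis, provided one is careful that the decomposition $S \comb S' = (S \combb S') \oplus S \oplus S'$ holds only for canonical \ces (which is the hypothesis) and that the left-choice of $\Psi$ is evaluated in the stated order so that $S$ is consulted before $S'$ when the unification fails. The mild technical point worth stating explicitly in the write-up is the equivalence $\Psi(S, t) \combb \Psi(S', t) = \emptylist \iff \Psi(S, t) = \emptylist \tor \Psi(S', t) = \emptylist$, which is exactly item (3) of Proposition \ref{main:prop:2} and is an immediate reading of Definition \ref{unif:posi:non:empty:def}.
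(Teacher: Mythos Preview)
Your argument is correct and follows the same route as the paper's proof: unfold $S \comb S' = (S \combb S') \oplus S \oplus S'$, push $\Psi$ through $\oplus$, apply Theorem~\ref{main:theorem:1} to the first disjunct, and then identify the resulting left-choice expression with $\Psi(S,t) \comb \Psi(S',t)$ via Definition~\ref{comb:posi:def} using the fact that $E \combb E' = \emptylist$ iff $E = \emptylist$ or $E' = \emptylist$. The paper compresses your four-case verification into a single line, but the content is identical; one small caution is that you should justify the key equivalence directly from Definition~\ref{unif:posi:non:empty:def} rather than by citing Proposition~\ref{main:prop:2}, since the latter is proved after this theorem.
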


\begin{proof}
\begin{align*}
\Psi(S \comb S', t) 
&=  \Psi\big((S \combb S') \oplus S \oplus S', t\big) \tag{Def. \ref{combination:def} of $\comb$} \\
&=  \Psi(S \combb S',t) \oplus \Psi(S,t) \oplus \Psi(S', t) \tag{Def. \ref{psi:def} of $\Psi$}  \\
&=  \left(\Psi(S,t) \combb \Psi(S',t)\right) \oplus \Psi(S,t) \oplus \Psi(S', t) \tag{Theorem \ref{main:theorem:1}}  \\
&=  \Psi(S,t) \comb \Psi(S',t) \tag{Def. \ref{combination:def} of $\comb$}
\end{align*}
\end{proof}

\begin{proposition}[i.e. Proposition \ref{main:prop:2}]
\label{main:prop:2:annex}
The following hold.
\begin{enumerate}
  \item The set $\ceSet$ of  \ces  together with the unification and combination  operations enjoy the following properties.
    \begin{enumerate}
    \item The neutral element of the unification and combination  is $@\E.\square$. 
    \item Every   \ce   $S$ is idempotent for the unification and combination, i.e.     $S \combb S = S$ and $S \comb S = S$.
    \item The unification and combination of \ces  are associative. 
    \end{enumerate}
\item The unification and combination of \ces  is non commutative.
\item For any  \ces $S$ and $S'$ in $\ceSet$, and for any term $t$ in $\mycal{T}$,  we have that 
\vspace{-0.2cm}
\begin{align*}
\Psi(S \combb S',t)=\emptylist  &&\tiff && \Psi(S,t)=\emptylist \;\tor\; \Psi(S',t)=\emptylist. \\
\Psi(S \comb S',t)=\emptylist  &&\tiff && \Psi(S,t)=\emptylist \;\textrm{and }\; \Psi(S',t)=\emptylist.
\end{align*}
\item  The unification and combination of \ces is a congruence, that is, 
   for any \ces   $S_1,S_2, S$ in $\ceSet$, we have that: 
\vspace{-0.25cm}
\begin{align*} \textrm{If } S_1 \equiv S_2 &&\tthen&&  S_1 \combb S  \equiv S_2 \combb S  \;\tand\; S \combb  S_1  \equiv S \combb S_2.\\
\textrm{If } S_1 \equiv S_2  &&\tthen &&  S_1 \comb S  \equiv S_2 \comb S  \; \tand\;  S \comb S_1  \equiv S \comb S_2.
\end{align*} 
\end{enumerate}
\end{proposition}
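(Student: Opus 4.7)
The plan is to prove each clause by transporting the analogous statement already established for position-based \ces (Proposition \ref{main:prop:elemntary:og:prop}) through the homomorphism $\Psi$ of Definition \ref{psi:def}, using Theorems \ref{main:theorem:1} and \ref{main:theorem:2} as the transport mechanism and Item~(ii) of Lemma \ref{nice:prop:Psi:lemma} to convert equalities between $\Psi$-images into semantic equivalences of \ces. Since every \ce can be put into canonical form (Lemma \ref{canonical:form:equiv}) and two equivalent \ces have identical $\Psi$-images, all of the work reduces to checking the corresponding identities on the position-based side.

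For Item~1(a), first observe from Definition \ref{psi:def} that $\Psi(@\E.\square,t)=@\E.\square$ for every $t$. Then for any \ce $S$, Theorem \ref{main:theorem:1} gives $\Psi(S\combb @\E.\square,t)=\Psi(S,t)\combb @\E.\square=\Psi(S,t)$ by Proposition \ref{main:prop:elemntary:og:prop}(1a), and Lemma \ref{nice:prop:Psi:lemma}(ii) concludes $S\combb @\E.\square\equiv S$; the $\comb$ version is identical via Theorem \ref{main:theorem:2}. For Item~1(b), apply the same pattern: $\Psi(S\combb S,t)=\Psi(S,t)\combb \Psi(S,t)=\Psi(S,t)$ by Proposition \ref{main:prop:elemntary:og:prop}(1b), hence $S\combb S\equiv S$. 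For Item~1(c), associativity is obtained by two applications of Theorem \ref{main:theorem:1} to each bracketing:
\begin{align*}
\Psi((S_1\combb S_2)\combb S_3,t) &= (\Psi(S_1,t)\combb \Psi(S_2,t))\combb \Psi(S_3,t)\\
 &= \Psi(S_1,t)\combb (\Psi(S_2,t)\combb \Psi(S_3,t))\\
 &= \Psi(S_1\combb (S_2\combb S_3),t),
\end{align*}
using Proposition \ref{main:prop:elemntary:og:prop}(1c) in the middle step, and Lemma \ref{nice:prop:Psi:lemma}(ii) yields the equivalence. The $\comb$ analogue uses Theorem \ref{main:theorem:2}.

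For Item~2, any two position-based \ces witnessing non-commutativity in Proposition \ref{main:prop:elemntary:og:prop}(2) are themselves \ces (under the inclusion $\eceSet\subset \ceSet$), so non-commutativity is inherited directly. For Item~3, the failure characterisations $\Psi(S\combb S',t)=\emptylist \Leftrightarrow \Psi(S,t)=\emptylist \lor \Psi(S',t)=\emptylist$ and $\Psi(S\comb S',t)=\emptylist \Leftrightarrow \Psi(S,t)=\emptylist \land \Psi(S',t)=\emptylist$ follow by combining Theorems \ref{main:theorem:1} and \ref{main:theorem:2} with the case splits in Definitions \ref{unif:posi:non:empty:def} and \ref{comb:posi:def}, which make $\combb$ annihilate on $\emptylist$ and $\comb$ behave as a conjunctive fail. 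For Item~4, congruence is the cleanest lift: if $S_1\equiv S_2$ then Lemma \ref{nice:prop:Psi:lemma}(ii) gives $\Psi(S_1,t)=\Psi(S_2,t)$ for every $t$, hence by Theorem \ref{main:theorem:1}
\[
\Psi(S_1\combb S,t)=\Psi(S_1,t)\combb \Psi(S,t)=\Psi(S_2,t)\combb \Psi(S,t)=\Psi(S_2\combb S,t),
\]
and Lemma \ref{nice:prop:Psi:lemma}(ii) in the reverse direction yields $S_1\combb S\equiv S_2\combb S$; the right-congruence and the $\comb$ analogues are symmetric.

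I do not expect any serious obstacle, since the heavy lifting has already been absorbed into Theorems \ref{main:theorem:1} and \ref{main:theorem:2}. The one point that deserves an explicit check is that the neutrality of $@\E.\square$ on the position-based side really goes through at the level of the $\eval$ operation, i.e.\ $\eval(\mathbf{\tau}\cdot \square)=\eval(\square\cdot \mathbf{\tau})=\eval(\mathbf{\tau})$, which rests on $\square[\tau]=\tau$ by definition of context composition; once this is recorded, the entire proof reduces to the mechanical lifting schema above.
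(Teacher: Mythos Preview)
Your proposal is correct and follows essentially the same strategy as the paper: transport each algebraic property from position-based \ces (Proposition \ref{main:prop:elemntary:og:prop}) through $\Psi$ using Theorems \ref{main:theorem:1} and \ref{main:theorem:2}, then invoke Lemma \ref{nice:prop:Psi:lemma}(ii) to recover the semantic equivalence. The paper only spells out associativity (Item~1(c)) and congruence (Item~4) explicitly and leaves the remaining items implicit, whereas you work through all of them; your treatment of Items~2 and~3 (non-commutativity by inclusion $\eceSet\subset\ceSet$, failure behaviour from Definitions \ref{unif:posi:non:empty:def} and \ref{comb:posi:def}) is a reasonable completion of what the paper omits.
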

\begin{proof}
We only prove the last Item. 
To prove the associativity of the both  unification  and combination for \ces we rely 
on the associativity of the   unification and  combination  of elementary \ces (Proposition \ref{main:prop:elemntary:og:prop}) together with 
the property of the function $\Psi$ (Theorems \ref{main:theorem:1} and \ref{main:theorem:2}).

  Let $S_1,S_2$ and $S_3$ be \ces in $\ceSet$.
  It follows from Item \emph{iii.)} of Lemma \ref{nice:prop:Psi:lemma} that in order to prove that 
  \begin{align*}
    S_1 \comb (S_2 \comb S_3) \equiv (S_1 \comb S_2) \comb S_3,
\end{align*}
  it suffices to prove that, for any term $t \in \mycal{T}$, we have that
  \begin{align*}
    \Psi\big(S_1 \comb (S_2 \comb S_3),t\big) =  \Psi\big((S_1 \comb S_2) \comb S_3,t\big).
  \end{align*}
  But this  follows from an  easy computation:
\begin{align*}
  \Psi\big(S_1 \comb (S_2 \comb S_3),t\big) 
  &= \Psi(S_1,t) \comb \Psi(S_2 \comb S_3,t) \tag{Theorem \ref{main:theorem:2}}  \\
  &= \Psi(S_1,t) \comb (\Psi(S_2,t) \comb \Psi(S_3,t)) \tag{Theorem \ref{main:theorem:2}} \\
  &=  (\Psi(S_1,t) \comb \Psi(S_2,t)) \comb \Psi(S_3,t) \tag{Proposition \ref{main:prop:elemntary:og:prop}}\\
  &= \Psi(S_1 \comb S_2,t) \comb \Psi(S_3,t)  \tag{Theorem \ref{main:theorem:2}} \\
  &= \Psi\big((S_1 \comb (S_2 \comb S_3),t\big) \tag{Theorem \ref{main:theorem:2}} 
\end{align*}
   On the one hand, it  follows from Theorem \ref{main:theorem:2} that
\begin{align*}
 \Psi(S_1  \combb S,t) =  \Psi(S_1,t)  \combb  \Psi(S,t).
\end{align*}
On the other hand, since $S_1 \equiv S_2$, it follows from Item \emph{iii.)} of Lemma \ref{nice:prop:Psi:lemma} that 
\begin{align*}
\Psi(S_1,t) = \Psi(S_2,t).
\end{align*}
Hence we get
\begin{align*}
 \Psi(S_1  \combb S,t) &=  \Psi(S_2,t)  \combb  \Psi(S,t) \\
                       &= \Psi(S_2  \combb S,t) \tag{Theorem \ref{main:theorem:2}}
 \end{align*} 
Again, from Item \emph{iii.)} of Lemma \ref{nice:prop:Psi:lemma}, we get 
\begin{align*}
S_1  \combb S \equiv S_2  \combb S.
\end{align*}
The proof of the remaining claims is similar.
\end{proof}

\end{document}